\newif\iffull\fullfalse
\newif\ifapp\appfalse


\documentclass[sigplan,9pt]{acmart}\settopmatter{printfolios=true,printccs=false,printacmref=false}

\acmConference[LICS'18]{Thirty-Third Annual ACM/IEEE Symposium on
Logic in Computer Science}{July 9--12,  2018}{Oxford}
\acmYear{2018}
\acmISBN{} 
\acmDOI{} 
\startPage{1}

\setcopyright{none}

\bibliographystyle{ACM-Reference-Format}


\usepackage{booktabs}   
\usepackage{subcaption} 

\newif\ifwrong\wrongfalse
\newif\iftemp\tempfalse

\usepackage{mathtools}
\usepackage{wasysym}
\usepackage{enumerate}
\usepackage{stmaryrd}
\usepackage[mathscr]{euscript}
\usepackage{mathpartir}
\usepackage{graphicx}
\usepackage{etex}
\usepackage[all]{xy}

\newenvironment{pfsketch}{\begin{proof}
}{\end{proof}}

\usepackage{DSarrow} 
\input{commonmacros.sty}
\input{adrien.sty}
\input{cbv.sty}
\input{encodings.sty}
\input{equations.sty}
\input{davide.sty}
\input{preorders.sty}

\begin{document}

\newtheorem{remark}[theorem]{Remark}

\title{Eager Functions as Processes}
\author{Adrien Durier}
\affiliation{
  \institution{Univ. Lyon, ENS de Lyon, CNRS,\\ UCB Lyon 1, LIP UMR 5668}
}

\author{Daniel Hirschkoff}
\affiliation{
  \institution{Univ. Lyon, ENS de Lyon, CNRS,\\ UCB Lyon 1, LIP UMR 5668}
}

\author{Davide Sangiorgi}
\affiliation{
  \institution{Universit{\`a} di Bologna and INRIA}
}


\begin{abstract}
  We study Milner's encoding of the call-by-value $\lambda$-calculus
  into the $\pi$-calculus.  We show that, by tuning the encoding to two
  subcalculi of the $\pi$-calculus (Internal $\pi$ and Asynchronous Local
  $\pi$), the equivalence on $\lambda$-terms induced by the encoding
  coincides with Lassen's eager normal-form bisimilarity, extended to
  handle $\eta$-equality.  As behavioural equivalence in the
  $\pi$-calculus we consider contextual equivalence and barbed
  congruence. We also extend the results to preorders.

  A crucial technical ingredient in the proofs is the
  recently-intro\-du\-ced technique of unique solutions of equations,
  further developed in this paper.  In this respect, the paper also
  intends to be an extended case study on the applicability and
  expressiveness of the technique.




\end{abstract}


\begin{CCSXML}
<ccs2012>
<concept>
<concept_id>10011007.10011006.10011008</concept_id>
<concept_desc>Software and its engineering~General programming languages</concept_desc>
<concept_significance>500</concept_significance>
</concept>
<concept>
<concept_id>10003456.10003457.10003521.10003525</concept_id>
<concept_desc>Social and professional topics~History of programming languages</concept_desc>
<concept_significance>300</concept_significance>
</concept>
</ccs2012>
\end{CCSXML}

\ccsdesc[500]{Software and its engineering~General programming languages}
\ccsdesc[300]{Social and professional topics~History of programming languages}

\keywords{pi-calculus, lambda-calculus, full abstraction, call-by-value}  

\copyrightyear{2018}
\acmYear{2018}
\setcopyright{licensedothergov}
\acmConference[LICS '18]{LICS '18: 33rd Annual ACM/IEEE Symposium on Logic in Computer Science}{July 9--12, 2018}{Oxford, United Kingdom}
\acmBooktitle{LICS '18: LICS '18: 33rd Annual ACM/IEEE Symposium on Logic in Computer Science, July 9--12, 2018, Oxford, United Kingdom}
\acmPrice{15.00}
\acmDOI{10.1145/3209108.3209152}
\acmISBN{978-1-4503-5583-4/18/07}

\maketitle

\section*{Introduction}
  Milner's work on functions as processes~\cite{milner:inria-00075405,encodingsmilner},
that shows  how the evaluation strategies of  {\em call-by-name 
  $\lambda$-calculus} and   {\em call-by-value
  $\lambda$-calculus}~\cite{Abr88,DBLP:journals/tcs/Plotkin75} can be faithfully mimicked in
the $\pi$-calculus,  
 is generally considered a
landmark in Concurrency Theory, and more generally in  Programming Language Theory.
 The comparison with the $\lambda$-calculus 
   is a significant  expressiveness test for
the $\pi$-calculus. 
More than that, 
it
promotes the $\pi$-calculus to be  a basis for general-purpose 
 programming languages in which communication is the fundamental
 computing primitive. 
From the $\lambda$-calculus point of view, the comparison  provides the means  to
study
 $\lambda$-terms
 in    contexts  other  than  purely  sequential ones,
 and with the  instruments available to reason about 
 processes.
Further, Milner's work, and the works that followed it, have contributed to
understanding and developing the theory of the $\pi$-calculus.

More precisely,   Milner shows the operational correspondence between
reductions in the $\lambda$-terms and in the encoding $\pi$-terms. 
He then uses the correspondence to prove that the encodings are
\emph{sound}, i.e., if the processes encoding 
 two $\lambda$-terms  are
behaviourally equivalent, then the source $\lambda$-terms are also
behaviourally equivalent in the $\lambda$-calculus.
Milner also shows that the converse,   \emph{completeness}, fails, intuitively because 
the  encodings allow one to test the $\lambda$-terms in all  contexts
of the $\pi$-calculus~--- more
diverse than those of the $\lambda$-calculus. 

The main problem that Milner work left open is the characterisation
of the equivalence  on $\lambda$-terms induced by the encoding, whereby
two $\lambda$-terms are equal if their  encodings are behaviourally
equivalent $\pi$-calculus terms. 
The question is largely independent of the precise form of 
 behavioural equivalence adopted in the $\pi$-calculus
because the encodings are deterministic (or at
least confluent). In the 
paper we consider contextual equivalence (that coincides with may
testing and trace equivalence) and barbed congruence (that coincides
with bisimilarity). 

\iffull

$\pi$-calculus one normally uses bisimilarity (or 
 its
contextual correspondent, barbed congruence) --- the same will be done in this paper.

\DS{maybe here shows that original quote by Milner}

\fi

For the call-by-name  $\lambda$-calculus, the answer was found shortly later
\cite{San93,cbn}: 
the equality induced is the equality of L{\'e}vy-Longo 
 Trees~\cite{LONGO1983153}, the lazy
variant of  B{\"o}hm Trees.  
It is actually also possible to 
obtain B{\"o}hm Trees, by modifying the call-by-name encoding so to allow also reductions
underneath a $\lambda$-abstraction, and by including divergence among the observables \cite{xian}. 
These results show that, at least for call-by-name, the $\pi$-calculus encoding, while not
fully abstract for the contextual equivalence of the $\lambda$-calculus, is in remarkable
agreement with the theory of the $\lambda$-calculus: several well-known models of the
$\lambda$-calculus yield   L{\'e}vy-Longo Trees
 or B{\"o}hm  Trees  as
their induced equivalence~\cite{levy75,LONGO1983153,barendregt1984lambda}.

For call-by-value, in contrast, the problem of identifying the equivalence induced by the
encoding  has remained open, for two
main reasons. First, tree structures in call-by-value are less studied
and less established  than in call-by-name. Secondly, proving
completeness of an encoding of $\lambda$ into $\pi$ requires sophisticated proof techniques. For
call-by-name, for instance, a central role is played by 
\emph{bisimulation up-to contexts}. For call-by-value, however,
existing proof techniques, including `up-to contexts',
 appeared not to be  powerful enough.  


In this paper we study the above open problem for call-by-value. 
Our main result is that the equivalence induced on $\lambda$-terms by their call-by-value 
encoding into the $\pi$-calculus  is 
\emph{eager normal-form bisimilarity} \cite{lassentrees,lassentrees2}. 
This is a tree structure for call-by-value,
proposed  by Lassen as the call-by-value  counterpart 
   of
L{\'e}vy-Longo Trees. 
Precisely we obtain the variant that is insensitive to
$\eta$-expan\-sion, called \emph{$\eta$-eager
normal-form bisimilarity}.

\iffull
\DS{maybe show here the key rule} 
\fi

To obtain the results we have however to make a few adjustments
to Milner's encoding and/or specialise the target language of the encoding.
 These adjustments have to do with the presence 
 of free
outputs 
(outputs of known names) in the encoding.  
\iffull
Milner had initially
translated call-by-value  $\lambda$-variables using a free output: 
\begin{equation}
\label{e:VarOne}
\app{ \enca {x}} p \defi \out px
\enspace.
\end{equation}
 However this rule is troublesome for 
 the validity of $\betav$-reduction 
(the property that $\lambda$-terms that are equated {\alert
  [``related'' instead of ``equated''?]} by 
$\betav$-reduction~--- the call-by-value $\beta$-reduction~--- are 
 are also equal in the $\pi$-calculus).
 Milner solved the problem by ruling out the initial
free output thus: 
\begin{equation}
\label{e:VarMore}
\app{ \enca {x}} p \defi \res y  \out p y. ! \inp y{\tila}. \out x
\tila
\enspace.
\end{equation}
It was indeed shown later \cite{..} that  with \reff{e:VarOne}
the validity of $\betav$-reduction fails. 
Accordingly, the final journal paper~\cite{encodingsmilner} does not even mention
encoding~\reff{e:VarOne}. 
If one wants to maintain the simpler rule  \reff{e:VarOne},  
then the validity of $\betav$-reduction can be regained  
by taking, as target language, a subset of the  $\pi$-calculus
in which only the output capability of names is communicated. 
This can be enforced either by imposing a behavioural type system including capabilities 
\cite{...}, or by  syntactically taking a dialect of the $\pi$-calculus in which only the
output capability of names is communicated, such as Local $\pi$~\cite{localpi}.

The encoding  \reff{e:VarMore} still  makes use of free
outputs~--- the final particle $\out x
\tila$. 
While  this limited  form of free
output is harmless for the validity of $\betav$-reduction, 
we show in the paper that 
\DSa such free outputs  
 bring problems when analysing $\lambda$-terms with free
variables: desirable call-by-value equalities  fail.
\fi
We show in the paper that 
this 
 brings problems when analysing $\lambda$-terms with free
variables: desirable call-by-value equalities  fail.
An example is given by the law:
\begin{equation}
  \label{eq:nonlaw}
  I(x\val)  = x\val
\end{equation}
where $I$ is $\abs zz$ and $\val$ is a value. 
\iffull

\DS{references?}
\DS{discuss example; and explain that the law is valid in all theories of open
  call-by-value -- ask $\lambda$-calculus people here, maybe }

{\alert As for the validity of $\betav $-reduction, [DH:  I don't
  understand this part of the sentence]} there are two possible
 solutions:
\else
Two possible
 solutions are:
\fi
\begin{enumerate}
\item rule out  the free outputs; this essentially means transplanting the encoding 
onto the Internal $\pi$-calculus~\cite{internalpi},  a version of the
  $\pi$-calculus in which any name emitted in an  output is fresh;

\item control the use of capabilities in the $\pi$-calculus; for
  instance taking Asynchronous Local
  $\pi$~\cite{localpi}  as the  target of the translation. 
 (Controlling capabilities allows one to impose a directionality on
names, which, under certain
technical conditions, may  hide the identity of the emitted names.)
\end{enumerate} 

In the paper we consider both approaches, and show that 
in both cases, the equivalence induced coincides with
$\eta$-eager
normal-form bisimilarity.

In summary, there are two contributions in the paper:
\begin{enumerate}
\item Showing that Milner's encoding fails to equate terms that should be equal in call-by-value.
\item Rectifying the encoding, by considering different target calculi, and  
investigating Milner's problem in such a setting.
\end{enumerate}
The rectification we make does not really change the essence of the encoding -- 
in one case, the encoding actually remains the same. Moreover, the languages used 
are well-known dialects of the \pc, studied in the literature for other reasons. 
In the encoding, they allow us to avoid certain accidental misuses of the names 
emitted in the communications. The calculi were not known at the time of Milner's 
paper \cite{encodingsmilner}.

A key role in the completeness proof is played by a technique of
\emph{unique solution of equations}, recently proposed~\cite{usol}. 
The structure induced by Milner's 
call-by-value encoding was expected to look like Lassen's trees; 
however existing proof 
techniques did not seem powerful enough to prove it. 
The unique solution technique allows one to derive process bisimilarities from
equations whose infinite unfolding does not introduce divergences, by
proving that the processes are solutions of the same equations.  The
technique can be generalised to possibly-infinite systems of
equations, and can be  strengthened by allowing certain
  kinds of divergences in equations.  In this respect, another goal
of the paper is to carry out an extended case study on the
applicability and expressiveness of the techniques.  Then, a
by-product of the study are a few further developments of the
technique. 
In particular, one such result allows us to transplant uniqueness of
solutions from a system of equations, for which divergences
are easy to analyse, to another one. Another result is about
the application of the technique to preorders.

Finally, we consider 
preorders~--- thus referring to the
preorder on $\lambda$-terms induced by a behavioural preorder on their $\pi$-calculus
encodings. 
We introduce a preorder on Lassen's trees (preorders had not been considered by Lassen)
and show that this is the preorder on $\lambda$-terms induced by the call-by-value
encoding, when the behavioural relation on $\pi$-calculus terms is 
the ordinary contextual preorder (again, with the caveat of points (1)
and (2) above). 
With the move from equivalences to preorders, the overall structure of
the proofs of our full abstraction results remains the same. 
However, the impact on the application of the unique-solution
technique is substantial, because the phrasing of this technique in
the cases of preorders and of  equivalences is quite different.




\paragraph{Further related work.}
The standard behavioural equivalence in the $\lambda$-calculus
is contextual equivalence. Encodings into the $\pi$-calculus
(be it for call-by-name or call-by-value) break  contextual equivalence
 because $\pi$-calculus contexts
are richer than those in the (pure) $\lambda$-calculus. In the paper
we try to understand how far beyond contextual equivalence the
discriminating power of the $\pi$-calculus brings us, for
call-by-value.   
The opposite approach is to restrict the set of 'legal' $\pi$-contexts
so to remain faithful to contextual equivalence. This approach has been
followed, for call-by-name, and using type systems, in
\cite{BHYseqpi,toninho:yoshida:esop18}. 

Open call-by-value  has been studied in~\cite{accattolicbv}, where the
focus is on operational properties of $\lambda$-terms; behavioural
equivalences are not considered.
An extensive presentation of call-by-value, including denotational
models, 
is  Ronchi della Rocca and
Paolini's book~\cite{DBLP:series/txtcs/RoccaP04}.

In~\cite{usol}, the unique-solution technique is used in the 
completeness proof for Milner's call-by-name encoding. That proof 
essentially revisits the proof of~\cite{cbn}, which is based on 
bisimulation up-to context. We have  explained
 above that the case for
call-by-value  is quite different.

\paragraph{Structure of the paper.} We recall basic definitions about
the call-by-value $\lambda$-calculus and the $\pi$-calculus in
Section~\ref{s:background}. The technique of unique solution of
equations is introduced in Section~\ref{s:usol}, together with some
new developments. Section~\ref{s:enc:cbv}
presents our analysis of Milner's encoding, beginning with the shortcomings
related to the presence of free outputs. 
{
 The first solution to these shortcomings is to move to the Internal
$\pi$-calculus: this is described in Section~\ref{s:enc:pii}. }
For the proof of completeness, in Section~\ref{s:complete},
we  rely on unique solution of equations; we also compare 
such technique with the `up-to techniques'.  
{
The second solution is to move to the Asynchronous Local
  $\pi$-calculus: this is discussed in 
Section~\ref{s:localpi}.}
We show in Section~\ref{s:contextual} how our
results can be adapted to   preorders and to contextual equivalence.
\iffull
, and analyse the model
of call-by-value yielded by our results in Section~\ref{s:model}.
\fi
Finally in Section~\ref{s:concl}  we highlight  conclusions and
possible future work.


\section{Background material}
\label{s:background}

Throughout the paper,  $\R$ ranges over relations.
The composition of  two   relations
$\R$ and $\R'$  is written 
$\R \: \R'$.
We often use infix notation for relations; thus 
 $P \RR Q$ means ${(P, Q)}\in\R$. 
A tilde  represents  a tuple.  
The $i$-th element of a tuple $\til P$ is  referred to as $P_i$. 
 Our notations are extended to tuples componentwise. Thus  
 $\til P \RR \til Q$ means  $P_i \RR Q_i$
for all components.
\ifapp
We anticipate that 
Appendix \ref{a:tab}
presents  a summary of the behavioural   relations
 used in this paper.
\fi

\subsection{The call-by-value  \lc}
We let $x$ and $y$ range over the set of $\lambda$-calculus  variables.
The set   $\Lao$  of $\lambda$-terms is  
 defined by the  grammar  
\begin{center}
$ M := \;   x    \midd   \lambda   x. M  \midd  
  M_1 M_2 \, .$ 
\end{center}
 Free variables, closed terms, substitution, 
$\alpha$-conversion
etc.\ are  defined as usual  \cite{barendregt1984lambda,DBLP:books/cu/HindleyS86}.
Here and in the rest of the paper (including when reasoning about
$\pi$ processes), we adopt the usual
``Barendregt convention''. This will allow us to assume freshness
of bound variables and names  whenever needed.
The set of   free variables 
in the term $M$  is   $\fv M$.
\iffull
, and the 
 subclass of $\Lao$   only containing 
 the closed terms  is   $\Lambda $. 
\fi We group brackets on the left; therefore $M N L $  
is $(M N ) L$.
 We  abbreviate $\lambda  x_1. \cdots. \lambda  x_n.M $   as 
$\lambda  x_1 \cdots x_n.M $, or $\lambda  \tilde{x}. M$ if the length of
$\tilde x$ is not important.
Symbol $\Omega  $ stands for the  always-divergent term
 $(\lambda  x . x x)(\lambda  x . x x)$.



\txthere{explain the following things ("defined as usual"?):}{scope
  and associativity in the \lc, substitutions, free variables
  $\fv{M}$, $\fv{M,N}$ is short for $\fv M\cup \fv N$, values and
  evaluation contexts as syntactic categories (of terms and contexts),
  open terms (here, everything is defined for open terms)} 


A \emph{context} is a term with a hole \hdot,
possibly occurring more than once. If $C$ is a context, $C[M]$ is
a shorthand for $C$ where the hole \hdot is substituted by $M$. An
 \emph{evaluation context} is a special kind of 
\iffull
inductively defined 
\fi
context,
with exactly one hole \hdot, and in which the inserted term can
immediately run. In the pure $\lambda$-calculus \emph{values} are abstractions and variables.
\begin{center}
\begin{tabular}{rl}
{Evaluation contexts} & $\evctxt\scdef \hdot\OR \evctxt M\OR\val \evctxt$ \\
{Values} & $\val\scdef x \OR \abs x M$
\end{tabular} 
\end{center}
\iffull
, and we
accordingly write \fv\evctxt. 
\fi
In call-by-value, substitutions replace variables with values; we call
them \emph{value substitutions}. 



Eager reduction (or $\betav$-reduction), 
  ${\red}
\subseteq    \Lao \times  \Lao $, 
\iffull
(for our purposes
we need it 
  defined on open terms) 
\fi 
is determined  by the rule:
\txthere{}{Eager reduction relation (definition?):}
$$  \evctxt[(\lambda x . M) \val]\red \evctxt[M\{\val/x\}]
\enspace.
$$

We write \reds\ for the reflexive transitive closure of \red.
A term in \emph{eager normal form} is a term that has no eager
reduction.

\begin{proposition}
\iffull
The following hold:
\fi
\begin{enumerate}
\iffull
\item Any term $M$ either is a value or admits a unique decomposition
  $M=\evctxt[\val {\valp}]$.
\fi
\item If $M\red M'$, then $\evctxt[M]\red \evctxt[M']$ and $M\sigma
  \red M'\sigma$, for any value substitution $\sigma$. 
\item Terms in eager normal form are either values or of the shape
  $\evctxt[x \val]$.
\end{enumerate}
\end{proposition}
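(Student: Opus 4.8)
The plan is to prove the two statements by structural induction on the relevant syntactic categories, exploiting the fact that eager reduction is defined through evaluation contexts.

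For part (1), I would proceed by case analysis on the structure of the evaluation context witnessing $M \red M'$. By definition there is some decomposition $M = \evctxt[(\abs x N)\val]$ with $M' = \evctxt[N\{\val/x\}]$. Given an outer evaluation context $\evctxt'$, observe that $\evctxt'[\evctxt[\cdot]]$ is again an evaluation context (this is a trivial induction on $\evctxt'$, using that the grammar $\evctxt \scdef \hdot \OR \evctxt M \OR \val\evctxt$ is closed under plugging an evaluation context into the hole). Hence $\evctxt'[M] = (\evctxt'[\evctxt])[(\abs x N)\val] \red (\evctxt'[\evctxt])[N\{\val/x\}] = \evctxt'[M']$, which is the first claim. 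For the substitution claim, let $\sigma$ be a value substitution. Applying $\sigma$ commutes with plugging into contexts, so $M\sigma = (\evctxt\sigma)[((\abs x N)\val)\sigma]$; by the Barendregt convention we may assume $x$ is not in the domain or range of $\sigma$, so $((\abs x N)\val)\sigma = (\abs x\, N\sigma)(\val\sigma)$ and $\val\sigma$ is again a value (values are closed under value substitution). Thus $M\sigma \red (\evctxt\sigma)[N\sigma\{\val\sigma/x\}]$, and the usual substitution lemma $N\{\val/x\}\sigma = N\sigma\{\val\sigma/x\}$ (again using the freshness of $x$) identifies the right-hand side with $M'\sigma$. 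The only mild subtlety is keeping the $\alpha$-conversion bookkeeping clean, which the Barendregt convention handles.

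For part (2), suppose $M$ is in eager normal form. I would argue by induction on $M$. If $M$ is a value we are done. Otherwise $M = M_1 M_2$ is an application. By induction $M_1$ is either a value or of the shape $\evctxt_1[x\val]$; in the latter case $M = \evctxt_1 M_2$ is an evaluation context around $x\val$, contradicting that $M$ is in normal form unless $\evctxt_1 M_2$ is itself of the required shape — indeed it is, since $\evctxt_1 M_2$ is an evaluation context and $M = (\evctxt_1 M_2)[x\val]$, giving the conclusion directly. So assume $M_1$ is a value $\val_1$. Then $M = \val_1 M_2$, and $\val_1 M_2$ is an evaluation context with hole in the $M_2$ position, so by induction applied to $M_2$ (which must itself be in eager normal form, else $M$ reduces) either $M_2$ is a value $\val_2$ — but then if $\val_1 = \abs x N$ the term $M = (\abs x N)\val_2$ has a redex, contradiction, so $\val_1$ must be a variable $x$ and $M = x\val_2 = \hdot[x\val_2]$ has the desired shape with $\evctxt = \hdot$ — or $M_2 = \evctxt_2[x\val]$, in which case $M = (\val_1\evctxt_2)[x\val]$ and $\val_1\evctxt_2$ is an evaluation context, again giving the conclusion.

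The routine substitution and context-composition lemmas are the bulk of part (1) and are entirely standard. The only step that requires a little care is the case analysis in part (2): one must make sure that when $M_1$ is a value, the value is forced to be a variable precisely when $M_2$ is also a value (otherwise a $\betav$-redex appears), and that in all other sub-cases the evaluation-context structure propagates correctly. I do not expect any genuine obstacle; if anything, the main thing to get right is the interaction with the implicit $\alpha$-renaming in the substitution lemma, which is why the paper invokes the Barendregt convention up front.
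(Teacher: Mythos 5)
Your proof is correct, and it is the standard argument: part (1) follows from closure of evaluation contexts under composition and under value substitutions (the point where it matters that values are closed under value substitutions) together with the usual substitution lemma, and part (2) by structural induction on $M$ with the case split on whether the function position is a value; the paper states this proposition as routine background and gives no proof, so there is no divergence to report. The only step you leave implicit is that $M_1$ is itself in eager normal form before the induction hypothesis is applied to it, which is justified exactly as you argue for $M_2$, since $\hdot\, M_2$ is an evaluation context and any reduction of $M_1$ would lift to a reduction of $M$ by part (1).
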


Therefore, given a term $M$, either $M\reds M'$ where $M'$ is a term in
eager normal form, or there is an infinite reduction sequence 
starting from $M$.
In the first case,  $M$ \emph{has eager
normal form $M'$}, written $M\converges M'$, in the second $M$
\emph{diverges}, written $M\diverges$. We write $M\converges$ when
$M\converges M'$ for some $M'$.


\begin{definition}[Contextual equivalence]\label{d:ctxeq}
Given $M,~N\in \Lao$, we say that $M$ and $N$ are contextually
equivalent, written $M\ctxeq N$, if for any context $C$, we have 
$C[M]\converges$ iff $C[N]\converges$. 
\end{definition}

\subsection{Tree semantics for call-by-value}
\label{s:back:trees}
\iffull
In this section 
\fi
We recall 
\iffull
Lassen's 
\fi
\emph{eager normal-form bisimilarity} \cite{lassentrees,lassentrees2,lassenfa}.

\begin{definition}[Eager normal-form bisimulation] 
\label{enfbsim}
A relation $\R$ between \lterms is an \emph{eager normal-form bisimulation} if, whenever
$M\RR N$, one of the following holds: 
\begin{enumerate}
\item both $M$ and $N$ diverge;
\item 
\label{ie:split}
$M\converges \evctxt[x\val]$ and $N\converges \evctxtp[x\valp]$ for
some $x$, values $\val$, $\valp$, 
and evaluation contexts 
 $\evctxt$ and $\evctxtp$ with $\val\RR
\valp$ and $\evctxt[z]\RR \evctxtp[z]$ for a  fresh $z$;
\item $M\converges \abs x M'$ and $N\converges \abs x N'$ for some $x$, $M'$, $N'$ with $M'\RR N'$;
\item $M\converges x$ and $N\converges x$ for some $x$.
\end{enumerate}
\emph{Eager normal-form bisimilarity}, $\enf$, is the largest eager normal-form bisimulation.
\end{definition}


Essentially, the structure of a $\lambda$-term that is unveiled by
Definition~\ref{enfbsim} is that of a (possibly infinite) tree 
obtained by repeatedly applying $\betav$-reduction, and branching a tree whenever
instantiation of a variable is needed to continue the reduction (clause \reff{ie:split}).  
We call such trees \emph{Eager Trees} (ETs) and accordingly also call  
 eager normal-form bisimilarity the \emph{Eager-Tree equality}.


\begin{example}
\label{exa:cteq}
Relation $\enf$ is strictly finer than contextual equivalence
$\ctxeq$: the inclusion ${\enf} \subseteq {\ctxeq}$ follows from the
congruence properties of $\enf$ \cite{lassentrees}; for the strictness,
examples are the following equalities, that hold for $ \ctxeq$ but not
for $\enf$:
  $$
  \Omega  = (\abs y \Omega) (x\val)
\qquad
x\val =  (\abs y x\val)(x\val)
\enspace.
$$
\end{example} 
\begin{example}[$\eta$ rule]
\label{exa:eta}
The  $\eta$-rule is not valid for $\enf$. For instance, we have 
$\Omega \not\enf \abs x \Omega x$.
The rule is not even valid on values, as we also have
$ \abs y x y \not\enf x$. It holds however 
for abstractions: 
$   \abs y (\abs xM) y\enf \abs xM
$ when $y\notin\fv{M}$.
\end{example} 
The failure of the   $\eta$-rule $ \abs y x y \not\enf x$ is
troublesome as, under any closed value substitution, the two terms are
indeed {
eager normal-form bisimilar} (as well as contextually equivalent). 
%
%
%
%
Thus \emph{$\eta$-eager normal-form bisimilarity}~\cite{lassentrees} takes
$\eta$-expansion into account so to recover such missing equalities.


\begin{definition}[\enfbsim] \label{enfebsim}
A relation $\R$ between \lterms is an \emph{\enfbsim} if, whenever $M\RR N$, either one of the clauses of Definition \ref{enfbsim},
or one of the two following additional clauses,  hold:
\begin{enumerate}
\setcounter{enumi}{4}
\item\label{lab:five} $M\converges x$ and $N\converges \abs y N'$ for some
 $x$, $y$, and $N'$ such that $ N'\converges\evctxt[x\val]$,
with      $y\RR \val$ and $z\RR
  \evctxt[z]$ for some value $\val$, evaluation context 
\evctxt, and  fresh $z$.
\item\label{def:enfe:case:eta}
the converse of \reff{lab:five}, i.e., 
$N\converges x$ and
  $M\converges \abs y M'$ for some
  $x$, $y$, and $M'$ such that $ M'\converges\evctxt[x\val]$,
with
     $ \val\RR y$ and 
$
   \evctxt[z]\RR z$ for some value $\val$, evaluation context 
 \evctxt, and  fresh $z$.
\end{enumerate}
Then \emph{$\eta$-eager normal-form bisimilarity}, $\enfe$, is the largest $\eta$-eager normal-form bisimulation.
\end{definition}
We sometimes call relation $\enfe$ the \emph{$\eta$-Eager-Tree equality}. 
\begin{remark}Definition~\ref{enfebsim} coinductively allows
$\eta$-expansions to occur underneath other $\eta$-expansions, hence 
trees with infinite $\eta$-expansions may be equated with finite trees.
For instance, $$x\enfe \abs y xy\enfe \abs y x (\abs z yz)\enfe \abs y x (\abs z y(\abs w zw))\enfe\dots$$
A concrete example is given by taking a fixpoint $Y$, and setting 
$f \defi (\abs {zxy} x(zy))$. We then have  $Yfx \reds \abs y x (Yfy)$, and then 
$x(Yfy)\reds x(\abs z y(Yfz))$, and so on. Hence, 
we have  $x\enfe Yfx$.
\end{remark}


\subsection{The \pc, \Intp\ and \alpi}
\label{s:back:pi}


In all  encodings
we consider, 
  the encoding of a  $\lambda$-term
 is parametric on  a name, 
i.e., it is a 
 function from names  to $\pi$-calculus  
 processes. We also need parametric processes (over one or several names) for writing recursive process definitions
 and equations. 
We call such  parametric processes {\em abstractions}.
The actual instantiation of the parameters of an abstraction $F$ is done
via the {\em application} construct $\app F \tila$.
 We use $P,Q$ for processes, $F$ for abstractions.
Processes and abstractions  form the set  of  {\em $\pi$-agents} (or
simply \emph{agents}), ranged
over by $A$. 
Small letters 
 $a,b, \ldots, x,y, \ldots$  
range  over the infinite set of names.
The grammar of the $\pi$-calculus is thus:
$$ \begin{array}{ccll}
A & := & P \midd F & \mbox{(agents)}\\[\mypt]
P & := & \nil    \midd    \inp a \tilb . P    \midd    \out a \tilb . P 
   \midd    \res a P
& \mbox{(processes)}  \\[\myptSmall]
& &
   \midd     P_1 |  P_2   \midd  ! \inp a \tilb . P   
\midd \app F \tila
\\[\mypt]
F & := & \bind \tila P \midd K & \mbox{(abstractions)}
   \end{array}
 $$

In  prefixes $\inp a \tilb$ and $\out a \tilb$, we call 
 $a$ the {\em subject} and  $\tilb$  the {\em object}. 
\iffull
 We use $\alpha $
to range over prefixes. 
\fi When the tilde  is empty, the surrounding brackets in prefixes
 will be omitted.
We  
 often abbreviate 
\iffull 
$\alpha  . \nil$ as $\alpha $, and 
\fi
$\res a \res b P$ as $\resb{
a,b} P$.
An input prefix  $a (\tilb) .P$, a restriction 
   $\res{b} P$, and an abstraction $\bind\tilb P$  are binders for 
   names  $\tilb$ and $b$, respectively, and  give
rise in the expected way  to the definition of {\em free names}
(\mbox{\rmsf fn}) and {\em
bound names} (\mbox{\rmsf bn})
 of a term or a prefix,    and   $\alpha$-conversion.
An agent is \emph{name-closed}  if it does not contain free names. 
As in the $\lambda$-calculus, following the 
usual Barendregt convention
we   identify
processes or actions which only  differ on the choice of the 
 bound names. 
The symbol $=$  will 
mean  ``syntactic identity modulo
$\alpha$-conversion''.
Sometimes, we  use $\defi$   as abbreviation mechanism, to
assign a name to an   expression to which we want to refer later.



   We use constants, ranged over by $K$ for writing recursive definitions. Each
constant has a defining equation of the form 
$K \Defi  \bind{\tilx} P$, where $\bind{\tilx} P  $ is name-closed; $\tilx$ 
 are the formal parameters of
the constant (replaced by the actual parameters whenever the constant
is used).

Since the calculus is polyadic, 
we assume a \emph{sorting system}~\cite{milner1993polyadic}
   to avoid disagreements  in the arities of
the tuples of names 
carried by a given name 
and in applications of abstractions.
We will not present the sorting system 
because it is not essential. 
The reader should
take for granted that all agents described  obey  a sorting. 
A \emph{context}  $\qct$ of $\pi$    is a $\pi$-agent in which some
subterms have been replaced by the hole $\hdot{}$ or, if the context is
polyadic, with indexed holes $\holei 1, \ldots, \holei n$; 
then 
 $\ct A$ or $\ct {\til A}$
 is the agent resulting from replacing the holes with the terms $A$ or
 $\til A$.

We omit the operators of sum and matching (not needed in
the encodings).
\iffull
$\nil$ is the inactive process. An input-prefixed process $a (\tilb) .
P$, where  $\tilb$ has  pairwise distinct components,
 waits for a tuple of  names $\tilc$
 to be sent along $a$ and then  behaves like
$P \sub {\tilc} \tilb $, where $\sub{\tilc}\tilb $ is the 
 simultaneous
substitution
of names 
$\tilb$ with names  $\tilc$. An output particle 
$\opw a \tilb  $  emits  names  $\tilb$ at $a$.
Parallel composition  is  to run two processes in parallel.
The restriction $\res a P$ makes name $a$ local, or private, to $P$.
A replication  $ \bango P$ stands for a countable infinite  number
			    of copies of $P$ in parallel.
\else
We refer to~\cite{milner1993polyadic} for detailed discussions on the
operators of the language. 
\fi
We assign parallel composition the lowest precedence among the
operators.

\iffull
Substitutions are of the form $\sub \tilb \tila$, and
are   finite assignments of  names to names.
We use $\sigma$ and $\rho$  to range over substitutions. 
The application of a
  substitution  $\sigma $ to an expression $H$ is
written $H \sigma $.
Substitutions have 
precedence over the operators of the language; $\sigma  
\rho$ is the composition of substitutions  where $\sigma $ is performed
first, 
 therefore $P \sigma  
\rho$ is $ (P \sigma ) \rho$.
\fi

\iffull

Throughout the paper, we allow ourselves some freedom in the use
of 
$\alpha$-conversion  on names; thus we assume that
the application of a substitution
does not   affect  
   bound names   of expressions;
similarly, when comparing the transitions of two processes, we 
assume that the bound names of the transitions do not occur free in
the processes.
In  a statement, we say that  a name is {\em fresh} to mean that it
 is different from any other name which 
occurs  in the statement or in objects of the statement like
processes and substitutions.

\fi

\paragraph{Operational semantics.}\ifapp
The operational semantics of the $\pi$-calcu\-lus  is standard
\cite{SW01a} (including the labelled transition system),
and given in Appendix \ref{a:opsem}.
\else
The operational semantics of the $\pi$-calcu\-lus  is standard
\cite{SW01a} (including the labelled transition system).
\fi
The reference behavioural equivalence for $\pi$-calculi will be
the usual \emph{barbed congruence}.
We recall its definition,  on a generic subset $\LL$ of
$\pi$-calculus processes.
 A \emph{$\LL$-context} is a  process of $\LL$ with a single
hole $\contexthole$ in it (the hole has a  sort too, as it could be in
place of an abstraction). 
We write $P \Dwa_{a}\;$  
if $P$ can make an output action
whose subject is $a$, possibly after some internal moves. (We make only
output observable because this is standard in asynchronous
calculi; adding also observability of inputs does not affect barbed
congruence  on the synchronous calculi we will consider.) 
\iffull

\daniel{
  maybe rephrase, like\\
``in the case of a synchronous calculus like \Intp, Definition~\ref{d:bc}
below yields synchronous barbed congruence, and adding also
observability of inputs does not change the induced equivalence.''}
Details on this, and on the transition system for $\pi$-calculi, as
well other aspects of their operational semantics
are given in Appendix \ref{a:opsem}. 
\fi

\begin{definition}[Barbed  congruence]
\label{d:bc}
 {\em Barbed bisimilarity} is the largest 
symmetric relation $ \wbb$ on 
 $\pi$-calculus  processes   
such that
  $P \wbb  Q$ implies:
\begin{enumerate}
\item 
If $P \Longrightarrow P'$ then  there is $ Q'$ such that
$Q \Longrightarrow Q'$ and
      $P' \wbb Q'$.
\item  $P \Dwa_{ a}\/$  iff $Q \Dwa_{a}\/$.
\end{enumerate}
Let $\LL$ be a set of $\pi$-calculus agents, and 
 $A, B \in \LL$. We say that $A$ and $B$ are 
  {\em barbed congruent in 
$\LL$}, written $A \wbc\LL B$,
if for each (well-sorted)  $\LL$-context $C$, it holds that $C[A] \wbb C[B]$.
\end{definition}

\begin{remark}
\label{r:abs} 
Barbed congruence has been uniformly defined on processes and
abstractions (via a quantification on all process contexts). 
Usually, however,  
 definitions will only be given for processes; it is
then intended that they are extended to abstractions by requiring
closure under ground parameters, i.e., by supplying fresh names as
arguments. 
\end{remark} 

As for
 all contextually-defined behavioural relations, so 
barbed congruence is
hard to work with.  In all calculi we consider, it can be
characterised in terms of \emph{ground bisimilarity}, under the (mild)
condition that the processes are image-finite up to $\bsim$. 
{
 (We recall  that
the class of processes {\em image-finite up to \bsim} 
is the largest subset ${\mathcal {IF}}$ of
$\pi$-calculus processes  which is derivation  closed  and such that 
$P \in {\mathcal {IF}}$ implies that,   for all actions $\mu$, 
the set 
$\{P'  \st    P \Arr{\mu } P'\}$
quotiented by   \bsim\
is finite. The definition is extended to abstractions as by
Remark~\ref{r:abs}.) }
\iffull
 An agent $F$ is image-finite if its ground
instantiation $\app F \tila$, where $\tila$ are fresh, is an
image-finite process. 
\fi
All the agents in the paper, including those obtained by encodings
of the $\lambda$-calculus, are image-finite up to \bsim. 
The distinctive feature of \emph{ground} bisimilarity is that it does not
involve instantiation of the bound names of inputs (other than by
means of fresh names), and similarly for abstractions.
In the remainder, we omit the adjective `{ground}'. 

\begin{definition}[Bisimilarity]
\label{d:bisimulation}
A symmetric relation $\R$ 
on $\pi$-pro\-cesses is a
\emph{bisimulation}, if whenever $P \,\R\, Q$ and $P \arr\mu P'$, then $Q \Arcap\mu Q'$
for some $Q'$ with $P' \,\R\, Q'$. 

Processes $P$ and $Q$ are \emph{bisimilar}, written
$P\approx Q$, if $P \,\R\, Q$ for some bisimulation $\R$. 
\end{definition}
\iffull
We extend $\approx$ to abstractions:
 $F 
\approx G 
$ if $\app {F}\tilb \approx \app {G} \tilb $ for fresh $\tilb$.

\fi
\iffull

 Transitions are of the form $ P \arr{\inp a \tilb}P'$ (an input, $\tilb$
 are the bound names of the input prefix that has been fired), 
 $P
 \arr{\res {\til{d}}\out a\tilb}P'$ (an output, where $\til d \subseteq
 \tilb$ are private names extruded in the output), and $P \arr\tau P'$
 (an internal action). We use $\mu$ to range over the labels of
 transitions.  
 We write
 $\Arr {}$ 
 for the reflexive transitive closure of $\arr{\tau}$, and 
 $\Arr{\mu}$ for $\Longrightarrow   \arr{\mu}\Longrightarrow$; then
 $\Arcap \mu$ is $\Arr{\mu}$ if $\mu$ is not $\tau$, and $\Arr{}$
 otherwise.
 In 
  bisimilarity  or other behavioural relations for the
 $\pi$-calculus we consider, no name instantiation
 is used  in the input clause or elsewhere; technically, the relations are \emph{ground}. 
 In the subcalculi we consider ground bisimilarity is a congruence and coincides with
 barbed congruence (congruence breaks in the full $\pi$-calculus).  Besides the simplicity
 of their definition, the ground relations make more effective the theory of unique
 solutions of equations (checking divergences will be simpler, see Section~\ref{s:}).   
\fi

\medskip

We will use two subcalculi: the Internal $\pi$-calculus (\Intp),
and the Asynchronous Local $\pi$-calculus (\alpi), obtained by placing certain constraints
on prefixes. 

\paragraph{\Intp.}

In \Intp, all outputs are bound. This is 
syntactically enforced  by replacing the output construct with 
the bound-output construct $\bout a \tilb .P$, which, with respect to
 the grammar
of the ordinary  $\pi$-calculus, is an abbreviation for $\res \tilb
\out a \tilb . P$. In all tuples (input, output, abstractions, applications) the
components are pairwise distinct so to make sure that distinctions among names are
preserved by reduction. 



\txthere{Forwarders}{How to fix forwarders for internal pi: infinite forwarders. We need also:\\
We consider a sorting system over the channel names, so that a name carries its sort (and therefore the number of names that are transmitted over it, as well as whether the channel is linear or not).
}

\paragraph{\alpi.}

\alpi\ is defined by enforcing that in an input $\inp a\tilb.P$, all
names in $\tilb$ appear only in output position in $P$.
Moreover, 
 \alpi{} being  \emph{asynchronous},
  output prefixes have no continuation; 
in the grammar of the $\pi$-calculus this corresponds to having only outputs of the form 
 $\out a\tilb.\nil$ (which we will simply write $\out a\tilb$).
In   
\alpi, to maintain the  characterisation of  barbed congruence as (ground) bisimilarity,
the transition system has to be modified ~\cite{localpi}, 
 allowing the dynamic introduction of additional processes (the
  `links', sometimes also
called forwarders). 
\ifapp Details are given in Appendix~\ref{a:alpi}.\fi
   




\begin{theorem}
\label{t:bisbc}
\begin{enumerate}
\item
 In  \Intp, on agents that are image-finite up to~\bsim,  barbed congruence  and
bisimilarity 
coincide. 
\iffull 

\item {\alert In \Intp, 
contextual equivalence and trace equivalence coincide; furthermore, 
contextual precongruence and trace inclusion coincide} 
\adrien{\\i added this, 
check if this is true and in the right theorem. If so, remove comment}
\fi
\item\label{bisbc:alpi}
In  \alpi, on agents that are image-finite up to~\bsim  
and where no free name is used in input, 
barbed congruence  and
bisimilarity 
coincide. 
\end{enumerate}
 \end{theorem}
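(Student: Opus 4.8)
The plan is to treat both parts as instances of the standard "barbed congruence = ground bisimilarity" characterisation theorem, adapted to the two subcalculi at hand. The soundness direction (bisimilarity implies barbed congruence) is the easy half: one shows that ground bisimilarity is preserved by all the operators of the relevant calculus — a congruence argument — and that it implies barbed bisimilarity; since barbed congruence is by definition the largest congruence contained in barbed bisimilarity, $\bsim \subseteq \wbc{\LL}$ follows. The only delicate point here is congruence under the input prefix: in the full $\pi$-calculus ground bisimilarity is not input-closed, which is exactly why we restrict to \Intp\ and \alpi. For \Intp\ the object of an input is always immediately instantiated only with fresh distinct names (distinctions are preserved by reduction), so ground bisimilarity is input-closed; for \alpi, the received names occur only in output position, and because only the output capability is ever used the observer cannot distinguish them, again giving input-closure — this uses the modified (link-introducing) transition system so that the closure argument goes through.

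For the completeness direction (barbed congruence implies bisimilarity) the standard recipe is: given $P \wbc{\LL} Q$, build a bisimulation containing $(P,Q)$ by exhibiting, for each possible transition $P \arr{\mu} P'$, a matching context $C_\mu$ that turns the transition $\mu$ into an observable barb, so that from $C_\mu[P] \wbb C_\mu[Q]$ one extracts a matching weak transition $Q \Arcap{\mu} Q'$ with $C_\mu[P'] \wbb C_\mu[Q']$, and hence $P' \,\R\, Q'$ for the candidate relation $\R$. One designs separate "observer" contexts for $\tau$-actions (trivial), for input actions (a context supplying the right output on the subject, using fresh names for the object), and for (bound) output actions (a context that inputs on the subject and signals success on a fresh barb). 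The hypothesis that the agents are image-finite up to \bsim\ is what licenses the usual argument that a single $Q'$ can be chosen to match, i.e.\ it rules out the pathological situations where infinitely many inequivalent derivatives would be needed; concretely it is used to show the candidate $\R$ is a bisimulation by a limiting/König-style argument over the finitely many $\bsim$-classes of derivatives. In part~\reff{bisbc:alpi} one additionally carries along the side condition "no free name is used in input": this is preserved by the transitions and by the observer contexts we build (which only add outputs on subjects and fresh-name inputs), so it is a genuine invariant of the construction and is precisely what keeps the \alpi\ observer contexts legal.

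The extension to abstractions is handled exactly as in Remark~\ref{r:abs}: instantiate with fresh ground parameters and appeal to the process-level statement, noting that freshness makes the instantiation harmless for both relations. The main obstacle — and the only part that is genuinely calculus-specific rather than boilerplate — is the treatment of output actions in \alpi: because outputs are asynchronous and have no continuation, and because the labelled transition system has been altered to introduce links, one must check that the observer context for an output still faithfully reports the action and that the residual processes (now possibly carrying freshly spawned link processes on both sides) remain related; this is where one leans on the results of~\cite{localpi} establishing that the link-augmented LTS yields a bisimilarity that is still a congruence. I would cite \cite{SW01a} for the \Intp\ argument and \cite{localpi} for the \alpi\ one, since both characterisation theorems are essentially known; the statement here is a mild repackaging tailored to the image-finiteness and no-free-input hypotheses actually needed later in the paper.
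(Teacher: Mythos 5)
The paper never proves Theorem~\ref{t:bisbc}: it is stated as background, imported from the standard characterisations of barbed congruence via ground bisimilarity (the classical argument of~\cite{SW01a} for the \Intp\ case, and the theory of~\cite{localpi} for \alpi, with its link-introducing transition system). Your overall route --- soundness by congruence of ground bisimilarity, completeness by observer contexts plus image-finiteness, parameters instantiated with fresh names for abstractions --- is exactly the argument those sources use, so in that sense your proposal is consistent with what the paper relies on.

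Two points deserve attention. First, the completeness mechanics are looser than they can afford to be: from $C_\mu[P'] \wbb C_\mu[Q']$ you cannot simply place $(P',Q')$ in your candidate relation and continue, because the derivatives are only known to be barbed bisimilar under that one context, not barbed congruent; the standard way image-finiteness up to $\bsim$ enters is through the stratification of $\bsim$ into its finite approximants and an induction on the approximation level, with contexts supplying enough testing material for the remaining rounds. Your ``K\"onig-style'' remark gestures at this, but the direct candidate-relation construction as written does not close. Second, and more substantively, your reading of the hypothesis ``no free name is used in input'' in part~\reff{bisbc:alpi} is off: it is not there to keep the observer contexts legal (locality of contexts is automatic in \alpi), but to reconcile the \emph{synchronous} ground bisimilarity used in the statement with the asynchrony of \alpi. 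In an asynchronous calculus, input actions on free names are not faithfully observable, so without this hypothesis barbed congruence is strictly coarser than synchronous bisimilarity and only an asynchronous bisimilarity characterises it --- the paper says exactly this in the parenthetical remark following the theorem. A correct completeness argument for part~\reff{bisbc:alpi} must invoke the hypothesis precisely when matching input transitions (which, under it, can only arise on names the process itself has extruded), in combination with the modified LTS of~\cite{localpi}; treating it as a mere bookkeeping invariant of the construction misses the one place where the statement would otherwise be false.
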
 

All    encodings of the $\lambda$-calculus (into  \Intp\ and  \alpi) in the
paper satisfy the conditions of Theorem~\ref{t:bisbc}. 
Thus  we  will be able to use
bisimilarity as a proof technique for barbed congruence. 
(In part (2) of the theorem, the condition on inputs can be removed by
adopting an asynchronous variant of bisimilarity; however, the
synchronous version is easier to use in our proofs based on unique
solution of equations).


\iffull 

\DS{Theorem~\ref{t:bisbc} can be made stronger by requiring that
  whenever an input appears free in an agent, then this name cannot
  appear free in other positions (output, application). If we did not
  have application, one could simply say that no names appear at the
  same time both in an input and in an output. Probably it is not
  necessary to explain that the result is new. We will do this in the
  journal version. Idea for the proof of the theorem (2):  }
\fi

\section{Unique solutions in  \Intp\ and \alpi}
\label{s:usol}

We adapt   the proof technique of unique solution of equations, from~\cite{usol} to
the calculi  \Intp\ and \alpi, in order to  derive bisimilarity results. 
The technique is discussed in ~\cite{usol} on the asynchronous $\pi$-calculus 
(for possibly-infinite systems of equations).
The structure of the 
 proofs for \Intp\ and \alpi\ is similar; 
 in particular the completeness part 
is essentially the same 
 because
 bisimilarity  is the same. 
 The differences in the syntax of \Intp, and in the transition
 system of \alpi, show up only in certain technical details of the
 soundness 
 proofs. 

%
\iffull
The results presented in this section hold both for \Intp\ and for
\alpi. 
\fi



We need variables to write equations. We  use
 capital
letters  $X,Y,Z$
 for  these variables and call them \emph{equation variables}.
 The body of an equation is a name-closed abstraction
possibly containing equation  variables 
(that is,  applications can also be of the form $\app X\tila$). 
%
We use $E$ to range over such  expressions; and 
 $\EE$ to range over systems of equations, defined as follows. 
 In the definitions below, the indexing set $I$ can be infinite.

\begin{definition}
Assume that, for each $i$ of 
 a countable indexing set $I$, we have a variable $X_i$, and an expression
$E_i$, possibly containing  some variables. 
Then 
$\{  X_i = E_i\}_{i\in I}$
(sometimes written $\til X = \til E$)
is 
  a \emph{system of equations}. (There is one equation for each
  variable $X_i$; we sometimes use $X_i$ to refer to that equation.) 

  A system of equations is \emph{guarded} if each
  occurrence of a variable in the body of an equation is underneath a
  prefix.
\end{definition}

\iffull
We can remark that in \alpi, an equation guarded is all occurrences of
variables are below an input prefix (because the calculus is
asynchronous). 
\fi

$E[\til F]$  is the abstraction resulting from $E$ by
replacing each variable $X_i$   with the abstraction $F_i$ (as usual
 assuming
$\til F$ and $\til X$ have the same sort). 
\iffull
(This is syntactic
replacement.)
\fi


\begin{definition}\label{d:un_sol}
Suppose  $\{  X_i = E_i\}_{i\in I}$ is a system of equations. We say that:
\begin{itemize}
\item
 $\til F$ is a \emph{solution of the 
system of equations  for $\bsim$} 
if for each $i$ it holds
that $F_i \bsim E_i [\til F]$.
\item The  system has 
\emph{a unique solution for $\bsim$}  if whenever 
$\til F$ and $\til G$ are both solutions for $\bsim$, we have 
$\til F \bsim \til G$. \end{itemize}
 \end{definition} 

 \begin{definition}[Syntactic solutions]
   The syntactic solutions of the system of equations $\til X =\EeqBody{}{}$ are the  
   recursively defined constants $\KEi E \Defi E_i[\KE]$, for each
   $i\in I$, where $I$ is the indexing set of the system.
 \end{definition}
The syntactic solutions
of a system of equations 
are indeed solutions of
it. 
%




  A process $P$ \emph{diverges} if it can perform an infinite sequence
  of internal moves, possibly after some visible ones 
  (i.e., actions  different from $\tau$); formally,  there are
  processes $P_i$, $i\geq 0$, and some $n$, such that
  $P=P_0\arr{\mu_0} P_1 \arr{\mu_1} P_2 \arr{\mu_2}\dots$ and for all
  $i>n$, $\mu_i=\tau$. We call a \emph{divergence of $P$} the sequence
  of transitions $\big(P_i\arr{\mu_i}P_{i+1}\big)_{i}$.
In the case of an abstraction, 
\iffull 
one first has to
instantiate the parameters with fresh names; thus \fi 
$F$ has a divergence
if the process $\app F\tila$ has a divergence, where $\tila$ are fresh
names. A tuple of agents $\til A$ \emph{is divergence-free} if none of the
components $A_i$ has a  divergence. 

The following result is the technique we rely on to establish
completeness of the encoding. As announced above, it holds in both
\Intp\ and \alpi.
\begin{theorem}\label{thm:usol}
In \Intp\ and \alpi,
a guarded system of equations 
with divergence-free  syntactic 
solutions  has unique solution for \bsim. 
\iffull
{\alert ~and for $\treq$}.
\fi
\end{theorem}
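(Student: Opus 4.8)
The plan is to adapt the unique-solution proof of~\cite{usol} to \Intp\ and \alpi, following the standard two-step structure. First I would reduce the statement to a claim about the syntactic solutions: since $\KE$ are themselves solutions (as noted just before the theorem), and since $\bsim$ is transitive, it suffices to show that \emph{every} solution $\til F$ of a guarded, divergence-free system satisfies $F_i \bsim \KEi E$ for each $i$. So the real content is: a divergence-free guarded system forces any solution to be bisimilar to its syntactic solution.

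For this I would proceed by the usual unfolding argument. Given a solution $\til F$, because the system is guarded one can unfold it any finite number of times: write $E_i^{(n)}$ for the expression obtained by substituting the equations into themselves $n$ times, so that $F_i \bsim E_i^{(n)}[\til F]$ and $\KEi E \bsim E_i^{(n)}[\KE]$, and in $E_i^{(n)}$ every equation variable occurs under at least $n$ nested prefixes. The key lemma is then a ``bounded-depth agreement'': if a visible/internal action sequence of length $< n$ takes $E_i^{(n)}[\til F]$ to some derivative $P'$, then it takes $E_i^{(n)}[\KE]$ to a syntactically matching derivative $P''$ obtained by the same substitution into a common context $C$, i.e. $P' = C[\til F]$ and $P'' = C[\KE]$; the equation variables are still guarded in $C$ because we started $n$ levels deep and consumed fewer than $n$ prefixes. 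From this one builds the candidate relation
\[
\R \;=\; \{\, (C[\til F],\, C[\KE]) \;\mid\; C \text{ a context with guarded holes, reachable as above} \,\}
\]
(closed up suitably under $\bsim$ on both sides) and checks it is a bisimulation up to $\bsim$: a transition of $C[\til F]$ is matched either inside $C$ — giving another pair in $\R$ directly — or it consumes one of the top prefixes guarding a hole, after which one unfolds further ($n \to n+1$) and re-applies the bounded-depth lemma to realign. Divergence-freeness of the syntactic solutions is exactly what guarantees this process terminates in the right way: it rules out the pathological case where $C[\til F]$ can diverge purely by ``diving through holes'' forever without the matching side ever being forced to respond, which is where uniqueness would otherwise fail (cf.\ the contractivity/non-divergence hypothesis in~\cite{usol}).

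The calculus-specific work is confined to the soundness side. For \Intp\ I would check that the bound-output discipline and the pairwise-distinctness of tuple components are preserved by unfolding and substitution, so that the contexts $C$ above stay well-formed \Intp\ contexts and the matching derivatives genuinely correspond under the same $C$; the labelled transitions only ever extrude fresh names, which makes the ``same context'' bookkeeping cleaner than in the full calculus. For \alpi\ the extra subtlety is the modified transition system with dynamically introduced links: I would need the bounded-depth lemma to account for the links spawned on input transitions, showing they are generated identically on the $\til F$- and $\KE$-sides (they depend only on the subject/object names, not on the continuation), so they can be absorbed into the context $C$; locality (received names used only in output position) is what ensures this link machinery behaves uniformly. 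In both cases divergence-freeness is checked on the \emph{syntactic} solutions, and since $\bsim$ is the same relation in both calculi the completeness half of the argument — that $\til F \bsim \til G$ follows once both are shown $\bsim$ to $\KE$ — is literally identical and needs no recalculation.

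I expect the main obstacle to be the precise formulation and proof of the bounded-depth agreement lemma in the presence of \alpi's link-introducing transitions: making ``the same context $C$'' rigorous there requires tracking how links accumulate along a transition sequence and arguing they are a function of the action sequence alone, independent of whether holes are filled by $\til F$ or $\KE$. Everything else — guardedness implying arbitrary finite unfolding, the up-to-$\bsim$ reasoning, reducing to the syntactic solutions — is routine adaptation of~\cite{usol}.
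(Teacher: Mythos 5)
Your proposal takes essentially the same route as the paper, which obtains this theorem by adapting the unique-solution argument of~\cite{usol} to \Intp\ and \alpi: finite unfoldings of the guarded system, matching transitions of sufficiently deep unfoldings through a common context in a bisimulation up to $\bsim$, divergence-freedom of the syntactic solutions to make the realignment terminate, and calculus-specific checks confined to exactly the two points the paper flags (the syntactic discipline of \Intp\ and the link-generating transition system of \alpi). One small correction to your gloss: the divergences excluded by the hypothesis live on the syntactic-solution side --- it is the unfoldings of $C[\KE]$ that could otherwise postpone their answer forever, as in $X=\tau.X$ --- not on the $C[\til F]$ side, though this does not change the structure of your argument.
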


Techniques for ensuring termination, hence divergence freedom, for the
$\pi$-calculus have been 
studied in, e.g., \cite{termination1,termination2,termination3}.

\subsection{Further Developments}


We  present some further developments to the theory of unique solution of equations, 
that are needed  for the results in this paper. 
The first result allows us to derive the unique-solution property for a system of
equations from the analogous property of an extended system.

\begin{definition}
\label{d:extend}
A system of equations $\Eeq'$ \emph{extends} system $\Eeq$
if there exists a fixed set of indices $J$ such that any solution of
$\Eeq$ can be obtained from a solution of $\Eeq'$ by removing the
components corresponding to indices in $J$. 
\end{definition} 


\begin{theorem}
  \label{t:transf:equations}
  Consider two systems of equations 
$\Eeq'$ 
and $\Eeq$ 
where 
  $\Eeq'$
extends $\Eeq$. 
If $\Eeq'$ has a unique solution, then the property also holds for
$\Eeq$. 
\end{theorem}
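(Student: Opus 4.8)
The plan is to argue by contradiction using the contrapositive of the unique-solution property, exploiting the definition of ``extends'' to transport solutions back and forth between $\Eeq$ and $\Eeq'$. Suppose $\Eeq$ does \emph{not} have a unique solution for $\bsim$. Then there are two solutions $\til F$ and $\til G$ of $\Eeq$ with $\til F \not\bsim \til G$, i.e., $F_k \not\bsim G_k$ for some index $k$ (necessarily $k \notin J$, since the removed components are indexed by $J$). The first step is to use Definition~\ref{d:extend} to produce, from $\til F$, a solution $\til F'$ of $\Eeq'$ whose restriction (removing the $J$-components) is exactly $\til F$; similarly a solution $\til G'$ of $\Eeq'$ restricting to $\til G$. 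This is precisely what the phrase ``any solution of $\Eeq$ can be obtained from a solution of $\Eeq'$ by removing the components corresponding to indices in $J$'' guarantees.

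The second step is to observe that $\til F' \not\bsim \til G'$: indeed $F'_k = F_k \not\bsim G_k = G'_k$ for the index $k$ above (which, not being in $J$, names a component common to $\Eeq$ and $\Eeq'$ and is preserved by the restriction operation). Hence $\til F'$ and $\til G'$ are two non-bisimilar solutions of $\Eeq'$, contradicting the hypothesis that $\Eeq'$ has a unique solution. Therefore $\Eeq$ has a unique solution for $\bsim$, as claimed.

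The only subtlety — and the step I expect to need the most care — is making the index bookkeeping of Definition~\ref{d:extend} precise: one must check that the ``removing components'' operation is compatible with the componentwise extension of $\bsim$ to tuples, so that a disagreement at a non-$J$ index in $\Eeq$ genuinely lifts to a disagreement between the corresponding solutions of $\Eeq'$, and conversely that a solution of $\Eeq'$ really does restrict to a solution of $\Eeq$ (this direction is what Definition~\ref{d:extend} asserts, so it can be invoked directly). Since the definition is phrased exactly so that solutions of $\Eeq$ are in the image of the restriction map from solutions of $\Eeq'$, no further hypotheses (such as guardedness or divergence-freedom) are needed here; the result is purely a transfer lemma about the shape of the solution sets.
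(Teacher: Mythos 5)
Your argument is correct and is essentially the intended proof (the paper states the theorem without a different route, and its accompanying remark—that one cannot argue via syntactic solutions—confirms that the proof is exactly this transfer of solutions): lift the two solutions of $\Eeq$ to solutions of $\Eeq'$ using Definition~\ref{d:extend}, invoke uniqueness for $\Eeq'$, and read off bisimilarity at the non-$J$ components; as you note, no guardedness or divergence-freedom is needed. One small slip in your closing paragraph: Definition~\ref{d:extend} asserts that every solution of $\Eeq$ arises by removing the $J$-components of \emph{some} solution of $\Eeq'$ (the lifting direction your main argument uses), not that every solution of $\Eeq'$ restricts to a solution of $\Eeq$ — the latter is neither claimed nor needed — and the contradiction framing can be replaced by the direct argument (take arbitrary solutions $\til F \bsim$-compared via their lifts), which is slightly cleaner.
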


We shall use Theorem~\ref{t:transf:equations} in
Section~\ref{s:complete}, in a situation where we transform a certain
system into another one, whose uniqueness of solutions
is easier to establish. 
\iffull
Then, by 
Theorem~\ref{t:transf:equations}, the property 
holds for the initial system.

\Mybar

\DS{I would remove the paragraph below, very technical}

\daniel{The point here would be to say ``while
  Theorem~\ref{t:transf:equations} is not deep technically, it allows us
  to avoid more complex things like the notion of innocuous
  divergence''.}

More precisely, \eqcbv, the system we study in
Section~\ref{s:complete}, has equations of the form $X=\enca\Omega$
associated to any diverging \lterm. Such equations give rise to
\emph{innocuous divergences}, using the terminology of~\cite{usol}.  A
refined version of Theorem~\ref{thm:usol} is stated 
in~\cite{usol}, in order to handle such divergences. This refined
version is arguably more intricate; using
Theorem~\ref{t:transf:equations} allows us to work in a simpler
framework.




\Mybar
\fi

\begin{remark}
We cannot derive Theorem~\ref{t:transf:equations} by comparing the syntactic solutions of
the two systems  $\Eeq'$ and $\Eeq$.
For instance,  the equations
  $X=\tau.X$ and $X=\tau.\tau.\tau\dots$ have (strongly) bisimilar   syntactic 
  solutions, yet only the latter equation has the unique-solution property.
(Further, Theorem~\ref{t:transf:equations} allows us to compare systems
of different size.) 
\iffull
  We can moreover notice that when computing the modified version of
  \eqcbv, we need to add some equations, and then use
  Theorem~\ref{t:transf:equations}.
%
  The following version of Theorem~\ref{t:transf:equations} would hence
  not be useful in that situation: \textsl{consider two equations $E$
    and $E'$ such that for all $P$, $E[P]$ is equivalent to $E'[P]$,
    then $E$ and $E'$ have the same sets of solutions.}
\fi
\end{remark}

\iffull

\medskip
\fi

The second development is a  generalisation of Theorem~\ref{thm:usol}
to preorders; we postpone its presentation to Section~\ref{s:contextual}.
\iffull

\adrien{I think previous sentence is a bit misleading, as we already had an 
extension of Theorem~\ref{thm:usol} to preorders in the CONCUR paper; 
it would rather be a reformulation of such a theorem. Do you agree?\\
DH: true, but I think this remark is ``too fine''; we present a \underline{new}
generalisation, and we don't need here to explain the whole story.}

\fi

\section{Milner's encodings}
\label{s:enc:cbv}
\subsection{Background}


Milner noticed \cite{milner:inria-00075405,encodingsmilner} that his
call-by-value 
 encoding can be easily tuned so to mimic
forms of 
evaluation in which, in an application $MN$, the function $M$ is run
first, or the argument $N$ is run first, or function and argument are
run in parallel (the proofs are actually carried out for this last
option). We chose here the first one, because it is more in line with
ordinary call-by-value. 
A discussion on the `parallel' call-by-value is deferred to
Section~\ref{s:concl}.

The core of any  encoding of the  $\lambda$-calculus  into a process calculus is the 
translation of function application. This  
becomes a particular form of 
parallel combination of two processes, the function and its argument;  
 $\betav$-reduction  is then  modeled as   process interaction.

The encoding of a $\lambda$-term is parametric over a name; this may
be thought of as the  \emph{location} of that term, or as its
\emph{continuation}.  
A term that    becomes a value   signals so at its continuation name
and, in doing so, 
it grants 
access 
 to the body of the
value. Such body   is replicated, so that the value may be
copied several times. When the value is a function, its body can
receive two names: (the access to) its value-argument, and the
following continuation.
In the translation of application, first the function is run, then
the argument; finally the function is informed of its argument and 
continuation.

In the  original paper~\cite{milner:inria-00075405},
Milner presented
two candidates for the encoding of call-by-value
$\lambda$-calculus~\cite{DBLP:journals/tcs/Plotkin75}.  They follow the same idea of
translation, but with a technical difference in the rule for
variables.  One encoding, $\qencm$, is so defined: 
\iffull

(adapting the encoding of application as described above): 
\fi 
\[ 
\begin{array}{rcl}
\encm{\abs xM} & \deff & 
\bind p\outb p y.!\inp y{x,q}.\encma Mq \\[\mypt]
\encm{MN}& \deff &  \\[\myptSmall]
\multicolumn{3}{r}{   
\bind p(\new q)(\encma Mq
|\inp qy.\new r(\encma Nr| \inp rw.\out
  y{w,p})) 
}\\[\mypt]
\encm{x}& \deff & \bind p \out p x
\end{array}
 \]
In the other encoding, $\qencmp$,
application and $\lambda$-abstraction are treated as in 
$\qencm$; the rule for variables is:
$$
\encmp{x}  \deff 
\bind p \outb p y.!\inp y {z,q}.\out x{z,q}
\enspace.
  $$



The encoding $\qencm$ is more efficient than $\qencmp$,
\iffull
  
In $\qencmp$, a
 $\lambda$-calculus variable gives rise to a one-place buffer.  As the
computation proceeds, these buffers are chained together, gradually
increasing the number of steps necessary to simulate a
$\beta$-reduction.  This phenomenon does not occur in $\qencm$, where
a variable disappears after it is used.  

\else
as it uses fewer communications.
\fi

\subsection{Some problems with the encoding}
\label{ss:ot}

The immediate free output in the encoding of variables in  $\qencm$ 
breaks the validity of $\betav$-reduction; i.e., there exist a term $M$
and a value $V$ such that $\encm{(\lambda x. M)V } \not\bsim
\encm{ M \sub V x }$~\cite{sangiorgiphd}. 
The encoding $\qencmp$ fixes 
this by communicating, instead 
of a
free name,  a 
fresh  pointer to that name. 
 Technically, the initial free output of $x$ is replaced by a
 bound output coupled with a link to $x$ (the process
 $!\inp y {z,q}.\out x{z,q}$, receiving at $y$ and re-emitting at $x$).
 Thus  $\betav$-reduction is validated~\cite{sangiorgiphd}.
\iffull

, i.e., 
$\encm{(\lambda x. M)V } \bsim
\encm{ M \sub V x }$ for any  $M$ and $V$~\cite{..}.
\fi
%
(The final version of Milner's paper~\cite{encodingsmilner},
\iffull
which appeared in the {\em Journal of Mathematical Structures in
  Computer Science}, 
\fi
was
written after the results in~\cite{sangiorgiphd} were known and presents
only the encoding $\qencmp$.)

Nevertheless,      $\qencmp$ only delays the free output, as the added link
contains itself a free output. 
As a consequence, we can show that other desirable equalities of
call-by-value are broken. An example is law~\reff{eq:nonlaw} from the
Introduction, as stated by Proposition~\ref{p:nonlaw} below.
 This law is desirable (and indeed valid for contextual equivalence,
 or the Eager-Tree equality) 
intuitively because, in any substitution closure 
of the law,  either both terms diverge, or they
 converge to  the same value. The same argument holds 
 for their $\lambda$-closures, $\abs x x\val$ and $\abs x I(x\val)$. 
\iffull
 (depending on whether the
 computation resulting from the instantiation of $xv$ diverges or not).
\fi
We recall that $\wbc\pi$ is barbed congruence in the 
$\pi$-calculus.
\begin{proposition}\label{p:nonlaw}
For any value $\val$, we have:
$$\encmp {I(x\val)} 
\nwbc\pi
\encmp {x\val} 
\mbox{ and }
\encm {I(x\val)}
\nwbc\pi
 \encm {x\val}
\enspace.
$$
\end{proposition}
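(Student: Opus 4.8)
The plan is to exhibit a $\pi$-calculus context that separates the two encodings, by choosing the value $\val$ and an ambient substitution so that the behaviours visibly diverge. The cleanest choice is to take $\val = I = \abs zz$ and to supply, at the free variable $x$, a process that behaves like the identity: concretely, work inside a context that binds $x$ to (the encoding of) $\abs z z$, so that the computation $x\val$ reduces, in the $\lambda$-calculus, to the value $\val$. Then $I(x\val)$ and $x\val$ both converge to $\val$ in the $\lambda$-calculus, so the discrepancy cannot come from the ``functional'' content of the terms; it must come from the \emph{shape of the output at the continuation name $p$}. This is exactly where the lingering free output in $\qencmp$ (and the immediate free output in $\qencm$) shows up.

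The key steps, in order, are as follows. First, I would compute the transitions of $\encmp{x\val}\param p$ and $\encmp{I(x\val)}\param p$ (and similarly for $\qencm$), tracking what happens at the continuation name $p$. For $\encmp{x\val}$, the outermost structure is an application $\encm{MN}$ with $M = x$, whose encoding $\encmp x\param q = \bout q y. !\, y(z,r).\out x{z,r}$ performs a bound output at $q$ and then acts as a \emph{link} from $y$ to the external name $x$; the value $\val$ is then fed in, and eventually an output $\out y{w,p}$ is emitted, which the link forwards as $\out x{w,p}$ — i.e. the final continuation-carrying output mentions the \emph{free} name $x$ indirectly, via a persistent forwarder on $x$. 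For $\encmp{I(x\val)}$, by contrast, the outer application has $M = I = \abs z z$, whose encoding does the bound output $\bout p{y}.!\,y(z,q).\encma z q$ at $p$ \emph{itself}, so that the name $y$ emitted at $p$ is genuinely fresh and bound, with its body being the (copied) identity body rather than a forwarder onto $x$. Second, I would exploit this structural difference: in $\encmp{I(x\val)}$ the resource offered at the continuation is a private replicated identity-like body, whereas in $\encmp{x\val}$ the resource offered is (bisimilar to) a forwarder to $x$, so an observer that also holds (an encoding of a process listening at) $x$ can detect whether invoking the continuation triggers an interaction on $x$ or not. Third, I would package this observation into a concrete $\pi$-context $C$ — essentially one that plugs in a suitable closing process for $x$ and then probes $p$ — such that $C[\encmp{x\val}\param p]$ produces a barb (say, reaches a state with an output at some fresh observation name $\omega$) while $C[\encmp{I(x\val)}\param p]$ does not, or vice versa; by Theorem~\ref{t:bisbc} it suffices to separate them up to $\bsim$, so I would actually just exhibit a distinguishing trace. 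The same analysis, with the free output appearing one step earlier, handles $\qencm$.

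I expect the main obstacle to be the bookkeeping around the forwarders: showing rigorously that, after the dust settles, $\encmp{x\val}\param p$ is bisimilar to a process that \emph{necessarily} uses the free name $x$ whenever its continuation is consumed, and that no amount of internal reduction can make it behave like the private-resource process offered by $\encmp{I(x\val)}\param p$. This requires a careful operational analysis of the chained link $!\,y(z,r).\out x{z,r}$ composed with the encoding of $\val$, and an argument that the free occurrence of $x$ is observable — precisely the phenomenon that motivates moving to \Intp\ or \alpi. A convenient way to discharge this is to instantiate $\val$ to a value whose encoding is trivial to analyse (e.g. a variable, or $I$), compute both sides up to $\bsim$ to small canonical processes, and then read off the separating context directly; the law $I(x\val)=x\val$ being valid for $\ctxeq$ and for $\enfe$ (as noted around~\eqref{eq:nonlaw} and in Example~\ref{exa:cteq}) guarantees that any separation we find is genuinely a defect of the encoding and not of the terms.
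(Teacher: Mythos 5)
Your operational analysis of $\encmp{I(x\val)}\param p$ is incorrect, and the error sits exactly at the point that matters. In the encoding of an application $MN$, the function is run at a \emph{private} name of the application protocol: $\encmp{MN}\param p$ has the shape $\new q(\encmap{M}{q} \mid \inp q{y}.\ldots)$, so the bound output performed by the encoding of $I$ happens at the restricted $q$ and is consumed internally; it is never an action at $p$. In fact neither $\encmp{x\val}\param p$ nor $\encmp{I(x\val)}\param p$ ever acts at $p$ on its own: after the deterministic administrative steps, both processes have a single visible action, an output \emph{at $x$}, carrying a (fresh) value name for $\val$ and a continuation name. The whole difference, which is what the paper's argument isolates (illustrated with $\val=y$), lies in the object of that output: $\encmp{x\val}\param p$ transmits the free name $p$ itself, whereas $\encmp{I(x\val)}\param p$ transmits a fresh name $q$ together with a link from $q$ to $p$ (the residual input of the application protocol composed with the body of $I$). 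Your claimed structural difference --- the identity body offered ``at $p$'' on one side versus a forwarder to $x$ on the other --- would mean the two processes differ already in the subject of their first visible action; if that were so, they would also be separated by the \Intp\ and \alpi\ encodings, contradicting the paper's full-abstraction results, under which law~\reff{eq:nonlaw} is valid.

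The distinguishing context you propose cannot work either. If the context binds $x$ to (the encoding of) $\abs zz$, i.e.\ supplies a faithful value environment at $x$, then both composites reduce, up to weak equivalence and garbage collection of inert links, to the same process delivering $\val$ at $p$: this is precisely why, as the paper stresses, the law holds under every closing value substitution (and for $\ctxeq$ and $\enfe$), so probing $p$ inside such a context reveals nothing. Moreover, merely ``invoking'' the received continuation does not separate the two processes, since the installed link forwards outputs at $q$ to $p$. A separating context must behave in a non-functional way: it receives the message at $x$ and then \emph{misuses} the transmitted continuation name, e.g.\ by using it in input position while racing it against the name $p$ it already knows (an output at $p$ in parallel with an input at the received name synchronises when that name is $p$, but not when it is a fresh $q$ merely linked to $p$). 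This capability misuse is exactly what \alpi's locality and \Intp's bound outputs rule out, which is why the counterexample disappears there. (Fixing a particular $\val$ is harmless, since the computation is uniform in $\val$, as in the paper's illustration with $\val=y$; but your canonical form for $\encmp{I(x\val)}$ and your candidate context both need to be replaced.)
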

(The law is
violated also under coarser equivalences, such as
contextual  equivalence.) 
Technically, the reason why the law fails in $\pi$ can be illustrated
when $\val=y$, for encoding $\qencm$. We have:
\begin{alignat*}{3}
  \encma {xy} p &\wbc\pi \outb x {v} .\new {w}&&(\out v {w, p}|!\inp w u.\out y
                  u)
                    \mbox{ \hspace{1ex}}
                    \\[\mypt]
  \encma {I(xy)} p&\wbc\pi \outb x v.\resb {w,q}&&(\out v {w, q} | !\inp{w} u.\out y
                       u
    \\
           &  &&  | 
                { \inp q z.\outb p {z'}.!\inp {z'} {w'}.\out z {w'}})     
\end{alignat*}
In presence
of the normal form $x y$, the identity $I$ becomes observable. Indeed, in the second
term, a fresh name, $q$, is sent instead of continuation $p$, and a
link between $q$ and $p$ is installed. This corresponds to a
law
which is valid in 
\alpi, but not in $\pi$.

\iffull
\DS{do we have more examples? \\
  DH: Discussion with Adrien: $(\lambda z.M)~ (x v) \not\approx M[(x
  v)/z]$, \\
  {\alert il faut imposer que M utilise z (M stricte en z)}\\
  \adrien{yes, and this is because 
  the equation is not supposed to hold if $M$ is not strict in $z$}
  and maybe also $(\lambda z.M)~(x v_1\cdots v_k) ~\not\approx
  M[(x v_1\cdots v_k)/z]$.
  \\
  + also the same equations with evaluation contexts around
}
\fi

This problem can be avoided by iterating the transformation that takes us
from $\qencm $ to $\qencmp$ (i.e., the replacement of a free output
with a bound output so to avoid all emissions of free names). Thus the
target language becomes Internal $\pi$; the resulting encoding is
analysed in Section~\ref{s:enc:pii}.

Another solution is to control the use of name capabilities in
processes. In this case the target language becomes \alpi,
 and we need not modify
 the  initial encoding  $\qencm$. This
situation is analysed in Section~\ref{s:localpi}.


\iffull
\Mybar
The encoding uses three kinds of names: \emph{triggers} $x,y,
\dots$,  \emph{continuations} $p,q,r,\dots$, 
 \emph{value body} names $v,w,\dots$. 
For  simplicity,  we assume that  the set of  trigger names  is the same as
the set of $\lambda$-variables. 

\textbf{Remark:}
we should say here that this is a very mild form of typing. We could
avoid the distinction between two kinds of names, at the cost of
introducing additional replications in the encoding (Adrien: maybe you
could add 2 words to say where the replications would go).
\fi

Moreover, in both solutions,
the use of  link processes
validates the following law~---
a form of $\eta$-expansion~---
 (the law fails  for Milner's encoding into the $\pi$-calculus): 
 \[
\abs y x y = x
\]
In the call-by-value  $\lambda$-calculus this is a useful law 
(that holds because  substitutions  replace variables with values).

\section{Encoding in the Internal \pc}
\label{s:enc:pii}
\subsection{Encoding and soundness} 
\label{ss:enc_pii}

\begin{figure}[t]
  \begin{mathpar}
    \begin{array}{rcl}
\enca {\abs x M}&\deff& \bind p \outb p y . !\inp y {x,q}.\enc M q
\\[\mypt]
\enca x& \deff& \bind p \outb p y. \fwd y x
\\[\mypt]
      \enca {MN} &\deff& \bind p \new q \big(\enc M q
                         | \inp q y. \new r \big(\enc N r |
      \\[\myptSmall]
                \multicolumn{3}{r}{  ~~ \inp r w .\outb y {w',p'}.(\fwd {w'} w|\fwd {p'} p)\big)\big)}
    \end{array}
  \end{mathpar}
\caption{The encoding into \Intp} 
\label{f:enc_internal}
\end{figure}

Figure~\ref{f:enc_internal} presents the encoding into \Intp, derived
from Milner's encoding by removing the 
free outputs as explained in Section~\ref{s:enc:cbv}.
Process   $\fwd ab$  represents a \emph{link} (sometimes called forwarder; 
for readability we have adopted the infix
notation $\fwd ab$ for the constant $\fwd{}{}$). It 
transforms all outputs at $a$ into outputs at $b$ (therefore $a,b$ are
names of the same sort). Thus the body of $\fwd ab$ is replicated,
unless $a$ and $b$ are \emph{continuation names} (names such as
$p,q,r$ over which the encoding of a term is abstracted). 
The
definition of the constant $\fwd{}{}$ therefore is: 
\iffull
{\alert \textbf{is it ok now? DH: what about two mutual recursive
    definitions? could be more clear}}
\fi
$$
\begin{array}{rcl}
\fwd{}{} &\Defi &
\left\{ \begin{array}{l}
\bind{p,q} \inp p {x}.\outb {q} {y}
          . {\fwd{ y}{ x}}\\[\myptSmall]
\multicolumn{1}{r}{  
       ~\quad  \mbox{if $p,q$ are continuation names}
                   }              \\[\mypt]
\bind{x,y} ! \inp x {p,z}.\outb y {q,w}
          .({\fwd{ q}{ p}}|\fwd w z)  \\[\myptSmall]
\multicolumn{1}{r}{  
          ~\quad \mbox{otherwise}
}\end{array} \right. 
\end{array}
 $$
(The distinction between continuation names and the other sorts of
names is not necessary, but simplifies the proofs.)
\iffull
We now discuss the soundness (in this section) and the completeness (in the next section)
for the encoding. 
\fi

The encoding   validates $\betav$-reduction.

\begin{lemma}[Validity of  $\betav$-reduction]\label{l:beta}For any
  $M,N$ in $ \Lao$, $M\longrightarrow  N$ implies 
$\enca M\bsim\enca N$.
\end{lemma}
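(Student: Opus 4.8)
The plan is to reduce the statement to a single-step operational correspondence for $\betav$-reduction and then invoke the fact that $\bsim$ is an equivalence closed under the contextual operations used in the encoding. Since $\reds$ is the reflexive-transitive closure of $\red$ and $\bsim$ is transitive, it suffices to treat the case $M \red N$ in one step. By the definition of $\red$ we have $M = \evctxt[(\abs x M_0)\val]$ and $N = \evctxt[M_0\sub \val x]$ for some evaluation context $\evctxt$, term $M_0$ and value $\val$. First I would establish a \emph{compositionality/context lemma}: for every evaluation context $\evctxt$ there is a $\pi$-context $\mathcal{C}_{\evctxt}$ (built from the clauses of Figure~\ref{f:enc_internal}) such that $\enca{\evctxt[L]} \bsim \mathcal{C}_{\evctxt}[\enca L]$ for all $L$, and moreover $\bsim$ is preserved by $\mathcal{C}_{\evctxt}$. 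The second point holds because $\bsim$ (ground bisimilarity in \Intp) is a congruence, by Theorem~\ref{t:bisbc} together with the standard congruence properties of bisimilarity in the subcalculus. Hence it is enough to prove the \emph{redex case} $\enca{(\abs x M_0)\val} \bsim \enca{M_0\sub \val x}$.

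For the redex case I would compute the transitions of $\enca{(\abs x M_0)\val}\param p$ explicitly. Unfolding $\enca{MN}$ with $M = \abs x M_0$ and $N = \val$, the process performs a bounded sequence of internal (communication) steps: the subterm $\enca{\abs x M_0}\param q$ emits its value-pointer $y$ on $q$; then $\enca \val \param r$ emits its value-pointer on $r$; then the continuation $\inp r w.\outb y{w',p'}.(\fwd{w'}w \mid \fwd{p'}p)$ fires, sending a fresh pair $w',p'$ to the replicated body $!\inp y{x,q}.\enca{M_0}q$ of the abstraction, installing two links $\fwd{w'}w$ and $\fwd{p'}p$ and spawning a fresh copy $\enca{M_0}p'$ with $x$ bound to $w'$, while $w$ points (via $\fwd{w'}w$, which is replicated since these are value-body names) to the value body of $\val$. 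The resulting derivative is, up to $\bsim$, of the form $\res{\,\dots}(\enca{M_0}\param p \text{ with } x\text{ linked to a copy of }\enca \val)$. The key sublemma here is a \emph{substitution lemma}: a link to (a fresh copy of) the encoding of a value $\val$ is bisimilar to substituting $\val$ directly, i.e. $\res{w'}(\enca{M_0}\param p\sub{w'}{x} \mid \fwd{w'}w \mid !\inp w u.\dots) \bsim \enca{M_0\sub\val x}\param p$ (and similarly that $\fwd{p'}p$ composed in front is transparent). This is the heart of the argument and is where the \Intp{} discipline matters: because $\val$ is a value, its encoding is a replicated resource accessible only through its pointer, so redirecting the pointer through a link is behaviourally indistinguishable from inlining the value.

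The main obstacle I expect is precisely this substitution/link lemma: making rigorous that $\fwd{w'}w$ faithfully implements the renaming (or value-copying) that $\sub\val x$ performs, especially since $\val$ may itself be a variable (so $\enca x$ is again a bound output followed by a link, and one must check links compose, $\fwd ab \mid \fwd bc$ behaving like $\fwd ac$ up to $\bsim$) or an abstraction (so the replicated body is copied on demand, and one must check that one ``physical'' copy behind a link simulates arbitrarily many ``logical'' copies). I would prove it by exhibiting an explicit bisimulation relation closed under the relevant contexts, or — more cleanly — by appealing to the replication theorems / link laws for \Intp{} (e.g. that $\res a(\fwd ab \mid P) \bsim P\sub ba$ when $a$ occurs only in output position in $P$, which is exactly the situation here by the sorting discipline). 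Once the redex case and the context lemma are in place, transitivity of $\bsim$ and induction on the length of the reduction $M \reds N$ close the proof. An alternative, if one wants to avoid hand-crafting the bisimulation, is to set up the redex equality as an instance of the unique-solution technique of Section~\ref{s:usol}, but for a single $\betav$-step the direct operational computation is more transparent.
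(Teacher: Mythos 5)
Your proposal is correct and matches the paper's own argument: the paper likewise reduces to the redex equality $\enca{(\abs x M)\val}\bsim\enca{M\subst x\val}$, proved via algebraic laws of replication and links, and then concludes by compositionality of the encoding and congruence of $\bsim$. Your extra care about the substitution/link lemma (link composition, value copying) is exactly the "algebraic properties of replication" the paper invokes, so no divergence in approach.
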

\iffull
\begin{proof}
One shows 
$\enca{(\abs x M)~\val} \bsim \enca {M\subst x\val}$
exploiting algebraic properties of replication; then the result follows by the
compositionality of the encoding and the congruence of $\bsim$.
\end{proof}
\fi

The structure of the 
proof of soundness of the encoding is similar to that for the
analogous property for 
Milner's call-by-name encoding with respect to Levy-Longo Trees \cite{cbn}. 
The details are however different, as in call-by-value both the encoding 
and the trees (the Eager Trees extended to handle $\eta$-expansion) are
more complex.  

We first  need to establish an operational
correspondence for the encoding.
For this we make use of 
an optimised encoding,
obtained  from
the one 
 in Figure~\ref{f:enc_internal} by performing a few (deterministic) reductions, 
at the price of 
 a more complex definition.  Precisely, in the encoding of
 application, 
we remove some of the initial
communications, including those with which a term signals that it has 
become a value.  Correctness of the optimisations  is established
by algebraic reasoning. 


Using the operational correspondence, we then show that the observables 
for bisimilarity in the encoding $\pi$-terms  imply the observables for 
$\eta$-eager normal-form bisimilarity in the encoded $\lambda$-terms.
The delicate cases are those
in which a branch in the tree of the terms is produced~---
case \reff{ie:split} of Definition~\ref{enfbsim}~--- and where
 an $\eta$-expansion    occurs~---  thus
a variable is equivalent to an abstraction,  
 cases~\reff{lab:five} and~\reff{def:enfe:case:eta} of Definition~\ref{enfebsim}. 

For the branching,  we exploit a decomposition property on $\pi$-terms, roughly allowing
us to derive from the bisimilarity of two parallel compositions the componentwise 
bisimilarity of the single components. 
 For the $\eta$-expansion, 
if $\enca x \bsim\enca {\abs zM}$,
where   $M\converges \evctxt[x\val]$, 
 we use a coinductive argument to
derive $\val\enfe z$ and $\evctxt [y]\enfe y$, for $y$ fresh; 
from this we  then obtain
 $\abs zM \enfe x$.  
\ifapp More details for the proof of soundness are given in
Appendix~\ref{app:fa}.\fi



\iffull
We only sketch the soundness proof, so to leave more space for the completeness proof. 
\fi

\iffull
To prove soundness we  need to establish an operational
correspondence for the encoding. For this it is easier to relate 
$\lambda$-terms and \Intp-terms via an optimised encoding, presented in Figure~\ref{f:opt_encod}. 

\DS{the Figure of the optimised encoding looks really horrible...}
This encoding is
obtained  from
the one 
 in Figure~\ref{f:enc_internal} by performing a few (deterministic) reductions, 
at the price of 
 a more complex definition.  Precisely, in the encoding of application 
we remove some of the initial
communications, including those with which a term signals to have become a value. 
Thus the encoding of an application  goes by a case analysis (4 cases)
on the occurrences of values in the subterms.
\fi
\iffull
 formulate an operational
correspondence between the encoding and \cbv terms, we need to remove
some of the internal transitions of the encoding, as they prevent the
use of the expansion $\exn$, which is paramount: the idea is that if
$M\reds N$, we want $\enca N$ to be faster than $\enca M$ (i.e., to
have less internal steps before a visible transition). In other words,
that a term in normal form would have an encoding ready to perform a
visible transition.

The general idea of the optimized encoding can be illustrated on two
particular cases. For $\encba{\val M}$, the corresponding equation is
the result of unfolding the original encoding, and performing one
(deterministic) communication. 
In the case of $\encba{x\val}$, not only do we unfold the original
encoding and reduce along deterministic communications, but we also
{\alert BLA}.

\daniel{why don't we want to optimize further the case of a beta-v
  redex, by unfolding the encoding of the abstraction and calculating
  the communication on $y$?}
\adrien{We could, and the optimized encoding we would get would be easier to 
write. However the operational correspondence would also be more annoying: 
for instance, showing that $\beta$ reductions in $ M$ induce $\tau$s in 
$\enca M$ is a bit heavier, i think. This is interesting because i did not
realise it was possible, but i don't think we can do much with this.                                                                                                                    }

\fi

\iffull
Using the operational correspondence, we can now show that the observables 
for bisimilarity in the encoding $\pi$-terms  imply the observables for 
$\eta$-eager normal-form bisimilarity in the encoded $\lambda$-terms.
The delicate cases are those
in which a branch in the trees of the terms is produced
(case \reff{ie:split} of Definition~\ref{enfbsim}) or where
an $\eta$-expansion    occurs (the two cases of Definition~\ref{enfebsim}). 
For the branching,  we exploit a decomposition property on $\pi$-terms, roughly allowing
us to derive from the bisimilarity of two parallel compositions the componentwise 
bisimilarity of the single components. 

\fi


\begin{lemma}[Soundness]\label{l:sound}
For any  $M,N \in \Lao$, if $\enca M\bsim\enca N$ then $M\enfe N$.
\end{lemma}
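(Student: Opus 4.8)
The plan is to establish soundness by building an explicit relation between $\lambda$-terms (related by the encoding) and proving it is an $\eta$-eager normal-form bisimulation, relying crucially on an operational correspondence for the encoding and on decomposition/extrusion properties of $\bsim$ in \Intp.

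First I would set up the \emph{operational correspondence}. The key technical device mentioned in the excerpt is an optimised encoding obtained from Figure~\ref{f:enc_internal} by firing the deterministic communications — in particular, the handshakes by which a subterm signals it has become a value. I would prove, by algebraic reasoning on replication and restriction, that the optimised encoding is $\bsim$-equivalent (indeed $\exn$-related, for the expansion preorder) to the original one, and then read off from it: (i) if $M \red M'$ then $\enca M \bsim \enca M'$ (this is Lemma~\ref{l:beta}); (ii) if $M \converges \abs x M'$ then $\enca M p \Arr{\outb p y} {!\inp y{x,q}.\enc M q}$ (up to $\bsim$/$\exn$); (iii) if $M \converges \evctxt[x\val]$ then $\enca M p$ weakly performs an output at $x$, exposing $\enca\val$ and (a linked version of) $\enca{\evctxt[z]}$; (iv) if $M \converges x$ then $\enca M p \Arr{\outb p y} \fwd y x$; and (v) if $M \diverges$ then $\enca M p$ has no visible action and its only behaviour is divergence. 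The content here is that the \emph{only} visible action of $\enca M p$ is an output at $p$ (for a value) or at the head variable $x$ (for $\evctxt[x\val]$), and its shape determines which clause of Definition~\ref{enfbsim}/\ref{enfebsim} applies.

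Next I would define the candidate relation
$$ \R \;\defi\; \{(M,N) \;\mid\; \enca M \bsim \enca N\} $$
and check it satisfies the clauses of Definition~\ref{enfebsim}. Given $M \RR N$: if $\enca M$ diverges, then since $\enca N \bsim \enca M$ it cannot produce a visible action either, so by the operational correspondence $N \diverges$ too — clause~(1). Otherwise $\enca M p$ has a visible action; by (ii)--(iv) this pins down whether $M$ converges to an abstraction, to $\evctxt[x\val]$, or to a variable, and the subject of the (unique) output label forces $N$ to converge to a term of \emph{matching} head shape (same variable $x$ in clauses~(2),(4); an abstraction in clause~(3)). For clause~(3) (both converge to abstractions $\abs x M'$, $\abs x N'$), after the output on $p$ the residuals are $!\inp y{x,q}.\enc{M'}q$ and $!\inp y{x,q}.\enc{N'}q$; feeding fresh $y,x,q$ gives $\enc{M'}q \bsim \enc{N'}q$, hence $M' \RR N'$. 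For clause~(2), after the output at $x$ we get parallel compositions built from $\enca\val$, $\enca{\evctxt[z]}$ (with a link in between); here I invoke the \emph{decomposition property} for \Intp\ (bisimilarity of two parallel compositions along a private/fresh channel yields componentwise bisimilarity of the components) to extract $\enca\val \bsim \enca\valp$ and $\enca{\evctxt[z]} \bsim \enca{\evctxtp[z]}$, i.e.\ $\val \RR \valp$ and $\evctxt[z] \RR \evctxtp[z]$.

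The genuinely delicate part — and the step I expect to be the main obstacle — is the \emph{mismatch} cases~\reff{lab:five}/\reff{def:enfe:case:eta}: $M \converges x$ while $N \converges \abs y N'$. Here $\enca M p \Arr{\outb p y}\fwd y x$, whereas $\enca N p \Arr{\outb p {y}} {!\inp {y}{y',q}.\enc{N'}q}$; for $\enca M \bsim \enca N$ to hold, the forwarder $\fwd y x$ must be bisimilar to this replicated input, which (by analysing the forwarder's unfolding $!\inp y{y',q}.\outb x{q',w}.(\fwd{q'}q \mid \fwd{w}{y'})$) forces $N'$, after receiving its argument, to expose an output at $x$ — i.e.\ $N' \converges \evctxt[x\val]$ — with $\enca{y'} \bsim \enca\val$ (so $y \RR \val$, reading $y'$ as $y$) and $\enca{z} \bsim \enca{\evctxt[z]}$ (so $z \RR \evctxt[z]$) for fresh $z$. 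Establishing this requires a careful coinductive argument inside the $\pi$-calculus to characterise which processes are bisimilar to a link, together with another use of decomposition/extrusion to split off the value and the context; this is exactly the "coinductive argument" alluded to in the excerpt. Once all clauses are discharged, $\R$ is an $\eta$-eager normal-form bisimulation, so $\R \subseteq {\enfe}$, which is the statement of Lemma~\ref{l:sound}.
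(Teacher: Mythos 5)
Your plan follows essentially the same route as the paper's proof: an operational correspondence established through the optimised encoding (whose correctness is shown by algebraic reasoning), then showing that the relation induced by $\bsim$ on encodings satisfies the clauses of Definition~\ref{enfebsim}, using a decomposition property of parallel compositions for the branching clause~\reff{ie:split} and a coinductive analysis of the forwarder for the $\eta$-mismatch clauses~\reff{lab:five} and~\reff{def:enfe:case:eta}. The only cosmetic difference is that in the mismatch case the paper phrases the outcome of the coinductive argument directly as $\val\enfe z$ and $\evctxt[y]\enfe y$ rather than as $\pi$-level bisimilarities of encodings, but this does not change the substance of the argument.
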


\subsection{Completeness and Full Abstraction}
\label{s:complete}
\iffull
We now show that if $M\R N$, for some \enfbsim $\R$, we have $\enca
M\bsim\enca N$. 
\fi
To ease the reader into the proof, we first show the completeness for $\enf$, rather than $\enfe$.

\paragraph{The system of equations.}
Suppose $\R$ is an eager normal-form bisimulation. 
We
define a (possibly infinite) system of equations $\eqcbv$, solutions of which will be 
obtained from the encodings of the pairs in $\R$.
We then use Theorem~\ref{thm:usol} and Theorem~\ref{t:transf:equations}
to show that $\eqcbv$ has a unique solution.

We assume an ordering on names and variables, so to be able
to view (finite) sets of these as  tuples. 
Moreover, if $F$ is an abstraction, say $\bind \tila P$, then  $\bind \tily
F$  is an abbreviation for its uncurrying $\bind{\tily,\tila}P$.

There is one equation $ X_{M,N} = E_{M,N}$ for each pair $(M, N)\in\R$.
The body $E_{M,N}$ is essentially the encoding
of the eager normal form
\iffull
 (or absence
thereof)
\fi
 of $M$ and $N$,  with the
variables of the equations representing the coinductive hypothesis.
To formalise this, 
we extend the encoding of the $\lambda$-calculus to equation variables 
by setting
\[ 
 \enca {X_{M,N}}{} \deff \bind {p}  \app{X_{M,N}}{\tily,p} 
 \hskip .5cm \mbox{ ~~ where $\tily = \fv{M,N}$}
 \enspace.
\]
\iffull
Given $M$,
$N$, and if $\til y=\fv{M,N}$, then 
$\bind {\til y,p} \enc M p$ and
$\bind {\til y,p} \enc N p$ are closed abstractions.
\fi
We now describe the equation $ X_{M,N} = E_{M,N}$, for
 $(M,N)\in\R$.
The equation is 
 parametrised on  the free variables
of  $M$ and $N $ (to ensure that the body $E_{M,N}$ 
 is a name-closed
abstraction) and an additional continuation
name (as all encodings of terms). Below $\tily =\fv{M,N}$.
\begin{enumerate}
\item If $M\converges x$ and $N\converges x$, then
 the equation
 is 
 the encoding of $x$:
\begin{align*}
X_{M,N}&=\bind {\til y} \enca x \\[\myptSmall]
&= \bind {\til y,p}  \outb p z. \fwd z x
\end{align*}
\iffull
Since $x$ is the \enform of $M$ and $N$, $x\in\til y$. 
Note that $\til y$ can contain more names, occurring free in $M$ or $N$.
\fi


\item 
If 
$M\diverges$ and
  $N\diverges$, then the equation uses a purely-divergent term;
\iffull

  (actually any term behaviourally indistinguishable with $\nil$ would
  do); 

\fi
we choose  the encoding of  $\Omega$: 
\iffull for this: \fi
\begin{align*}
X_{M,N} = \bind {\til y} \enca \Omega 
\end{align*}
%
\iffull
  Note that the encoding of any  diverging  term is
  bisimilar to $\zero$, 
  we could
  replace the body of this equation 
   with $\bind {\til y,p}
    \zero$.
\fi
\item If $M\converges \abs x M'$ and $N\converges \abs x N'$, then 
the equation encodes an abstraction whose body refers 
to the normal forms of  $M',N'$, via the variable $X_{M',N'}$:
\[
\begin{array}{rcl}
 X_{M,N}&=&\bind {\til y}
\enca{\abs x X_{M',N'}
}
\\[\myptSmall]
&= & \bind {\til y,p}
\outb p z .!\inp z
{x,q}.X_{M',N'}\param {\tilprime y,q}
\end{array}
\]





\item\label{item:decomp:eqcbv} If $M\converges \evctxt [x\val]$ and
  $N\converges \evctxt' [x\valp]$, 
  we separate the evaluation contexts and the values, as in 
  Definition~\ref{enfbsim}. 
  In the body of the equation, this is achieved by: $(i)$ rewriting
  $\evctxt[x\val]$ into $(\abs z\evctxt [z])(x\val)$, for some fresh $z$,
  and similarly for $\evctxt'$ and $\valp$ (such a transformation is
  valid for $\enf$); and $(ii)$ referring to the variable for the
  evaluation contexts, $X_{\evctxt[z],\evctxt'[z]}$,
and to the variable for the values,  $X_{\val,\valp}$.
This yields the equation (for $z$ fresh):
\begin{align*}
 X_{M,N} = \bind {\til y} 
\enca {(\abs z X_{\evctxt[z],\evctxt'[z]}
)
~(x~X_{\val,\valp}
  )}
\end{align*}
\end{enumerate}

As an example, 
 suppose 
 $(I,\abs
 xM)\in\R$, 
where $I=\abs x x$ and $M= (\abs {zy}z) x x'$. 
The
 free variables of $M$ are $x$ and $x'$. 
We  obtain
the following equations: 
\iffull
 (assuming $x$ is before $x'$ in the
ordering of variables):
\fi
\begin{enumerate}
\item $\begin{aligned}[t]
X_{I,\abs x M}
&=\bind {x'}\enca {\abs x X_{x,M}
} 
\\
&=\bind {x',p} \outb p y .!\inp y {x,q}.X_{x,M}\param {x,x',q}
\end{aligned}$
\item $\begin{aligned}[t]
X_{x,M
}
&=\bind {x,x'}\enca x \\ 
&=\bind{x,x',p} \outb p y.\fwd y x
\end{aligned}$
\end{enumerate}

\paragraph{Solutions of \eqcbv.}
Having set the system of  equations for $\R$, we now  define 
solutions for it from the encoding of the pairs in $\R$.

We can view the relation $\R $  as an ordered
sequence of pairs (e.g., assuming some lexicographical  ordering). 
Then $\R_i$ indicates the tuple obtained by
projecting the   pairs in $\R$ onto the $i$-th component ($i=1,2$).
Moreover $(M_j,N_j)$  is the $j$-th pair in $\R$, and $\til{y_j}$ is 
$\fvars {M_j,N_j}$.

We write $\encOO\R$   for the closed abstractions 
resulting  from the
 encoding of $\R_1$, i.e., the tuple 
whose $j$-th component is 
$ 
\bind {\til{y_j}}\enca{M_j}$, and similarly for 
$\encTO\R$.
\iffull
We can extract from $\R$ two solutions of \eqcbv as follows:
 \begin{definition}  
Given an eager normal-form bisimulation $\R$, 
 we define \iffull $\encleft \R$ as follows: \fi
$$
\begin{array}{rcl}
     \encleft \R&\scdef&\{\bind {\til y}\enca M\st \exists N, M\R N\\
  && \text{
  and } \til y \text{ is the ordering of }\fvars {M,N}\}
\end{array}
$$ 
$\enca {\R_2}$ is defined similarly, based on the right-hand side of
the relation. 
\end{definition}
\fi

\begin{lemma}\label{l:sol}
\iffull
If $\R$ is an eager normal-form bisimulation, 
then
\fi
 $\encleft{\R}$
 and $\encright{\R}$
 are solutions of \iffull the system of equations\fi
 \eqcbv.
\end{lemma}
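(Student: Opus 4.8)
The plan is to verify, for each of the four kinds of equations in the system $\eqcbv$, that the tuple $\encleft{\R}$ (and symmetrically $\encright{\R}$) satisfies the defining equation up to $\bsim$. Fix a pair $(M,N)\in\R$ with associated equation $X_{M,N}=E_{M,N}$ and write $\tily=\fv{M,N}$; we must show
\[
\bind{\tily}\enca{M}\ \bsim\ E_{M,N}[\encleft{\R}]\enspace,
\]
where $E_{M,N}[\encleft{\R}]$ is the body $E_{M,N}$ with each equation variable $X_{M',N'}$ replaced by the corresponding component $\bind{\til{y'}}\enca{M'}$. Since $\R$ is an eager normal-form bisimulation, $M$ falls into exactly one of the cases (1)--(4) of the construction; in each case $E_{M,N}$ is (the $\tily$-abstraction of) the encoding of the eager normal form of $M$, with $\lambda$-subterms replaced by equation variables. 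By substituting the solution components back in, $E_{M,N}[\encleft{\R}]$ becomes the encoding of that same eager normal form with the variables instantiated by genuine encodings of the relevant $\lambda$-terms (e.g.\ in case (3), $\bind{\tily}\enca{\abs{x}M'}$ with $\enca{X_{M',N'}}$ replaced by $\enca{M'}$, which is literally $\bind{\tily}\enca{\abs{x}M'}$ by compositionality of the encoding). So the content of the lemma reduces to: \emph{the encoding of $M$ is $\bsim$ to the encoding of the eager normal form of $M$.}

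That reduction is exactly what Lemma~\ref{l:beta} (validity of $\betav$-reduction) gives us, together with the congruence of $\bsim$: if $M\converges M_0$ then $M\reds M_0$, so $\enca{M}\bsim\enca{M_0}$, and abstracting over $\tily$ preserves $\bsim$. Concretely: in case (1), $M\converges x$ with $x\in\tily$, so $\enca{M}\bsim\enca{x}$ and $E_{M,N}[\encleft{\R}]=\bind{\tily}\enca{x}$; in case (2), $M\diverges$, so $\enca{M}\bsim\nil\bsim\enca{\Omega}$ (both diverging terms have encodings bisimilar to $\nil$, as noted after the encoding definition), matching $E_{M,N}[\encleft{\R}]=\bind{\tily}\enca{\Omega}$; in case (3), $M\converges\abs{x}M'$, so $\enca{M}\bsim\enca{\abs{x}M'}$, and after instantiation $E_{M,N}[\encleft{\R}]=\bind{\tily}\enca{\abs{x}M'}$; in case (4), $M\converges\evctxt[x\val]$, we additionally use that the transformation $\evctxt[x\val]\mapsto(\abs{z}\evctxt[z])(x\val)$ is sound for $\bsim$ on the encoded side — i.e.\ $\enca{\evctxt[x\val]}\bsim\enca{(\abs{z}\evctxt[z])(x\val)}$, which follows from Lemma~\ref{l:beta} since the two $\lambda$-terms are $\betav$-interconvertible — and then compositionality gives that instantiating the variables $X_{\evctxt[z],\evctxt'[z]}$ and $X_{\val,\valp}$ by the encodings of $\evctxt[z]$ and $\val$ yields exactly $\bind{\tily}\enca{(\abs{z}\evctxt[z])(x\val)}$.

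The symmetric argument for $\encright{\R}$ is identical, using the second components $N$ of the pairs; note that the construction deliberately indexes equations by $(M,N)$ and parametrises over $\fv{M,N}$ precisely so that both $\encleft{\R}$ and $\encright{\R}$ have components of the right sort to plug into the same system. The main obstacle, and the only place real work hides, is the bookkeeping in case (4): one must check that the free-variable tuples line up (the fresh $z$ does not clash with $\tily$, and $\fv{\evctxt[z],\evctxt'[z]}$, $\fv{\val,\valp}$ are sub-tuples of $\tily$ in the fixed ordering), so that the uncurrying conventions $\bind{\tily}F$ and the parameter lists $X_{\dots}\param{\dots}$ in $E_{M,N}$ match the components of $\encleft{\R}$ after substitution; and that the optimisation/$\betav$-steps used to pass from $\enca{M}$ to the encoding of its normal form are all covered by Lemma~\ref{l:beta} and compositionality. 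None of this is deep — it is a case analysis closing under $\bsim$ — but it is where the proof must be written carefully rather than waved through.
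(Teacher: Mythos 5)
Your overall structure matches the paper's proof: a case analysis on the shape of the eager normal form of $M_j$, using validity of $\betav$-reduction (Lemma~\ref{l:beta}) plus compositionality to replace $\enca{M}$ by the encoding of its normal form, and the decomposition law $\enca {(\abs z\evctxt [z])(x\val)} \bsim \enca {\evctxt[x\val]}$ in the stuck-application case. However, there is a genuine gap in how you justify that last law, which is precisely the heart of the lemma. You claim it ``follows from Lemma~\ref{l:beta} since the two $\lambda$-terms are $\betav$-interconvertible.'' They are not: in call-by-value, the redex $(\abs z\evctxt[z])(x\val)$ cannot be fired, because its argument $x\val$ is a stuck application, not a value (recall that $\betav$-reduction only substitutes values). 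Both $\evctxt[x\val]$ and $(\abs z\evctxt[z])(x\val)$ are already eager normal forms, neither reduces to the other, and so Lemma~\ref{l:beta} gives you nothing here. (They are related by $\enf$, but using that would be circular, since this lemma is a step in proving completeness for $\enf$.)

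In the paper this equality is law~\reff{eq:l:solMAIN}, and it is established by direct algebraic reasoning on the $\pi$-calculus encodings, not by any $\lambda$-level reduction argument; this is why the proof singles it out as ``the most interesting case.'' It is also exactly the kind of law discussed in Remark~\ref{r:upto}: it introduces extra reduction steps on the encoded side (it goes in the ``wrong direction'' for expansion), which is why up-to-context techniques fail for call-by-value and why the unique-solution method is used instead. So your case (4) needs a separate $\pi$-level proof of law~\reff{eq:l:solMAIN} (manipulating the encodings of application, abstraction and the links); as written, the step you wave through with Lemma~\ref{l:beta} is unsupported, while the remaining cases (1)--(3), and the divergence case via operational correspondence, are essentially as in the paper.
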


\begin{proof}
We show that each component of  $\encOO\R$ is solution of the
corresponding equation, i.e., 
for the $j$-th component 
we show 
$\bind { \til{y_j}} \enca{M_j}\bsim \Eqsing{M_j,N_j} {\encOO {\R}}$.

We reason by cases over the shape of the 
 eager normal form of $M_j,N_j$. 
\iffull
By the validity of $\betav$, if $M\converges \abs x M'$, then
$\enca M \bsim \enca {\abs x M'}$).

if
$M\converges \abs x M'$, we simply have to show that
$\enca M \bsim \enca {\abs x M'}$.
\fi
The most interesting case is when $M_j\converges \evctxt [x\val]$, in which
case we use the following equality (for $z$ fresh), which is proved
 using
 algebraic reasoning:
\begin{equation}  \label{eq:l:solMAIN}
\enca {(\abs z\evctxt [z])(x\val)} \bsim  \enca {\evctxt[x\val]}
\enspace.
\end{equation}
%
We also exploit the validity of $\betav$ for $\bsim$ (Lemma~\ref{l:beta}).
\iffull
If $M\diverges$, 
we need to show that any diverging term has an encoding 
equivalent to $\zero$ (hence $\enca M \bsim \enca \Omega$). 
This is a consequence of the operational correspondence from 
Section \ref{ss:enc_pii}. 
\fi
\ifapp More details are found in  Appendix \ref{app:complete}. \fi
\end{proof}

\paragraph{Unique solution for \eqcbv.}
We use Theorem~\ref{t:transf:equations} to prove uniqueness 
of solutions for \eqcbv. 
The only delicate requirement is the one on divergence for the syntactic
solution.  
We introduce for this
 an auxiliary system of equations,  \eqcbvp, that extends \eqcbv, and
 whose syntactic solutions have no $\tau$-transition and hence trivially
 satisfy the requirement.  
Like the original system 
\eqcbv, so the new one 
 \eqcbvp\ is defined by inspection of the pairs in  $\R$;  
in 
 \eqcbvp, however, a pair of $\R$ may sometimes yield more than one
 equation. 
Thus, let $(M,N)\in \R$ with $\til y=\fvars {M,N}$.
\begin{enumerate}
\item When $M\Uparrow$ and  $N\Uparrow$, the equation is
\begin{align*}
  X_{M,N} = \bind{\til y,p} \zero
  \enspace.
\end{align*}
\item When $M\converges \val$ and $N\converges \valp$, we introduce a
  new  equation variable $\XV_{\val,\valp}$  and a new equation; 
this will allow us, in the following step (3), to 
perform some  optimisations. The equation is
%
\begin{align*}
 X_{M,N} = \bind {\til y,p} \outb p z .  \XV_{\val,\valp}\param{z,\tilprime y}
  \enspace,
\end{align*}
and we have, accordingly, the two following additional equations
corresponding to the cases where values are functions or variables:
%
%
\[
\begin{array}{rcl}
\XV_{\abs x M',\abs x N'}&=& \bind {z,\til y}!\inp z {x,q}
  . X_{M',N'}\param{{\tilprime y},q}
\\[\mypt]
\XV_{x,x}&=&\bind {z,x} \fwd z x
\end{array} 
\]

\item When $M\converges \evctxt[x\val]$ and $N\converges\evctxt[x\valp]$, we
refer 
to $\XV_{\val,\valp}$, instead of $X_{\val,\valp}$, so  to remove all
initial reductions in the corresponding equation for 
\eqcbv.  The first action thus becomes 
  an output: 
\[
  \begin{array}{rcl}
  X_{M,N} &=&  \\[\myptSmall]
\multicolumn{3}{r}{ 
\bind {\til y,p} \outb x
              {z,q}.(\XV_{\val,\valp}\param{z,{\tilprime y}}
 |\inp q
       w.X_{\evctxt[w],\evctxt'[w]}\param{{\tilpprime y},p})
}
  \end{array}
\]
\end{enumerate}



Lemmas~\ref{l:div_aux} and~\ref{l:div} 
are needed to apply Theorem~\ref{t:transf:equations}.
(In the statement of Lemma~\ref{l:div_aux},  `extend' is as by Definition~\ref{d:extend}.)

\begin{lemma}\label{l:div_aux}The system of equations \eqcbvp extends 
the system of equations \eqcbv.
\end{lemma}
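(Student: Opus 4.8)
The plan is to verify Definition~\ref{d:extend} directly: exhibit the index set $J$ of ``extra'' components of \eqcbvp{} and show that deleting those components from any solution of \eqcbvp{} yields a solution of \eqcbv. First I would set $J$ to be the set of indices corresponding to the auxiliary equation variables $\XV_{\val,\valp}$ introduced in step (2) of the construction of \eqcbvp{} (both subcases: $\XV_{\abs x M',\abs x N'}$ and $\XV_{x,x}$). Every remaining variable of \eqcbvp{} is an $X_{M,N}$ with $(M,N)\in\R$, so the non-$J$ components of \eqcbvp{} are in bijection with the variables of \eqcbv; under this bijection I must check that a solution of \eqcbvp{} restricted to these components satisfies each equation of \eqcbv{} up to $\bsim$.

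The core of the argument is a case analysis on the \enform{}s of $M$ and $N$, matching the four clauses used to build \eqcbv{} against the corresponding clauses of \eqcbvp. For the variable case ($M\converges x$, $N\converges x$) and the abstraction case the two systems already have, modulo unfolding, the same bodies, so the check is immediate. For the divergence case, \eqcbvp{} uses $\bind{\til y,p}\zero$ while \eqcbv{} uses $\bind{\til y}\enca\Omega$; here I invoke the fact (used already in Lemma~\ref{l:sol} and coming from the operational correspondence of Section~\ref{ss:enc_pii}) that $\enca\Omega\bsim\zero$. The two genuinely substantive cases are (2) value and (3) $\evctxt[x\val]$ of \eqcbvp: in each, \eqcbvp{} has factored the body through the auxiliary variable $\XV_{\val,\valp}$ and pushed past some initial deterministic reductions. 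So I would take an arbitrary solution $\til F$ of \eqcbvp{}, read off from the equations for $\XV_{\val,\valp}$ what the component $G_{\val,\valp}:=F_{\XV_{\val,\valp}}$ must be up to $\bsim$ (namely $\bind{z,\til y}\, !\inp z{x,q}.\,F_{X_{M',N'}}\param{\tilprime y,q}$, resp.\ $\bind{z,x}\fwd z x$), substitute this back into the equations for the $X_{M,N}$ components, and reverse the deterministic communications to recover exactly the body of the corresponding \eqcbv{} equation up to $\bsim$ --- using the validity of $\betav$ (Lemma~\ref{l:beta}) and the algebraic laws about bound outputs, links and replication that already underpin Lemma~\ref{l:sol} and equation~\reff{eq:l:solMAIN}. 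Conversely (for the ``can be obtained from'' direction to be non-vacuous, i.e.\ to know \eqcbvp{} actually has solutions extending those of \eqcbv), I note that $\encleft\R$ together with the obvious choice $\bind{z,\til y}\,!\inp z{x,q}.\,\encleft{\R}\param{\ldots}$ etc.\ for the $J$-components is a solution of \eqcbvp, but strictly speaking Definition~\ref{d:extend} only asks that \emph{every} solution of \eqcbv{} arises this way, which the construction above also gives by running it backwards.

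The main obstacle I expect is bookkeeping rather than conceptual: keeping the parameter lists straight (the $\til y=\fvars{M,N}$, the primed copies $\tilprime y$, $\tilpprime y$, and the threading of the continuation name $p$ through uncurried abstractions), and making sure that the ``deterministic reductions'' that separate a \eqcbvp{} body from the corresponding \eqcbv{} body are genuinely confluent/unobservable so that the equivalence is $\bsim$ and not merely barbed congruence --- this is where the soundness of the optimised encoding from Section~\ref{ss:enc_pii} is really being reused. A secondary subtlety is the $\XV_{x,x}$ subcase, where the ``value'' is a variable: one must check the link $\fwd z x$ composes correctly with the surrounding $\outb p z.(-)$ to reproduce $\enca x=\bind{\til y,p}\outb p z.\fwd z x$, which is immediate but easy to overlook. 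None of the steps requires new machinery beyond Lemma~\ref{l:beta}, the algebraic laws, and the $\enca\Omega\bsim\zero$ fact; the proof is essentially a structural re-reading of the two definitions side by side.
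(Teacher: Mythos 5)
Your proposal is correct and takes essentially the same route as the paper's (much terser) proof: \eqcbvp{} is \eqcbv{} with the equations modified and the auxiliary $\XV_{\val,\valp}$ equations added, and one checks by algebraic reasoning (resolving the $\XV$ indirection, undoing the elided deterministic communications via the laws behind Lemma~\ref{l:sol} and law~\reff{eq:l:solMAIN}, and using $\enca\Omega\bsim\zero$ for the divergence case) that the common equations have the same solutions. Just note that Definition~\ref{d:extend} literally requires the direction you treat last — every solution of \eqcbv{} extends to one of \eqcbvp{} by instantiating the $\XV$-bodies — so that should be foregrounded, but as you observe both directions rest on the same body-level equivalences.
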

\begin{proof}
The new system 
\eqcbvp\ is obtained from  \eqcbv\  
by modifying the equations  and 
adding new ones.
Ones shows that the solutions to the common equations are the same, 
using  algebraic reasoning.
\end{proof}

\iffull

\begin{pfsketch}
The new system 
\eqcbvp\ is obtained from  \eqcbv\  
by modifying the equations  and 
adding new ones.
We show that the solutions to the common equations are the same, 
using  algebraic reasoning. 
{\alert [More details are 
  found in Appendix \ref{app:complete}.]}
\adrien{can we remove previous sentence? The only relevant things we could add in  
appendix are long calculations that are not very interesting (though i do have 
them somewhere), so maybe we should not develop this part too much in appendix}
\end{pfsketch}
\fi
\begin{lemma}\label{l:div}
\iffull The system of equations \fi
 \eqcbvp has a unique solution.
\end{lemma}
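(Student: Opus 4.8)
The plan is to apply Theorem~\ref{thm:usol} to the system \eqcbvp. By that theorem, it suffices to check two things: that \eqcbvp\ is \emph{guarded}, and that its syntactic solutions are \emph{divergence-free}. Guardedness is immediate from the shape of the three kinds of equations: in case (1) the body is $\zero$, which contains no variables at all; in case (2), the equation for $X_{M,N}$ has its variable $\XV_{\val,\valp}$ underneath the output prefix $\outb p z$, the equation for $\XV_{\abs x M',\abs x N'}$ has $X_{M',N'}$ underneath the input prefix $\inp z{x,q}$, and the equation for $\XV_{x,x}$ has body $\fwd z x$, a constant with no equation variables; in case (3), both $\XV_{\val,\valp}$ and $X_{\evctxt[w],\evctxt'[w]}$ occur underneath the output prefix $\outb x{z,q}$ (the latter additionally under the input $\inp q w$). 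So every occurrence of an equation variable in a body lies beneath a prefix, hence \eqcbvp\ is guarded.

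The substantive point is divergence-freedom of the syntactic solutions $\KEi{\EE'_\R}$. Here I would observe that \eqcbvp\ has been specifically engineered so that the body of each equation, \emph{before} any variable is reached, performs no internal communication: case (1) bodies are $\zero$; case (2) and case (3) bodies begin with a visible output prefix, and the auxiliary $\XV$-equations begin with a visible input prefix (or are a forwarder constant). Recall that the syntactic solution $\KEi{\EE'_\R}$ is the constant defined by $\KEi{\EE'_\R}\Defi E'_i[\KE']$, i.e.\ the body with each variable replaced by the corresponding syntactic-solution constant. The key claim is then: in any maximal transition sequence from (a ground instance of) such a constant, every $\tau$-step is immediately preceded, somewhere along the sequence, by a visible action, with no possibility of two consecutive $\tau$-steps accumulating into an infinite $\tau$-run. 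More precisely, I would show by inspection that each $E'_i$ is a prefixed process (or $\zero$, or a forwarder), so unfolding $\KEi{\EE'_\R}$ once produces either $\zero$, a process whose only initial transition is a visible input/output leading to another such constant (after substituting the fresh object names), or a forwarder $\fwd z x$; the forwarders themselves, being links in \Intp/\alpi, generate at most the standard forwarder transitions, which are again headed by visible actions (and the continuation-name forwarders are non-replicated, so they cannot loop). Consequently a $\tau$-transition can only arise from a context closing two such syntactic solutions together, never internally to a single $\KEi{\EE'_\R}$; and since between any two $\tau$'s the process must expose a visible prefix, no infinite sequence of eventually-only-$\tau$ transitions exists. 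Hence $\til{\KEi{\EE'_\R}}$ is divergence-free.

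With guardedness and divergence-freedom of the syntactic solutions established, Theorem~\ref{thm:usol} (valid in both \Intp\ and \alpi) applies directly and yields that \eqcbvp\ has a unique solution for $\bsim$, which is the statement of Lemma~\ref{l:div}.

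I expect the main obstacle to be the divergence-freedom argument, and specifically getting the bookkeeping right for the forwarder constants: one must confirm that neither the replicated forwarder $\bind{x,y}!\inp x{p,z}.\outb y{q,w}.(\fwd qp|\fwd wz)$ nor the continuation forwarder $\bind{p,q}\inp p x.\outb q y.\fwd yx$ can, on its own, produce an infinite internal reduction (the replicated one has a replication, so one must check that its unfolding only ever fires on an \emph{external} input at $x$, never on a self-loop, because the emitted name $y$ differs from $x$; the continuation one is guarded by an input and hence inert until it receives). This is a routine but slightly delicate case analysis. A secondary subtlety, flagged already in the surrounding text, is purely methodological: we do \emph{not} try to compare the syntactic solutions of \eqcbv\ and \eqcbvp\ directly (that would fail, as the Remark after Theorem~\ref{t:transf:equations} shows); divergence-freedom is checked only for \eqcbvp, and the transfer to \eqcbv\ is then handled separately by Lemma~\ref{l:div_aux} together with Theorem~\ref{t:transf:equations}.
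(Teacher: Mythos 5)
Your overall strategy is exactly the paper's: check that \eqcbvp\ is guarded, check that its syntactic solutions are divergence-free, and invoke Theorem~\ref{thm:usol}; the guardedness inspection and the closing methodological remark (divergence is checked only for \eqcbvp, the transfer to \eqcbv\ being delegated to Lemma~\ref{l:div_aux} and Theorem~\ref{t:transf:equations}) are fine and match the paper.

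The gap is in the divergence-freedom argument. Your inspection claims that one unfolding followed by the initial visible transition leads ``to another such constant'', and that consequently a $\tau$ could only come from an external context, ``never internally to a single'' syntactic solution. But the derivatives of a single syntactic solution are genuinely parallel compositions: in case (3), after the output $\outb x{z,q}$ one is left with $\XV_{\val,\valp}\param{z,\tilprime y} \mid \inp q w.X_{\evctxt[w],\evctxt'[w]}\param{\tilpprime y,p}$, and the non-continuation forwarder itself spawns $\fwd qp \mid \fwd wz$; so the possibility of an \emph{internal} synchronisation between these components after several visible steps is precisely what must be excluded, and inspecting initial prefixes or checking that a single forwarder cannot self-loop does not do it. (Your auxiliary remark that ``between any two $\tau$'s the process must expose a visible prefix'' would not suffice anyway: by the paper's definition, a divergence is an infinite $\tau$-run possibly \emph{after} visible actions, so one needs absence of infinite $\tau$-sequences after arbitrarily many visible actions, not just interleaving.) The paper's proof closes this with a simple global invariant: in the equations of \eqcbvp\ every name, bound or free, occurs either only in inputs or only in outputs, and since the transition system is ground, input bound names are only instantiated by fresh names, so this single-polarity property is preserved along transitions; hence no name ever occurs in both subject polarities in a derivative of a syntactic solution, and no $\tau$-transition can ever be performed, after any number of visible actions. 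Replacing your case-by-case unfolding claim with this invariant (which is the reason the $\XV$-equations were introduced in the first place) repairs the argument; as it stands, the key step is asserted rather than proved.
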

\begin{proof}
Divergence-freedom for the
 syntactic solutions
  of \eqcbvp\ 
holds because in
the equations each name (bound or free) can appear either only in inputs
or only in outputs. As a consequence, since the labelled transition system is ground (names
are only replaced by fresh ones), no $\tau$-transition can ever be
performed, after any number of visible actions. 
Further,  \eqcbvp is  guarded. Hence   we can apply Theorem~\ref{thm:usol}.
%
%
\iffull
\DS{ for the full paper it would be good here to have a general result
for pi: if in a process each name, free or bound, 
may appear either only in inputs
or only in outputs, then no $\tau$-transition is ever possible.
However the statement needs care because we also have constants. 
}

  We observe that in the syntactic solutions 
  of \eqcbvp, linear names ($p,q,\dots$) are used exactly once in
  subject position, and non-linear names ($x,y,w,\dots$), when used in
  subject position, are either used exclusively in input or
  exclusively in output. 
  Since we work using ground transitions, this is enough to deduce
  absence of divergences. It is easy to check that
  \eqcbv is strongly guarded, 
  hence we can apply Theorem~\ref{thm:usol}.
\fi
\end{proof}

\iffull
Hence, by Theorem~\ref{t:transf:equations}, \eqcbv\ has a unique solution.

A more direct proof of Lemma~\ref{l:div} would have been possible, by
reasoning coinductively over the \enfbsim defining
the system of equations. 
\daniel{Above, the argument is incomplete | for now I commented out the
  explanations, which were not clear.}
  \adrien{Actually, a direct proof would be really bothersome (probably 
  possible, but it's not like i checked), so i propose to remove above
  sentence altogether. Otherwise it will be handwaving (but i can certainly
  write some better handwaving).}
\fi

\begin{lemma}[Completeness for $\enf$]\label{l:complete_enf}
  $M\enf N$ implies $\enca M \bsim \enca N$,
for any $M,N \in\Lao$. 
\end{lemma}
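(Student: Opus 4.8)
The plan is to assemble the statement from the three lemmas just proved, in the natural order. Given $M \enf N$, I first take $\R$ to be \emph{an} eager normal-form bisimulation containing the pair $(M,N)$ — for instance $\enf$ itself, which is an eager normal-form bisimulation by Definition~\ref{enfbsim}. This fixes the system of equations $\eqcbv$ built by inspection of the pairs of $\R$ as described above.

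Next I invoke Lemma~\ref{l:sol}: since $\R$ is an eager normal-form bisimulation, both $\encleft{\R}$ and $\encright{\R}$ are solutions of $\eqcbv$ for $\bsim$. Then I must show that $\eqcbv$ has a unique solution for $\bsim$. This is where Lemma~\ref{l:div_aux} and Lemma~\ref{l:div} come in: by Lemma~\ref{l:div_aux}, $\eqcbvp$ extends $\eqcbv$ in the sense of Definition~\ref{d:extend}; by Lemma~\ref{l:div}, $\eqcbvp$ has a unique solution. Hence by Theorem~\ref{t:transf:equations}, $\eqcbv$ has a unique solution as well.

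Combining these: $\encleft{\R}$ and $\encright{\R}$ are two solutions of a system with a unique solution, so they are componentwise bisimilar, $\encleft{\R} \bsim \encright{\R}$. Looking at the component indexed by the pair $(M,N)$, this gives $\bind{\til y}\enca{M} \bsim \bind{\til y}\enca{N}$ where $\til y = \fvars{M,N}$; instantiating the abstracted names $\til y,p$ with fresh names (as by Remark~\ref{r:abs}, recalling that $\bsim$ on abstractions is defined via fresh ground instantiation) yields $\enca{M} \bsim \enca{N}$, as required.

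The only subtlety — and the step I would present with a little care — is purely bookkeeping: checking that $(M,N)$ genuinely occurs as one of the equations/components, i.e.\ that the indexing of $\R$ as an ordered sequence of pairs does include $(M,N)$, and that the $j$-th components of $\encleft{\R}$ and $\encright{\R}$ are exactly $\bind{\til{y_j}}\enca{M_j}$ and $\bind{\til{y_j}}\enca{N_j}$. There is no real mathematical obstacle here, since all the hard work (validity of $\betav$, the algebraic equality~\reff{eq:l:solMAIN}, divergence-freedom of the syntactic solutions of $\eqcbvp$, and Theorem~\ref{t:transf:equations}) has already been done in the preceding lemmas; the proof of Lemma~\ref{l:complete_enf} is essentially just the act of chaining them together.
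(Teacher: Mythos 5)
Your proof is correct and follows exactly the paper's own argument: fix an eager normal-form bisimulation $\R$ containing $(M,N)$, use Lemmas~\ref{l:div_aux} and~\ref{l:div} with Theorem~\ref{t:transf:equations} to get uniqueness of solutions of \eqcbv, invoke Lemma~\ref{l:sol} to see that $\encleft\R$ and $\encright\R$ are both solutions, and read off the component for $(M,N)$ to conclude $\enca M \bsim \enca N$. The extra bookkeeping remarks (choosing $\R=\enf$, fresh-name instantiation of the abstracted parameters) are fine and do not change the argument.
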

 
\begin{proof}
Consider 
  an eager normal-form bisimulation $\R$, and 
the corresponding  systems of equations \eqcbv\ and  \eqcbvp.
  Lemmas~\ref{l:div} and~\ref{l:div_aux} allow us to apply
  Theorem~\ref{t:transf:equations} and deduce that \eqcbv\ has a unique
  solution.
By Lemma \ref{l:sol},
  $\encOO \R$ and $\encTO\R$ are solutions of \eqcbv. 
Thus, from $M\RR N$,
 we deduce
$\bind {\til y} \enca M  \bsim \bind{\til y} \enca N$, 
  where $\til y=\fvars {M,N}$. 
Hence also 
$ \enca M  \bsim \enca N$.



\end{proof}

\paragraph{Completeness for $\enfe$.} 
The proof  for $\enf$ %
is extended to $\enfe$, maintaining  its
 structure. We highlight the main differences.




We enrich \eqcbv\ with the equations corresponding to the two
additional clauses of $\enfe$ (Definition~\ref{enfebsim}).
When $M\converges x$ and $N\converges \abs z N'$, where $N'\enfe xz$, 
we proceed as in case~\ref{item:decomp:eqcbv} of the definition of \eqcbv,
 given that  
$N\enfe \abs z \left( (\abs w\evctxt [w]) (x\val)\right)$;  the
equation is:
\begin{align*}
X_{M,N} =\bind {\til y}
  \enca{\abs z \left((\abs w X_{w,\evctxt[w]}) ~ (x~X_{z,\val}
  )\right)}
  \enspace.
\end{align*}
%
We proceed likewise for the symmetric case.


In the optimised equations that we use to derive unique solutions, 
we add the following equation (relating
values),  as well as its symmetric counterpart:
%
\begin{align*}
        \begin{array}{rcl}
\XV_{x,\abs z N'}
&=&\bind {y_0,\til y}
\\
\multicolumn{3}{r}{ 
    !\inp {y_0}{z,q}. \outb x {z',q'}. (\XV_{z,\val}\param{z',\tilprime y }
         |\inp {q'}{w}.X_{w,\evctxt[w]}\param{\tilpprime y,q})
         \enspace.
}        \end{array}
\end{align*}

\iffull

We can then prove unique solution for \eqcbv, as is done with
 Lemmas \ref{l:div_aux} and \ref{l:div}.

\fi
Finally, 
to
prove that $\encOO \R$ and $\encTO \R$ are 
 solutions of \eqcbv, we 
 show that, whenever $M\converges x$ and $N\converges \abs z N'$, 
 with $N'\converges\evctxt[x\val]$:
\begin{align*}
 \enca M  &\bsim \Eqsing{M,N} {\encleft\R}\param{\til y}\\[\myptSmall]
          &= \enca {\abs z\left( (\abs w w)(x z)\right)}
            \quad\mbox{}
\end{align*}
\mbox{and} 
\begin{align*}
 \enca N  &\bsim \Eqsing{M,N} {\encright\R}\param{\til y}\\[\myptSmall]
          &= \enca {\abs z\left( (\abs w \evctxt [w])(x \val)\right)}
            \enspace.
\end{align*}
To establish the former, we 
use algebraic reasoning to infer
 $\enca x \bsim \enca {\abs z xz}$. 
For the latter, we use law~\reff{eq:l:solMAIN} (given in the proof
of Lemma~\ref{l:sol}).
\ifapp More details are provided in Appendix \ref{app:complete}. \fi

\iffull
Given the previous results, we can reason as for the  proof of Proposition  
\ref{p:complete} to establish completeness.

\fi

\begin{lemma}[Completeness for $\enfe$] \label{l:complete}
  For any $M,N$ in $\Lao$, 
  $M\enfe N$ implies $\enca M \bsim \enca N$.
\end{lemma}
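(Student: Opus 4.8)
The plan is to mirror exactly the structure already used for Lemma~\ref{l:complete_enf} (completeness for $\enf$), now starting from an $\eta$-eager normal-form bisimulation $\R$ rather than a plain eager normal-form bisimulation. First I would take $\R$ with $M\RR N$ and build the system of equations $\eqcbv$ associated to $\R$, using the four clauses already given for the $\enf$ case together with the two new clauses sketched just above the statement: the clause for $M\converges x$, $N\converges\abs z N'$ with $N'\converges\evctxt[x\val]$ (using that $N\enfe\abs z((\abs w\evctxt[w])(x\val))$ and referring to the variables $X_{w,\evctxt[w]}$ and $X_{z,\val}$), and its symmetric counterpart. This gives a well-defined, guarded, possibly-infinite system whose bodies are name-closed abstractions, exactly as before.

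Next I would establish the analogue of Lemma~\ref{l:sol}: that $\encleft\R$ and $\encright\R$ are solutions of $\eqcbv$. For the old clauses this is unchanged; for the two new clauses one has to check, component by component, the bisimilarity $\bind{\til y}\enca M\bsim\Eqsing{M,N}{\encleft\R}\param{\til y}$ and the corresponding statement for $\encright\R$. Concretely, on the left one must show $\enca x\bsim\enca{\abs z\left((\abs w w)(xz)\right)}$, which follows by algebraic reasoning and the link/$\eta$ law $\abs y xy = x$ validated by the encoding (Section~\ref{s:enc:cbv}); on the right one must show $\enca N\bsim\enca{\abs z\left((\abs w\evctxt[w])(x\val)\right)}$, which follows from validity of $\betav$ (Lemma~\ref{l:beta}) and from law~\reff{eq:l:solMAIN} proved inside the proof of Lemma~\ref{l:sol}. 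The symmetric clause is handled the same way with the roles of $M$ and $N$ swapped.

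Then I would redo the uniqueness argument, introducing the optimised system $\eqcbvp$ exactly as for $\enf$, enriched with the optimised equation for $\XV_{x,\abs z N'}$ (and its symmetric version) displayed just above. One checks, as in Lemma~\ref{l:div_aux}, that $\eqcbvp$ extends $\eqcbv$ (the solutions to the common equations agree, by algebraic reasoning), and, as in Lemma~\ref{l:div}, that the syntactic solutions of $\eqcbvp$ are divergence-free: in every equation each name, bound or free, occurs either only in input position or only in output position, so under the ground transition system no $\tau$ can ever be produced after any number of visible actions; since $\eqcbvp$ is also guarded, Theorem~\ref{thm:usol} gives uniqueness of solutions for $\eqcbvp$, and Theorem~\ref{t:transf:equations} transports it to $\eqcbv$.

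Finally I would conclude as in the proof of Lemma~\ref{l:complete_enf}: from $M\enfe N$ take an $\eta$-eager normal-form bisimulation $\R$ with $M\RR N$; uniqueness of the solution of $\eqcbv$ together with the fact that $\encleft\R$ and $\encright\R$ are both solutions yields $\bind{\til y}\enca M\bsim\bind{\til y}\enca N$ with $\til y=\fvars{M,N}$, hence $\enca M\bsim\enca N$. The main obstacle is the verification that the two new families of equations are genuinely solved by $\encleft\R$ and $\encright\R$ and are included correctly in the optimised system: this is where the $\eta$-expansion shows up as a link process, and the algebraic manipulations interleaving the link $\fwd{}{}$ with the replicated value body (to match $\enca x$ against $\enca{\abs z xz}$, and the nested $\eta$-expansions coinductively allowed by Definition~\ref{enfebsim}) are the delicate point; everything else is a routine adaptation of the $\enf$ proof.
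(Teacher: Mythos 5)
Your proposal follows the paper's own proof essentially step for step: enrich $\eqcbv$ with the two equations for the extra clauses of Definition~\ref{enfebsim}, add the optimised equations for $\XV_{x,\abs z N'}$ (and its symmetric version) to $\eqcbvp$, re-run the extension and divergence-freedom arguments to get uniqueness via Theorems~\ref{thm:usol} and~\ref{t:transf:equations}, and verify that $\encleft\R$, $\encright\R$ solve the new equations using $\enca x\bsim\enca{\abs z xz}$ together with law~\reff{eq:l:solMAIN} and validity of $\betav$. This matches the paper's argument, so the proposal is correct and takes the same route.
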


\iffull
{\alert 
We have shown that the equivalence induced by the encoding, with weak 
bisimilarity as the equivalence for the encoded terms, is fully abstract 
w.r.t. $\enfe$: indeed, it is sound (Lemma~\ref{l:sound}) and 
complete (Lemma~\ref{l:complete}). 
We combine this fact with 
the characterisation of barbed congruence as 
weak bisimilarity (Theorem~\ref{t:bisbc}), 
to get the full abstraction of the encoding w.r.t. $\enfe$:}
\fi

Combining Lemmas~\ref{l:sound} and~\ref{l:complete}, and
Theorem~\ref{t:bisbc} we derive Full Abstraction for  
$\enfe$ with respect to barbed congruence. 

\begin{theorem}[Full Abstraction for $\enfe$]
For any $M,N$ in $\Lao$, we have $M\enfe N$ iff $\enca M\wbc\IntpSmall \enca N$
\end{theorem}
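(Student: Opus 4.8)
The plan is to obtain the theorem as an immediate corollary of the two main lemmas already established together with the coincidence of barbed congruence and bisimilarity. Concretely, I would first recall that Lemma~\ref{l:sound} gives the implication $\enca M\bsim\enca N \Rightarrow M\enfe N$ and Lemma~\ref{l:complete} gives the converse $M\enfe N \Rightarrow \enca M\bsim\enca N$; chaining these two yields the equivalence $M\enfe N \iff \enca M\bsim\enca N$. The only remaining gap is to pass from $\bsim$ (ground bisimilarity in \Intp) to $\wbc\IntpSmall$ (barbed congruence in \Intp) on the encoded terms.

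Second, I would discharge that gap by invoking Theorem~\ref{t:bisbc}(1): in \Intp, on agents that are image-finite up to $\bsim$, barbed congruence and bisimilarity coincide. The hypothesis needed here is precisely that $\enca M$ and $\enca N$ are image-finite up to $\bsim$; this is asserted in the text just after Theorem~\ref{t:bisbc} (``All the agents in the paper, including those obtained by encodings of the $\lambda$-calculus, are image-finite up to \bsim''), so I would cite that remark. Hence $\enca M\bsim\enca N$ iff $\enca M\wbc\IntpSmall\enca N$, and combining with the displayed equivalence above closes the argument. Since $\enca{\cdot}$ produces abstractions (parametric on a continuation name), I would also note that the equivalence on abstractions is understood via ground instantiation as in Remark~\ref{r:abs}, so nothing changes.

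There is essentially no obstacle of substance left at this point: all the real work has been done in Lemmas~\ref{l:sound} and~\ref{l:complete} (the latter relying on the unique-solution machinery of Section~\ref{s:usol} and the auxiliary system \eqcbvp) and in the characterisation Theorem~\ref{t:bisbc}. If I had to name the most delicate point, it is making sure the image-finiteness side condition of Theorem~\ref{t:bisbc}(1) genuinely applies to the encoded $\lambda$-terms; but this is exactly what the paper has already claimed in passing, so here it is just a matter of citing it. The proof is therefore a two-line composition:
\[
M\enfe N \;\overset{\text{Lem.~\ref{l:sound},~\ref{l:complete}}}{\iff}\; \enca M\bsim\enca N \;\overset{\text{Thm.~\ref{t:bisbc}(1)}}{\iff}\; \enca M\wbc\IntpSmall\enca N .
\]
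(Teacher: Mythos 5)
Your proposal is correct and follows exactly the paper's own derivation: the theorem is obtained by combining Lemma~\ref{l:sound} (soundness) and Lemma~\ref{l:complete} (completeness) with the characterisation of barbed congruence as bisimilarity in \Intp{} (Theorem~\ref{t:bisbc}), with the image-finiteness and ground-instantiation caveats handled as in the text. Nothing further is needed.
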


\begin{remark}[Unique solutions versus up-to techniques]
\label{r:upto}
For Milner's encoding of call-by-name $\lambda$-calculus, 
the completeness part of the full abstraction result with respect to
L{\'e}vy-Longo Trees~\cite{cbn} relies on
\emph{up-to techniques for bisimilarity}. 
Precisely, given a relation $\R$ on $\lambda$-terms that represents a
tree bisimulation,  one shows that the $\pi$-calculus encoding of 
$\R$  is a $\pi$-calculus bisimulation \emph{up-to context and
  expansion}. Expansion is a preorder that intuitively guarantees that a
term is `more efficient' than another one\ifapp ~(Appendix
\ref{app:complete})\fi . 
In the up-to technique, expansion is used
to manipulate the derivatives of two transitions so to bring up a
common context.   
Such up-to technique is not powerful enough for the 
 call-by-value encoding and the Eager
Trees
 because 
some of the required transformations would violate expansion (i.e., they
would require to replace a term by  a `less efficient' one).
An example of this is law~\reff{eq:l:solMAIN} (in the proof
of Lemma~\ref{l:sol}), that would have to be
applied from right to left so to 
implement the branching in clause
\reff{ie:split} of Definition~\ref{enfbsim}
(as a context with two holes). 

The use of the technique of  unique solution of equations allows us to
overcome the problem:  law \reff{eq:l:solMAIN} and similar laws that
introduce 'inefficiencies' can be used (and they are indeed used, 
 in various places),  
as long as they 
 do not produce  new divergences.
\iffull

\DS{old things here}
For call-by-name, completeness of the of the $\pi$-calculus encoding
is established using \emph{up-to techniques for bisimulation}, and in
particular up-to contexts and up-to expansion~\cite{cbn}.

One may wonder whether the same approach can be used in the present
setting, instead of relying on the technique of unique solutions.
While we cannot exclude that a completeness proof using up-to
techniques is possible, we believe it would be quite involved,
essentially because we cannot rely on the `up-to expansion'
technique. 

The completeness proof requires contexts and stuck redexes to 
be separated: we decompose $\evctxt[x\val]$ into  $\abs z \evctxt[z]$ and
$x\val$, which would also be required in a proof using 
up to techniques.
However, $\enca {\evctxt [x\val]}\not\exn \enca{(\abs z \evctxt[z])(x\val)}$:
internal steps are added rather than removed. This has two causes: 
the addition of a $\beta$-redex, and the insertion of an abstraction 
above the context $\evctxt$, 
which forces some synchronisations to happen later - this being not 
compatible with expansion.

This difficulty (together with other technical aspects, like the
handling of free output prefixes, in relation with validity of
$\betav$) might be the reason why the characterisation of the
equivalence induced by the call-by-value encoding has remained open
for quite long.
\fi
\end{remark}

\section{Encoding into  \alpi}
\label{s:localpi}

Full abstraction
with respect to  $\eta$-Eager-Tree equality also holds
 for Milner's simplest encoding, namely $\qencm$
 (Section~\ref{s:enc:cbv}),
provided that 
  the target
language of the encoding is taken to be  \alpi. 
The adoption of  \alpi\ implicitly allows us to control capabilities,
avoiding violations of laws such as~\reff{eq:nonlaw} in 
the Introduction. 
In \alpi, bound output prefixes such as  $\outb a x .\inp x y$ are abbreviations for 
$\new x(\out a x| \inp x y )$.







\begin{theorem}
\label{t:faALpi}
$M\enfe N$ iff $  \encm M  \wbc\alpiSmall \encm N$, for any $M,N \in \Lao$. 
\end{theorem}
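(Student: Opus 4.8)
The plan is to establish Theorem~\ref{t:faALpi} by mirroring the proof architecture already carried out for the \Intp{} encoding, exploiting the fact that the two target calculi share the same notion of (ground) bisimilarity and the same unique-solution technology (Theorem~\ref{thm:usol}, Theorem~\ref{t:transf:equations}), which the excerpt has verified to hold in \alpi{} as well. Concretely, I would prove the two halves separately: \emph{soundness}, $\encm M \wbc\alpiSmall \encm N$ implies $M\enfe N$, and \emph{completeness}, $M\enfe N$ implies $\encm M \wbc\alpiSmall \encm N$. By Theorem~\ref{t:bisbc}\reff{bisbc:alpi}, on image-finite-up-to-\bsim{} agents in which no free name is used in input, barbed congruence in \alpi{} coincides with bisimilarity; one first checks that all agents $\encm M$ satisfy these side conditions (the encoding uses free names only as output subjects of link processes and output particles, never in input subject position), so it suffices to work with $\bsim$ throughout and only invoke Theorem~\ref{t:bisbc} at the very end.

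For soundness I would first re-establish validity of $\betav$-reduction for $\qencm$ in \alpi{} (the analogue of Lemma~\ref{l:beta}): here the key point is that the free output $\out p x$ in $\encm x$, together with the link $!\inp y{z,q}.\out x{z,q}$ appearing after a value is consumed, behaves in \alpi{} like a fresh-name output followed by a forwarder, precisely because the modified \alpi{} transition system introduces links dynamically and because capabilities are controlled. Then I would set up an operational correspondence for an optimised version of $\qencm$, exactly as sketched for \Intp{} in Section~\ref{ss:enc_pii}, and read off the \enfbsim{} observables (clauses of Definitions~\ref{enfbsim} and~\ref{enfebsim}) from the bisimilarity observables, using the same decomposition-of-parallel-composition argument for the branching clause~\reff{ie:split} and the same coinductive argument for the $\eta$-clauses~\reff{lab:five}, \reff{def:enfe:case:eta}. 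The crucial law that makes $\qencm$ work in \alpi{} but not in plain $\pi$ is the equation displayed after Proposition~\ref{p:nonlaw}, namely that sending a fresh $q$ together with a link $q\fwdsymb p$ is equivalent to sending $p$ directly; this is the \alpi{} law alluded to there, and it is what repairs law~\reff{eq:nonlaw}.

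For completeness I would reuse the system of equations $\eqcbv$ from Section~\ref{s:complete} essentially verbatim, replacing the \Intp{} encoding $\qenca$ by $\qencm$ in the bodies and replacing the link/forwarder processes accordingly; one then shows (analogue of Lemma~\ref{l:sol}) that $\encOO\R$ and $\encTO\R$ built from $\qencm$ are solutions, the only delicate algebraic identity being the \alpi{} counterpart of~\reff{eq:l:solMAIN}. Uniqueness of solutions follows, as before, by introducing the optimised system $\eqcbvp$ whose syntactic solutions have no $\tau$-transitions (each name is used either only in input or only in output subject position, and \alpi{} transitions are ground), applying Theorem~\ref{thm:usol} in \alpi{} and then Theorem~\ref{t:transf:equations} via the extension Lemmas~\ref{l:div_aux}, \ref{l:div}. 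Extending from $\enf$ to $\enfe$ proceeds by adding the same two optimised value-equations for the $\eta$-clauses. Combining soundness, completeness and Theorem~\ref{t:bisbc}\reff{bisbc:alpi} yields the stated iff.

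I expect the main obstacle to be the soundness direction, and within it the validity of $\betav$ and the operational correspondence for $\qencm$ in \alpi: unlike the \Intp{} case where freshness is built into the syntax, here one must argue carefully that the residual free outputs in $\encm x$ (and in the links they spawn) are harmless, which is exactly where the controlled-capabilities discipline of \alpi{} and the link-introducing transition system are doing the real work; getting the algebraic laws about links right in \alpi{}, and checking that the decomposition property for parallel compositions still holds under the modified transition system, is the technically sensitive part. The completeness side, by contrast, is largely a transcription of Section~\ref{s:complete} since the equational technology and the target bisimilarity are identical.
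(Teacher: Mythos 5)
Your overall architecture is the same as the paper's: rerun the \Intp{} development for $\qencm$, use Theorem~\ref{t:bisbc}(\ref{bisbc:alpi}) to reduce $\wbc\alpiSmall$ to ground bisimilarity, and redo the unique-solution argument for completeness. The gap is in the one step that genuinely changes in \alpi, which you dismiss as routine. You justify divergence-freedom of the optimised system by the \Intp{} argument of Lemma~\ref{l:div}: ``each name is used either only in input or only in output subject position, and \alpi{} transitions are ground, hence the syntactic solutions have no $\tau$-transitions.'' This does not transfer to \alpi. The labelled transition system with which bisimilarity characterises barbed congruence in \alpi{} is modified so that visible transitions emitting free names dynamically spawn link processes: a transition $P\arr{\out a b}P'$ is replaced by one of the form $P\alpiar{\outb a x}(\fwd x b \mid P')$ with $x$ fresh. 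The spawned link uses $x$ in input and $b$ in output, so the input-xor-output discipline of the equation bodies is destroyed along derivatives, and the syntactic solutions \emph{can} perform $\tau$-steps: after a visible input at its fresh trigger $x$, the link re-emits at $b$, which may synchronise with an input of the residual process. So $\tau$-freedom simply fails; what must be proved instead is that these extra reductions do not create divergences --- every such synchronisation is preceded by a visible action at a fresh link name, and the names it transmits are themselves freshly received in the ground LTS, so the new interactions cannot be chained into an infinite $\tau$-sequence. This is precisely what the paper isolates as the main difference from the proofs of Lemmas~\ref{l:complete_enf} and~\ref{l:complete}, and it is absent from your plan; without it, Theorem~\ref{thm:usol} cannot be invoked for the \alpi{} system.

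Relatedly, you locate the main difficulty in the soundness direction ($\betav$-validity and operational correspondence for $\qencm$). Those parts do need to be redone for $\qencm$, and the link law you cite after Proposition~\ref{p:nonlaw} is indeed what repairs law~\reff{eq:nonlaw}; but they follow the \Intp{} pattern closely, whereas the completeness side is \emph{not} ``largely a transcription'': Lemma~\ref{l:div} is exactly the lemma whose proof has to be reworked under the link-introducing transition system, along the lines sketched above.
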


  The main difference 
with respect to  the proofs of Lemmas~\ref{l:complete_enf} 
and~\ref{l:complete}
  is when proving absence of divergences for the (optimised) system of equations.
%
 Indeed,  in \alpi\  the characterisation of barbed congruence 
($ \wbc\alpiSmall$) as bisimilarity 
makes use 
of 
 a different labelled transition system \ifapp(Appendix~\ref{a:alpi})\fi
where visible transitions may create  
new processes (the `static links'), that could thus  produce new 
reductions. Thus one has to show that the added processes 
do not introduce 
new divergences. 
\iffull

 For instance, if $P\arr{\out a b} P'$, then in the 
new LTS we have $P\alpiar{\outb a x} (\alpilink x b | P')$.

We show that this cannot yield divergences if $P$ did not already have 
divergences.
The replicated input guarding a static link 
created by the execution of a process ($x$ in the previous example) 
is always fresh. 
 Hence, any synchronisation created by the link has to be preceded 
by a visible action. Furthermore, names transmitted through this 
synchronisation are name freshly received (in the ground LTS), 
therefore cannot create additional synchronisations, nor can they 
induce divergences.


\fi

\section{Contextual equivalence and preorders}
\label{s:contextual}
We have presented full abstraction for $\eta$-Eager-Tree equality
taking
a `branching' behavioural equivalence, namely
 barbed congruence,  on the $\pi$-processes.
We show here the same result
for contextual
equivalence,  the most common 
 `linear' behavioural equivalence.
We also extend the results to   preorders.

We only discuss the encoding 
$\qenca$ into 
\Intp. Similar results however hold for 
the encoding $\qencm$ into
  \alpi.

\subsection{Contextual relations and traces}
\label{ss:preorders}
Contextual equivalence   is defined in the $\pi$-calculus 
analogously to its definition in the  
 $\lambda$-calculus (Definition~\ref{d:ctxeq}); 
thus, with
respect to barbed congruence, the bisimulation game on reduction is
dropped. 
Since we wish to handle  preorders, we also  introduce
the \emph{contextual preorder}. 

\begin{definition}
\label{d:ctx_pi}
Two \Intp\  agents 
$A,B$ are in the \emph{contextual preorder}, written 
$A \ctxpre B $, 
 if 
$C[A]\Dwa_{ a}$ implies $C[B]\Dwa_{ a}$, 
for all contexts $C$. They are 
\emph{contextually equivalent}, 
written 
$A \ctxeqPI B $, if both 
$A \ctxpre B $  and $B \ctxpre A $ hold.
\end{definition} 

To manage  contextual  preorder and equivalence in proofs, we
exploit  
 characterisations of them  as trace inclusion and  equivalence.
For $s = \mu_1, \ldots, \mu_n$, 
where each $\mu_i$ is a visible action, 
 we set $P \Arr{s} $
if $P \Arr{\mu_1} P_1 \Arr{\mu_2}P_2 \ldots P_{n-1} \Arr{\mu_n}P_n$, for
some  processes $P_1, \ldots, P_n$.

\begin{definition}
\label{d:trace}
Two \Intp\ processes  $P,Q$ are in the \emph{trace inclusion}, written 
$P \trincl Q $, if $P \Arr{s} $ implies $Q \Arr{s} $, for each trace
$s$. They are 
\emph{trace equivalent}, 
written 
$P \treq Q $, if both 
$P \trincl Q $ and 
$Q \trincl P $ hold.
\end{definition} 
As usual, these relations are extended to abstractions by requiring
instantiation of the parameters with fresh names. 

\begin{theorem}
\label{t:cha_tr}
In \Intp, relation  $ \ctxpre$ coincides with $ \trincl$, and 
relation $ \ctxeqPI$ coincides with $ \treq  $.
\end{theorem}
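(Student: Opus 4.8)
The plan is to prove the two inclusions $\trincl\subseteq\ctxpre$ and $\ctxpre\subseteq\trincl$ on processes; the statement for $\ctxeqPI$ and $\treq$ then follows by symmetrisation, and both extend to abstractions by instantiating the parameters with fresh names, as usual. The technical heart, used in both directions, is a \emph{trace decomposition lemma} for $\Intp$: a sequence $s$ of visible actions is a trace of $P\mid Q$ if and only if $s$ can be obtained by merging a trace of $P$ and a trace of $Q$, turning some pairs of complementary input/output actions on the same channel into internal synchronisations. In $\Intp$ this is especially clean, since every output is bound and all tuples have pairwise distinct components: extruded names are automatically fresh, so there is no danger of capture or aliasing when the two traces are interleaved. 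Alongside it one needs the routine facts that the traces of $\res a P$ are exactly the traces of $P$ in which $a$ neither occurs as subject nor is extruded, and that input prefixing, output prefixing and replication act monotonically on trace sets.

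For $\trincl\subseteq\ctxpre$ (soundness) I would first show that $\trincl$ is a precongruence of $\Intp$, the only non-immediate case being parallel composition: if $P\trincl P'$ and $t$ is a trace of $P\mid Q$, decompose $t$, replace its $P$-component by a matching trace of $P'$, and recompose to obtain a trace of $P'\mid Q$. Since the holes of a context occur in process (or abstraction) position, precongruence gives $C[P]\trincl C[Q]$ for every context $C$ whenever $P\trincl Q$. Finally, $R\Dwa_a$ holds exactly when $R$ has a length-one trace $\bout a{\til b}$ for some $\til b$ --- the only shape of a visible output in $\Intp$ --- so $C[P]\trincl C[Q]$ yields $C[P]\Dwa_a\Rightarrow C[Q]\Dwa_a$ for all $C$ and $a$, i.e. $P\ctxpre Q$.

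For $\ctxpre\subseteq\trincl$ (completeness) the idea is, given a trace $s=\mu_1\cdots\mu_n$ of $P$ and a fresh name $\omega$ absent from the processes under comparison and from $s$, to build a \emph{deterministic tester} $T_s$ and to use the context $C_s = \res{\til c}([\cdot]\mid T_s)$, where $\til c$ collects the free channels occurring in $s$. The process $T_s$ performs the sequence of actions complementary to $s$ --- an input $\inp{a_i}{\til b_i}$ wherever $\mu_i$ is a bound output $\bout{a_i}{\til b_i}$ (receiving the extruded names), a bound output $\bout{a_i}{\til b_i}$ wherever $\mu_i$ is an input $\inp{a_i}{\til b_i}$ (supplying fresh names, which is consistent with the ground transition system, where inputs receive only fresh names) --- threading all exchanged names through its continuations, and ends with an output $\out\omega{}$. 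Using the decomposition lemma one then shows $C_s[R]\Dwa_\omega \iff R\Arr s$ for every $R$: the ``if'' direction is a direct simulation, and the ``only if'' direction is where the restriction $\res{\til c}$ matters --- since $T_s$ is sequential and all of its interface channels are restricted, each of its $n$ prefixes must synchronise against $R$, in order, with exactly the names dictated by $s$, so any computation of $\res{\til c}(R\mid T_s)$ that raises the barb $\omega$ forces $R$ to perform $s$ (interspersed, harmlessly, with internal steps of its own). Hence $P\Arr s \Rightarrow C_s[P]\Dwa_\omega \Rightarrow C_s[Q]\Dwa_\omega \Rightarrow Q\Arr s$, which is $P\trincl Q$.

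The main obstacle is precisely the ``only if'' half of the tester's correctness: ruling out spurious successes in which $\omega$ is reached without $R$ actually exhibiting $s$ --- e.g. $R$ and $T_s$ communicating on the ``wrong'' channels, or $T_s$'s prefixes escaping as top-level visible actions. This is exactly where the internal-mobility discipline of $\Intp$ is used: the restriction $\res{\til c}$ closes off every interface channel, and the bound-output/distinct-tuple conventions keep the interaction between $R$ and $T_s$ a clean chain of linear handshakes on freshly chosen private names, which is what makes the decomposition lemma strong enough to pin down the shape of every successful computation. A secondary, routine difficulty is the name bookkeeping inside $T_s$ --- distinguishing, for each $a_i$, whether it is a free channel, a name extruded earlier by the tested process, or one introduced by the tester --- so that $T_s$ comes out as a well-sorted $\Intp$ process; this must be set up carefully but raises no real issue.
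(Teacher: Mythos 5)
Your proof is correct and is the standard may-testing-style characterisation argument (precongruence of trace inclusion, established via a trace decomposition/merge lemma for parallel composition, plus deterministic tester contexts that play the complementary trace under a restriction and signal success on a fresh barb); the paper states Theorem~\ref{t:cha_tr} without giving a proof, and this is exactly the kind of argument it implicitly appeals to. One point to tighten: the decomposition lemma, as you state it (``iff \dots turning some pairs of complementary actions into synchronisations''), needs the usual side condition that once a bound-output/input pair is internalised, every later action whose subject is the extruded (now restricted) name must be internalised too --- in \Intp\ such a name can never be re-extruded, so it cannot reappear in the visible part of the merged trace; this omission does not affect the way you actually use the lemma, either for precongruence or for the ``only if'' half of the tester's correctness.
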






\iffull

The completeness proof w.r.t.\  to trace equivalence 
is similar to that of Lemma~\ref{l:complete}, using a 
unique solution theorem for trace equivalence (along the lines of
Theorem~\ref{thm:usol} for barbed congruence). 
However, contrary to bisimulation, trace equivalence, as 
a proof technique, is asymmetrical: to show $P\treq Q$,
one has to show first the traces of $P$ are traces of $Q$, 
and then the reverse. Therefore, such a unique-solution theorem 
need to be formulated through the means of a behavioural preorder, 
namely trace inclusion, written $\trincl$. Trace inclusion $P\trincl Q$ 
is simply defined by saying that all traces of $P$ are traces of $Q$.
\fi

\subsection{A proof technique for preorders}
\label{ss:usol_pre}

We modify the technique of unique solution of equations to
reason  about preorders, precisely the 
 trace inclusion preorder. 
\iffull
 {\alert 
 From a system of equations \system one 
 may derive two systems of pre-equations,
  $\{  X_i \leq
E_i\}_{i\in I}$ and $\{  E_i \leq X_i\}_{i\in I}$ .
We call solutions of the former \emph{prefixpoints of $E$}, 
and of the latter \emph{postfixpoints}.}

In this case we work with systems of \emph{\ineqs} 
(rather than \emph{equations}), and we use the corresponding proof technique 
to  establish results about behavioural preorders.

We still call \emph{syntactic solution of a system 
} the
agents obtained by turning the \ineqs into recursive agent
definitions.    
\fi

In the case of equivalence, 
the 
 technique of unique solutions   exploits symmetry arguments,
but 
symmetry  does not hold for preorders.  
We overcome the problem by referring to the
syntactic solution of the  system in an asymmetric manner. 
 This   yields the two lemmas below, intuitively stating  that the
 syntactic solution    of a  system
is its smallest pre-fixed point, as well as, under the  divergence-freeness
hypothesis, its greatest
post-fixed point. 
We say that $\til F$ is  a \emph{pre-fixed
   point 
 for $\trincl$} of  
a system of equations   $\{\til X= \til E\}$ 
if $\til E[\til F]\trincl \til F$;
similarly,  $\til F$ is  a \emph{post-fixed
   point  for $\trincl$}
if 
$\til F\trincl \til E[\til F]$.

\begin{lemma}[Pre-fixed points,  $\trincl$]\label{least}
Let $\Eeq$ be a  
system of equations, 
and $\KEE$ its syntactic solution. 
If  $\til F$ is 
a pre-fixed point  for $\trincl$ of $\Eeq$, 
then $\KEE \trincl \til F$
\end{lemma}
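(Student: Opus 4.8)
The plan is to prove $\KEE \trincl \til F$ by reasoning on traces: given a trace $s$ such that $(\KEE)_i \Arr{s}$ for some component $i$, I must show $F_i \Arr{s}$. The key observation is that the syntactic solution $\KEE$ is built from constants $\KEi E \Defi E_i[\KE]$, so any trace performed by $(\KEE)_i$ arises from finitely many unfoldings of these definitions together with the transitions of the $E_i$. The natural parameter to induct on is the \emph{length} of the trace $s$ (equivalently, the number of visible actions), possibly combined with the number of unfoldings needed to expose the first visible action. Because the system is guarded — every occurrence of a variable in a body $E_i$ sits under a prefix — each visible action consumed by $(\KEE)_i$ is an action of the body $E_i$ itself (with the variables instantiated by the constants), so after performing it we are left, up to $\bsim$ or up to structural manipulation, with a process of the form $E_i'[\KE]$ for some residual context $E_i'$ that is again built from the bodies and the constants.

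First I would set up precisely the correspondence between transitions of $\KEE$ and "symbolic" transitions of the system: for each $i$, and each visible $\mu$, whenever $(\KEE)_i \Arr{\mu} P'$, guardedness ensures $P'$ is (bisimilar to, or syntactically) $D[\KE]$ for a context $D$ obtained from the equation bodies; moreover the \emph{same} symbolic step is available from $E_i[\til G]$ for \emph{any} tuple $\til G$, landing in $D[\til G]$. This is the "transition transfer" lemma that underlies the whole unique-solution machinery; I would either cite its analogue from \cite{usol} or re-derive the one-step version here. Then the induction proceeds: if $(\KEE)_i \Arr{s}$ with $s = \mu \cdot s'$, factor it as $(\KEE)_i \Arr{\mu} D[\KE] \Arr{s'}$; by the transfer lemma, $E_i[\til F] \Arr{\mu} D[\til F]$; since $\til F$ is a pre-fixed point, $F_i$ can match the reduction $\tau$-steps absorbed by the weak arrow and $E_i[\til F] \trincl F_i$ gives $F_i \Arr{\mu} F''$ with $D[\til F] \trincl F''$ — wait, more carefully: from $E_i[\til F]\trincl F_i$ and $E_i[\til F]\Arr{\mu s'}$ we directly get $F_i \Arr{\mu s'}$, provided we already know $D[\KE] \Arr{s'}$ implies $D[\til F]\Arr{s'}$. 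That last implication is exactly the inductive hypothesis applied at the shorter trace $s'$, but now on the "context" $D$ rather than a single variable; so the induction statement must be strengthened to quantify over all contexts $D$ built from the bodies, asserting $D[\KE] \Arr{s} \implies D[\til F] \Arr{s}$ for $|s|$ below the current bound. With the empty-trace base case ($D[\KE]\Arr{\epsilon}$ trivially matched since both sides can do the empty trace), the induction closes.

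The step I expect to be the main obstacle is making the "context $D$ built from the equation bodies" bookkeeping precise and stable under the induction — in particular handling the weak arrows correctly, since a single $\Arr{\mu}$ hides arbitrarily many $\tau$'s, and those $\tau$'s may themselves come from unfolding constants (guardedness controls the visible actions but a $\tau$ could in principle expose a variable occurrence that was guarded only by a $\tau$-prefix, or a communication between two unfolded copies). I would deal with this by working with the notion of divergence-free syntactic solution only where needed — note the statement of Lemma~\ref{least} does \emph{not} assume divergence-freeness, which is consistent with the fact that trace \emph{inclusion} in the "$\KEE$ below $\til F$" direction never needs to rule out divergences (divergences of $\KEE$ only add traces, and each such trace must be matched by $\til F$, which the pre-fixed-point inequality $E[\til F]\trincl\til F$ delivers by iterating). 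Concretely, I would phrase the transfer lemma so that it accounts for \emph{weak} symbolic transitions $E_i[\til G]\Arr{\mu}D[\til G]$ directly, absorbing intermediate $\tau$'s on the symbolic side, which is legitimate because guardedness plus the shape of the bodies bounds how the unfoldings interleave before the first visible action. Once that lemma is in hand, the induction on trace length is routine, and I would present it compactly, citing \cite{usol} for the parts that are identical to the equivalence case and only spelling out where asymmetry (using $\trincl$ rather than $\bsim$, and only the pre-fixed-point inequality) changes the argument.
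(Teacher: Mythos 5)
Your overall direction --- reduce the claim to finite traces and push them through the pre-fixed-point inequality using the fact that $\trincl$ is a precongruence --- is the right one, and you are also right that divergence-freeness plays no role in this direction. The problem is the engine you propose, the one-step transfer lemma: it is not true that whenever $(\KEE)_i \Arr{\mu} P'$ the residual is $D[\KE]$ with ``the same symbolic step'' available from $E_i[\til G]$ for an \emph{arbitrary} filling $\til G$. The silent steps absorbed in $\Arr{\mu}$ may consume the prefixes guarding the constant occurrences, unfold them, and let the unfolded bodies interact or even produce the visible action $\mu$ itself; an arbitrary $G_j$ --- in particular $F_j$, which is related to $E_j[\til F]$ only behaviourally and only in one direction --- need not match any of this. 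Nor does guardedness bound the number of unfoldings preceding a visible action (that is precisely what divergence-freeness is for, and note that the lemma assumes neither guardedness nor divergence-freeness), so your justification for ``absorbing the intermediate $\tau$'s on the symbolic side'' does not stand. As written, the induction on the length of $s$ with a single-unfolding transfer cannot close.

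The paper's proof avoids the per-action analysis altogether and is essentially one line: a finite trace of $\KEi{i}$ is exhibited by a finite computation, which unfolds the constants only finitely often; hence for some $n$ it is already a trace of the $n$-fold unfolding $E_i^n[\til P]$ for \emph{every} filling $\til P$, since the remaining constant occurrences never fire. Taking $\til P=\til F$ and iterating the pre-fixed-point inequality through precongruence gives $E_i^n[\til F]\trincl E_i^{n-1}[\til F]\trincl\dots\trincl F_i$, so the trace is a trace of $F_i$. Your argument becomes correct if you replace the induction on trace length by an induction on the number of unfoldings used by the whole computation (commute each unfolding to the start, paying one application of precongruence together with $E_j[\til F]\trincl F_j$ per unfolding); at that point it collapses into the paper's argument and needs neither guardedness nor any bookkeeping about where the $\tau$'s come from.
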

\iffull
\begin{proof}
Take a finite trace $\til\alpha$ of $\KEi i$. As it is finite, there must be an 
$n$ such that it is a trace of $E_i^n$, hence it is also a trace of $E_i^n[\til P]$. 
$\til E[\til P]\trincl \til P$, hence, by congruence, $E_i^n[\til P]\trincl P_i$, 
and $\til\alpha$ is a trace of $P_i$. This concludes.
\end{proof}
\fi

\begin{lemma}[Post-fixed points,  $\trincl$]\label{greatest}
Let $\Eeq$ be a guarded 
system of equations, 
and $\KEE$ its syntactic solution. 
Suppose $\KEE$ has no divergences. 
If  $\til F$ is 
a post-fixed point  for $\trincl$ of $\Eeq$, 
then
$\til F \trincl \KEE$.
\end{lemma}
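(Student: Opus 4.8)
The plan is to show that, given a post-fixed point $\til F$ of the guarded system $\Eeq$, every trace $s$ of $F_i$ is a trace of the syntactic solution $\KEi E$ (i.e.\ of the component $\mathrm{K}_{\til E,i}$). Fix a trace $s = \mu_1,\dots,\mu_n$ with $F_i \Arr{s}$. The key idea, exploiting guardedness and divergence-freeness of $\KEE$, is to find a finite unfolding $E_i^{(k)}$ of the equation for $X_i$ that is large enough to ``absorb'' $s$: since $\KEE$ is guarded and divergence-free, along the finite trace $s$ only finitely many of the equations can be unfolded, so there is $k$ such that $s$ is already a trace of $E_i^{(k)}[\til G]$ for \emph{any} tuple $\til G$ plugged in for the variables occurring at depth $\ge k$. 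In particular $E_i^{(k)}[\KE] = \KEi E$ up to the finite unfolding, so it suffices to show $s$ is a trace of $E_i^{(k)}[\til F]$ and then transfer along the unfolding.

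Concretely I would proceed as follows. First, using that $\til F \trincl \til E[\til F]$ (post-fixed point) and that $\trincl$ is a precongruence for the $\pi$-calculus operators, iterate to get $\til F \trincl \til E^{(k)}[\til F]$ for every $k$, where $\til E^{(k)}$ denotes the $k$-fold composition of the bodies. Hence any trace $s$ of $F_i$ is a trace of $E_i^{(k)}[\til F]$, for every $k$. Second, I need a ``bounded-unfolding'' lemma: because $\KEE$ is guarded and divergence-free, for the fixed finite trace $s$ there is a bound $k = k(s)$ such that for any tuple $\til G$, every trace of $E_i^{(k)}[\til G]$ of length $\le n$ that $E_i^{(k)}[\til F]$ can perform is \emph{already} a trace of $\KEi E = E_i^{(k)}[\KE]$, because performing $s$ cannot reach down past the $k$-th layer of guards — each visible action $\mu_j$ ``consumes'' at least one guard on the branch being explored, and divergence-freeness ensures no $\tau$-loop can stall while staying above layer $k$. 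Third, combine: $F_i \Arr{s}$ gives $E_i^{(k)}[\til F] \Arr{s}$, and by the bounded-unfolding lemma this trace, being of length $n \le k$'s reach, is mimicked by $\KEi E$, so $\KEi E \Arr{s}$. This holds for every trace $s$ of $F_i$, giving $F_i \trincl \KEi E$, i.e.\ $\til F \trincl \KEE$.

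The main obstacle will be making the bounded-unfolding argument precise: one must formalise the claim that a finite visible trace of length $n$ only ``explores'' finitely many layers of the unfolded syntactic solution, and crucially that the divergence-freeness hypothesis rules out the pathological case where internal $\tau$-activity keeps the process alive across arbitrarily many layers without producing visible actions (which would break the bound). This is exactly where Theorem~\ref{thm:usol}'s divergence hypothesis is used, and I would mirror the corresponding step in the proof of Theorem~\ref{thm:usol} from~\cite{usol}, adapted to the trace-preorder setting: one measures, along the run producing $s$, how ``deep'' into the syntactic unfolding the process has descended, shows this depth increases only at visible steps (up to the finitely many $\tau$-steps permitted before a divergence would occur, which is where divergence-freeness bites), and concludes the depth stays bounded by a function of $n$. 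The precongruence properties of $\trincl$ and the guardedness bookkeeping are then routine. Note the contrast with Lemma~\ref{least}, whose proof needs no divergence hypothesis precisely because there one goes the ``easy'' direction: a finite trace of $\KEi E$ lives in some finite unfolding by construction, with no need to control $\tau$-behaviour.
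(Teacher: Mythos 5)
Your overall architecture is the one intended by the paper: iterate the post-fixed point inequality, using that $\trincl$ is a precongruence, to get $\til F \trincl \til E^n[\til F]$ for all $n$, and then use guardedness together with divergence-freeness of $\KEE$ to argue that a finite trace of $E_i^n[\til F]$ must already be a trace of $\KEi{i}$; this is exactly what is meant by ``the proof of Lemma~\ref{greatest} is similar to the proof of Theorem~\ref{thm:usol}'', and you also correctly explain why Lemma~\ref{least} needs no divergence hypothesis.

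The gap is in your justification of the ``bounded-unfolding'' step. It is not true that ``each visible action consumes at least one guard'', nor that ``the depth increases only at visible steps'': guards are crossed by transitions of any kind, and in particular $\tau$-steps (e.g.\ internal communications performed inside the unfolded bodies) can take an execution of the weak trace $s$ arbitrarily deep into the unfolding, since a weak trace with $n$ visible actions may contain any number of $\tau$-steps. Hence no bound $k(s)$ depending only on the visible length $n$ can be obtained the way you describe, and divergence-freeness does not act ``locally'' to cap $\tau$-activity above a given layer (note also that $\til F$ itself may diverge: trace inclusion does not exclude this). The actual content of the key step in Theorem~\ref{thm:usol} is different: the transitions contributed by the context part $E_i^n[\cdot]$ are available in $\KEi{i}=E_i^n[\KE]$ independently of what fills the holes; so if for every $n$ the execution of $s$ from $E_i^n[\til F]$ had to reach the holes, guardedness would force these executions to contain unboundedly many context transitions of which only $n$ are visible, and from this family one extracts (by a K\H{o}nig-style compactness argument over the transitions inherited by $\KEE$) an infinite internal run of $\KEE$ after finitely many visible actions, i.e.\ a divergence, contradicting the hypothesis. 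In other words, the divergence hypothesis is used globally, by exhibiting a divergence of the syntactic solution when the factorisation through the context fails at every depth, not by bounding the depth reached per visible action. With that step repaired --- essentially re-proving the crucial lemma of~\cite{usol} for traces instead of bisimilarity --- the remainder of your plan (precongruence, transfer along finite unfoldings) goes through.
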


Lemma \ref{least} is immediate; 
the proof of Lemma \ref{greatest}
is  similar to the 
proof of Theorem~\ref{thm:usol} (for bisimilarity). 
We thus derive the following proof technique. 

\begin{theorem}
\label{thm:usol_pre}
Suppose that $\Eeq$ is a guarded 
system of equations 
 with a 
divergence-free 
 syntactic
solution.
\iffull
, and $\P,~\Q$ systems of closed abstractions
{\alert [what are systems of closed abstractions?]}
of the same size.
\fi
If $\til F$ is a pre-fixed point   for $\trincl$ of $\Eeq$, and 
 $\til G$  a post-fixed point, 
then
 ${\til F}\trincl {\til G}$. 
\end{theorem}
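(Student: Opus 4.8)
The plan is to compare $\til F$ and $\til G$ by routing both through the syntactic solution $\KEE$ of $\Eeq$, which serves as the canonical intermediate object. Under the two hypotheses of the theorem — $\Eeq$ guarded and $\KEE$ divergence-free — the syntactic solution is simultaneously the least pre-fixed point (Lemma~\ref{least}) and the greatest post-fixed point (Lemma~\ref{greatest}) for $\trincl$, hence the unique fixed point; it is therefore the natural witness for sandwiching any post-fixed point beneath any pre-fixed point. Two preliminary observations make the composition legitimate: $\trincl$ is a genuine preorder, since reflexivity and transitivity follow at once from its definition as inclusion of trace sets (Definition~\ref{d:trace}); and all relations are extended to abstractions by instantiating the parameters with fresh names, so it suffices to reason on the underlying processes and then read the conclusion componentwise.

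The argument then has two one-sided steps followed by a composition. First I would invoke Lemma~\ref{greatest} on the object occurring on the left of the goal to obtain a \emph{lower} bound $\til F \trincl \KEE$; this is the step that consumes guardedness and divergence-freeness, and it is the substantial half, being proved in the coinductive style of Theorem~\ref{thm:usol}. Second I would invoke Lemma~\ref{least} on the object on the right to obtain an \emph{upper} bound $\KEE \trincl \til G$; this half is elementary and needs neither guardedness nor divergence-freeness (it only unfolds a finite trace through finitely many copies of $\til E$ and applies congruence). Composing the two inclusions by transitivity of $\trincl$ yields $\til F \trincl \KEE \trincl \til G$, and hence $\til F \trincl \til G$, which componentwise reads $F_i \trincl \KEi i \trincl G_i$ for every index $i$. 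Essentially all the technical weight sits in the first step, so I would spend most of the write-up checking that Lemma~\ref{greatest} applies verbatim to the system at hand.

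The one point that genuinely requires care — and the main obstacle — is the \emph{orientation} of the sandwich, i.e. matching the fixed-point roles of $\til F$ and $\til G$ to the direction of the stated inclusion. To land on $\til F \trincl \til G$ (rather than its reverse), the object on the left must supply the lower bound $\til F \trincl \KEE$ via the post-fixed-point lemma (Lemma~\ref{greatest}), and the object on the right must supply the upper bound $\KEE \trincl \til G$ via the pre-fixed-point lemma (Lemma~\ref{least}); pairing the lemmas the other way round would instead prove $\til G \trincl \til F$. This is exactly the orientation demanded by the completeness application of Section~\ref{s:complete}: there $\til F$ is the encoding $\encOO\R$ of the ``smaller'' terms, which under-approximates the system and is therefore the post-fixed point, while $\til G$ is the encoding $\encTO\R$ of the ``larger'' terms, which over-approximates it and is the pre-fixed point, and the desired conclusion is precisely $\encOO\R \trincl \encTO\R$. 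I would therefore state the lemma applications explicitly in this orientation, so that the chain $\til F \trincl \KEE \trincl \til G$ comes out in the intended direction, and flag that reversing the roles yields the opposite inclusion.
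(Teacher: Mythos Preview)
Your overall strategy---route both $\til F$ and $\til G$ through the syntactic solution $\KEE$ via Lemmas~\ref{least} and~\ref{greatest} and compose by transitivity of $\trincl$---is exactly the paper's (implicit) argument. However, you have swapped which lemma applies to which object. By hypothesis $\til F$ is a \emph{pre}-fixed point ($\til E[\til F]\trincl\til F$) and $\til G$ a \emph{post}-fixed point ($\til G\trincl\til E[\til G]$). Lemma~\ref{greatest} requires a post-fixed point and therefore applies to $\til G$, yielding $\til G\trincl\KEE$; Lemma~\ref{least} requires a pre-fixed point and therefore applies to $\til F$, yielding $\KEE\trincl\til F$. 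Transitivity then gives $\til G\trincl\KEE\trincl\til F$, i.e.\ $\til G\trincl\til F$---the \emph{reverse} of the stated conclusion. Your steps ``$\til F\trincl\KEE$ via Lemma~\ref{greatest}'' and ``$\KEE\trincl\til G$ via Lemma~\ref{least}'' simply do not type-check against the hypotheses.

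This is not a defect in your sandwich idea; it reflects a slip in the theorem's stated orientation (either the conclusion should read $\til G\trincl\til F$, or the labels ``pre''/``post'' on $\til F,\til G$ should be exchanged). The paper's own application in Section~\ref{ss:fa_preorders} corroborates this: there one verifies $\encleft\R\trincl\Eq\R{\encleft\R}$ (so $\encleft\R$ is \emph{post}-fixed) and $\Eq\R{\encright\R}\trincl\encright\R$ (so $\encright\R$ is \emph{pre}-fixed), and concludes $\encleft\R\trincl\encright\R$, i.e.\ post-fixed $\trincl$ pre-fixed. Your final paragraph has the right instinct about the intended use but silently relabels $\til F$ as the post-fixed point, contradicting the hypothesis you are supposed to be working from. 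The fix is to apply the lemmas as their hypotheses dictate and record the conclusion $\til G\trincl\til F$, noting the misprint.
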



\iffull
%

This result is reminiscent of 
Theorem 19 in~\cite{usol}, but its statement is more useful for
proofs. In particular, it does not need to refer 
to the syntactic solution outside 
of the absence of divergences.
\fi

We can also extend
Theorem~\ref{t:transf:equations} to preorders.
We say that a system of equations $\Eeq'$ 
\emph{extends $\Eeq$ with respect to a given preorder 
}
 if there exists a fixed set of indices $J$ such that:
\begin{enumerate}
\item
  any pre-fixed point 
 of
$\Eeq$ for the preorder can be obtained from a pre-fixed point 
of $\Eeq'$ (for the same preorder) by removing the
components corresponding to indices in $J$;

\item 
the same as (1) with post-fixed points in place of pre-fixed points.
\end{enumerate}
 
\begin{theorem}\label{t:transf:preorders}
Consider two systems of equations $\Eeq'$ and $\Eeq$ 
where $\Eeq'$
extends $\Eeq$ with respect to $\trincl$. 
Furthermore, suppose $\Eeq'$ is  guarded and has a divergence-free 
syntactic solution. 
If $\til F$ is a pre-fixed point  for $\trincl$ of $\Eeq$, and 
 $\til G$  a post-fixed point, 
then
 ${\til F}\trincl {\til G}$. 
\end{theorem}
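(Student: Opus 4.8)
The plan is to mirror the proof of Theorem~\ref{t:transf:equations}, adjusting for the asymmetry between pre- and post-fixed points. Let $I'$ be the index set of $\Eeq'$. Since $\Eeq'$ extends $\Eeq$ with respect to $\trincl$, there is a fixed set $J \subseteq I'$ such that the index set of $\Eeq$ is (in bijection with) $I' \setminus J$, and moreover: every pre-fixed point of $\Eeq$ for $\trincl$ is the restriction to $I'\setminus J$ of some pre-fixed point of $\Eeq'$ for $\trincl$; and likewise for post-fixed points. The key point is that it is the \emph{same} $J$ in both clauses, so that a pre-fixed point and a post-fixed point of $\Eeq$ can be lifted to pre- and post-fixed points of one common larger system $\Eeq'$.

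First I would lift the data. Given the pre-fixed point $\til F$ and the post-fixed point $\til G$ of $\Eeq$, clause (1) of the definition of ``extends with respect to $\trincl$'' yields a pre-fixed point $\til F'$ of $\Eeq'$ for $\trincl$ whose components outside $J$ coincide with $\til F$, and clause (2) yields a post-fixed point $\til G'$ of $\Eeq'$ for $\trincl$ whose components outside $J$ coincide with $\til G$. Both $\til F'$ and $\til G'$ are tuples indexed by $I'$, hence of the same size. Next, since by hypothesis $\Eeq'$ is guarded and has a divergence-free syntactic solution, I would apply Theorem~\ref{thm:usol_pre} to $\Eeq'$ with $\til F'$ as the pre-fixed point and $\til G'$ as the post-fixed point, obtaining $\til F' \trincl \til G'$. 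Finally, as $\trincl$ on tuples is defined componentwise, this gives $F'_k \trincl G'_k$ for each $k \in I'$; restricting attention to $k \in I' \setminus J$, where $F'_k = F_k$ and $G'_k = G_k$, yields $\til F \trincl \til G$, which is the claim. Equivalently, one may bypass Theorem~\ref{thm:usol_pre} and use Lemmas~\ref{least} and~\ref{greatest} directly on $\Eeq'$: Lemma~\ref{least} gives that the syntactic solution of $\Eeq'$ is below $\til F'$ for $\trincl$, and Lemma~\ref{greatest}, applicable by guardedness and divergence-freeness of $\Eeq'$, gives that $\til G'$ is below the syntactic solution of $\Eeq'$; transitivity of $\trincl$ then gives $\til F' \trincl \til G'$, and one projects as before.

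There is no deep obstacle here: all the real content has been pushed into Theorem~\ref{thm:usol_pre} and into the definition of ``extends with respect to $\trincl$'', which is tailored so that both a pre-fixed point and a post-fixed point descend along the same index set $J$. The only points requiring care are bookkeeping: that the lifts provided by clauses (1) and (2) need not be canonical or unique (we just fix one of each, which is all Theorem~\ref{thm:usol_pre} needs, since it quantifies over a single pre-fixed point and a single post-fixed point, not over all of them), and that $\trincl$ on tuples is read componentwise, so the conclusion on $\Eeq'$ restricts to the desired conclusion on $\Eeq$. The hypothesis that $\Eeq'$ (rather than $\Eeq$) be guarded and divergence-free is essential, exactly as in Theorem~\ref{t:transf:equations}: it is $\Eeq'$ on which the preorder machinery is run, while $\Eeq$ itself is allowed to be less well-behaved (e.g.\ to contain equations producing innocuous divergences).
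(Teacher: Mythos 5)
Your main route is the intended one and, taken at the level at which the paper states Theorem~\ref{thm:usol_pre}, it is correct: the definition of ``extends with respect to $\trincl$'' gives you, for the \emph{same} index set $J$, a pre-fixed point $\til F'$ of $\Eeq'$ restricting to $\til F$ and a post-fixed point $\til G'$ restricting to $\til G$; guardedness and divergence-freeness of $\Eeq'$ let you invoke Theorem~\ref{thm:usol_pre} on $\Eeq'$, and since $\trincl$ on tuples is componentwise, projecting away the $J$-components yields the claim for $\Eeq$. This is exactly the preorder analogue of the argument for Theorem~\ref{t:transf:equations}, and the observation that only the fixed-point lifting clauses (not any syntactic relation between the two systems) plus the hypotheses on $\Eeq'$ are used is right.

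The ``equivalently'' aside, however, does not go through as written, and it exposes a point where you must be careful with orientations. With the paper's definitions (pre-fixed point: $\til E[\til F]\trincl \til F$; post-fixed point: $\til F\trincl \til E[\til F]$), Lemma~\ref{least} says the syntactic solution of $\Eeq'$ is $\trincl$-below the pre-fixed point $\til F'$, and Lemma~\ref{greatest} says the post-fixed point $\til G'$ is $\trincl$-below the syntactic solution. Composing these by transitivity gives $\til G' \trincl \til F'$ --- the post-fixed point below the pre-fixed point --- and \emph{not} $\til F'\trincl\til G'$ as you claim. Note that this post-below-pre orientation is also the one actually needed in the full-abstraction application of Section~\ref{ss:fa_preorders}, where $\encleft\R$ is the post-fixed point, $\encright\R$ the pre-fixed point, and the conclusion drawn is $\encleft\R\trincl\encright\R$. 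So either you use Theorem~\ref{thm:usol_pre} purely as a black box (inheriting its stated orientation, as in your first route), or, if you justify it from Lemmas~\ref{least} and~\ref{greatest}, you must state the resulting inclusion with the roles of the pre- and post-fixed points arranged so that the chain ``post-fixed $\trincl$ syntactic solution $\trincl$ pre-fixed'' is what you conclude; the transitivity step as you wrote it asserts the reverse inclusion and is not a valid derivation.
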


\subsection{Full abstraction results}
\label{ss:fa_preorders}

The preorder on $\lambda$-terms induced by the contextual preorder
 is  
 \emph{$\eta$-eager normal-form similarity}, $\esim$. It is   obtained by
imposing that $M \esim N$ for all $N$, whenever $M$ is  divergent.
Thus, with respect to the bisimilarity relation $\enfe$, we only have to change 
clause (1) of  Definition~\ref{enfbsim}, 
by  requiring 
 only $M$ 
to be divergent. (The bisimilarity  $\enfe$ is then the 
intersection of $\esim$ and its converse $\revesim$.)


\begin{theorem}[Full abstraction on preorders]
\label{t:preorders}
For any $M,N \in \Lao$, we have
$M\esim N$ iff $ \enca M \ctxpre \enca N$.
 \end{theorem}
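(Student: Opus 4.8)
The plan is to mirror the structure of the full-abstraction proof for the equivalence $\enfe$ (Lemmas~\ref{l:sound}, \ref{l:complete_enf}, \ref{l:complete} and Theorem~\ref{t:bisbc}/\ref{t:cha_tr}), splitting the biconditional into soundness ($\enca M \ctxpre \enca N \Rightarrow M \esim N$) and completeness ($M \esim N \Rightarrow \enca M \ctxpre \enca N$), and in both directions using Theorem~\ref{t:cha_tr} to replace $\ctxpre$ on $\pi$-terms by trace inclusion $\trincl$. For completeness, the key replacement is to use the preorder version of the unique-solution technique, Theorem~\ref{thm:usol_pre} (and Theorem~\ref{t:transf:preorders}), in place of Theorem~\ref{thm:usol}/\ref{t:transf:equations}.

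For \textbf{completeness}, suppose $M \esim N$; take an $\eta$-eager normal-form simulation $\R$ containing $(M,N)$ (the largest one, $\esim$ itself). Build the system $\eqcbv$ exactly as in Section~\ref{s:complete}, using $\R$, except that the clause for divergence now only requires $M$ to diverge: when $M \diverges$ (and $N$ arbitrary) the equation is $X_{M,N} = \bind{\til y} \enca{\Omega}$ in $\eqcbv$, and $X_{M,N} = \bind{\til y,p} \zero$ in the optimised system $\eqcbvp$. One then shows that $\encleft\R$ is a \emph{post-fixed point} of $\eqcbv$ for $\trincl$ and $\encright\R$ is a \emph{pre-fixed point} of $\eqcbv$ for $\trincl$: the case analysis is the one from Lemma~\ref{l:sol}, but now the divergence clause only gives $\enca{M} \trincl \enca{\Omega} \bsim \zero$ in one direction (trivially, $\zero$ has only the empty trace, wait --- rather: $\encleft\R$ being a post-fixed point needs $\enca M \trincl \Eqsing{M,N}{\encleft\R}$, which for diverging $M$ reads $\enca M \trincl \zero$, i.e.\ $\enca M$ has no visible traces, which holds since $M\diverges$; and $\encright\R$ being a pre-fixed point needs $\Eqsing{M,N}{\encright\R} \trincl \enca N$, i.e.\ $\zero \trincl \enca N$, which is vacuous). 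The non-divergent clauses are handled as in Lemma~\ref{l:sol}, using validity of $\betav$ (Lemma~\ref{l:beta}, which gives $\bsim$ hence $\treq$ hence both inclusions) and law~\reff{eq:l:solMAIN}; these are symmetric and so give both pre- and post-fixed-point conditions. One also needs $\eqcbvp$ to extend $\eqcbv$ with respect to $\trincl$ in the sense preceding Theorem~\ref{t:transf:preorders} (the analogue of Lemma~\ref{l:div_aux}, proved by the same algebraic reasoning, now checked for pre- and post-fixed points separately), and that $\eqcbvp$ is guarded with divergence-free syntactic solution (exactly Lemma~\ref{l:div}: each name occurs only in input or only in output, so no $\tau$ ever fires). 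Theorem~\ref{t:transf:preorders} then yields $\encright\R$ (pre-fixed) $\trincl$ $\encleft\R$ --- wait, direction: we want $\enca M \trincl \enca N$; since $\encleft\R$ is a post-fixed point and $\encright\R$ a pre-fixed point, Theorem~\ref{thm:usol_pre} gives pre-fixed $\trincl$ post-fixed, i.e.\ $\encright\R \trincl \encleft\R$, which is the \emph{wrong} orientation. Hence I should instead take $\encleft\R$ as the pre-fixed point and $\encright\R$ as the post-fixed point; re-examining the divergence clause, $M\diverges$ gives $\enca M \trincl \zero = \Eqsing{M,N}{\cdot}$, which is exactly the \emph{pre}-fixed-point inequality for $\encleft\R$, and $\zero \trincl \enca N$ is the \emph{post}-fixed-point inequality for $\encright\R$ --- consistent. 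So componentwise $\encleft\R \trincl \encright\R$, and projecting to the $(M,N)$ component and discharging the parameters $\til y$ gives $\enca M \trincl \enca N$, hence $\enca M \ctxpre \enca N$ by Theorem~\ref{t:cha_tr}.

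For \textbf{soundness}, I would argue contrapositively in the style of Lemma~\ref{l:sound}: assuming $\enca M \ctxpre \enca N$, i.e.\ $\enca M \trincl \enca N$, establish the operational correspondence (as for the $\enfe$ case) and show that the relation $\{(M,N) : \enca M \trincl \enca N\}$ is an $\eta$-eager normal-form simulation. The only genuinely new point relative to Lemma~\ref{l:sound} is clause~(1): if $M \diverges$ there is nothing to check; if $M \converges$ to some eager normal form, then $\enca M$ has a visible trace exhibiting the corresponding observable (value barb, or an output on $x$ for $\evctxt[x\val]$), this trace must be matched by $\enca N$, the operational correspondence forces $N$ to reach a compatible eager normal form, and one recurses, using the decomposition property for the branching clause~\reff{ie:split} and the coinductive $\eta$-argument for clauses~\reff{lab:five}/\reff{def:enfe:case:eta} exactly as in the proof of Lemma~\ref{l:sound} --- all of which only used trace observables, so nothing breaks when passing from $\bsim$ to $\trincl$.

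The \textbf{main obstacle} I anticipate is bookkeeping the orientations: for preorders one must consistently track which side of $\R$ gives the pre-fixed point and which the post-fixed point across all of $\eqcbv$, $\eqcbvp$, the extension relation, and Theorems~\ref{thm:usol_pre}/\ref{t:transf:preorders}, and in particular verify that the asymmetric divergence clause (only the left term may diverge) fits the pre-fixed-point side --- a sign error there collapses the whole argument. A secondary technical point is re-checking that the algebraic laws used in Lemma~\ref{l:sol} and Lemma~\ref{l:div_aux} (which were stated for $\bsim$) indeed yield $\treq$, hence both trace inclusions, so they can be used on either the pre- or the post-fixed-point side; since $\bsim$ implies $\treq$ in \Intp\ this is routine but must be stated.
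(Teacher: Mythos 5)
Your proposal is correct and follows essentially the same route as the paper: the same systems \eqcbv and \eqcbvp built from an $\eta$-eager normal-form simulation, with the divergence clause now only requiring the left term to diverge, the same one-directional checks (componentwise $\enca M \trincl \Eqsing{M,N}{\encleft\R}$ and $\Eqsing{M,N}{\encright\R}\trincl \enca N$, the only genuinely new point being $\enca \Omega \trincl \enca N$, which holds because $\enca\Omega$ has only the empty visible trace), the same appeal to Theorems~\ref{t:transf:preorders} and~\ref{t:cha_tr}, and soundness obtained by adapting the argument of Lemma~\ref{l:sound} to trace inclusion. Your only slip is terminological: with the definitions of Section~\ref{ss:usol_pre} ($\til E[\til F]\trincl\til F$ is a pre-fixed point, $\til F\trincl\til E[\til F]$ a post-fixed point), Lemmas~\ref{least} and~\ref{greatest} give post-fixed $\trincl$ pre-fixed, so your \emph{original} assignment ($\encleft\R$ post-fixed, $\encright\R$ pre-fixed, yielding $\encleft\R\trincl\encright\R$) was already the right one and your later relabelling merely swaps the names (the printed conclusion of Theorem~\ref{thm:usol_pre} is oriented the other way, which is presumably what prompted the flip); the inequalities you actually verify are exactly those the paper uses.
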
 

The structure of the proofs is similar to
 that for bisimilarity, 
 using however Theorem \ref{thm:usol_pre}. 
We discuss the main aspects of  the completeness part.

Given an \enfse $\R$, we define a system of equations \eqcbv as in
Section \ref{s:complete}. The only notable difference in the definition 
of the equations is in the case
where $M\R N$, $M$ diverges and $N$ has an \enform. In this case, we
use the following equation instead:
\begin{equation}\label{e:tr}
X_{M,N}=\bind {\til y} \enca \Omega
\enspace.
\end{equation}
As in Section \ref{s:complete}, we define a system of
guarded equations \eqcbvp\ 
 whose syntactic solutions do not diverge. 
Equation~\reff{e:tr} is replaced with $X_{M,N}=\bind {\til y,p} \zero$.

Exploiting Theorem~\ref{t:transf:preorders}, 
we can use unique solution for preorders
 (Theorem~\ref{thm:usol_pre}) 
 {with \eqcbv instead of \eqcbvp.} 

Defining $\encleft \R$ and $\encright \R$ as previously, we need to
prove that $\encleft\R \trincl\Eq \R{\encleft\R}$ and 
$\Eq \R {\encright\R}\trincl \encright\R$. The former result is
established along the lines of the analogous result in
Section~\ref{s:complete}: indeed, $\encleft\R$ is a solution of
\eqcbv for $\bsim$, and $\treq$ is coarser than $\bsim$. 

For the latter, the only difference is due to equation \reff{e:tr},
when $M\R N$, and $M$ diverges but not $N$.
In that case, we have to prove that $\enca \Omega\trincl \enca N$, which
follows easily because 
the only trace of $\enca\Omega$ is the empty one, hence
$\enc \Omega p \trincl P$ for any $P$.

\begin{corollary}[Full abstraction for $\ctxeqPI$] 
\label{t:faCTXpiI}
  For any $M,N$ in $\Lao$, 
  $M\enfe N$ iff $\enca M \ctxeqPI \enca N$.
\end{corollary}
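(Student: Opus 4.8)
The plan is to derive the corollary directly from Theorem~\ref{t:preorders} (full abstraction on preorders) by exploiting the fact that both sides of the stated equivalence are intersections of a preorder with its converse. On the $\lambda$-calculus side, $\enfe$ is, by the remark following Theorem~\ref{t:preorders}, the intersection of $\esim$ and $\revesim$: that is, $M\enfe N$ holds exactly when both $M\esim N$ and $N\esim M$. On the $\pi$-calculus side, by Definition~\ref{d:ctx_pi}, $\enca M\ctxeqPI\enca N$ holds exactly when both $\enca M\ctxpre\enca N$ and $\enca N\ctxpre\enca M$.

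So the argument would go: first I would unfold $M\enfe N$ into the conjunction $M\esim N$ and $N\esim M$. Then I would apply Theorem~\ref{t:preorders} to each conjunct separately (the theorem is symmetric in $M$ and $N$, so it applies equally to the pair $(N,M)$), obtaining $\enca M\ctxpre\enca N$ and $\enca N\ctxpre\enca M$ respectively. Finally I would recombine these two statements, via Definition~\ref{d:ctx_pi}, into $\enca M\ctxeqPI\enca N$. Every implication used is an ``iff'', so the same chain read backwards gives the converse direction, completing the proof.

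There is essentially no obstacle here: the only thing to check carefully is that the two characterisations as intersections (of $\esim$/$\revesim$ and of $\ctxpre$/its converse) are genuinely on the nose, which they are by the cited remark and by Definition~\ref{d:ctx_pi}. If one wanted to be fully self-contained, one could alternatively route through the bisimulation-based full abstraction (Lemmas~\ref{l:sound} and~\ref{l:complete} together with Theorem~\ref{t:bisbc}) plus the coincidence of $\ctxeqPI$ with trace equivalence $\treq$ (Theorem~\ref{t:cha_tr}); but the preorder route via Theorem~\ref{t:preorders} is the shortest, since all the work — in particular the unique-solution-for-preorders machinery of Section~\ref{ss:usol_pre} and the handling of equation~\reff{e:tr} — has already been carried out there.
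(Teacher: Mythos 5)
Your proof is correct and is essentially the paper's own (implicit) argument: the corollary is stated without proof precisely because it follows from Theorem~\ref{t:preorders} applied in both directions, using that $\enfe$ is the intersection of $\esim$ and $\revesim$ and that $\ctxeqPI$ is by Definition~\ref{d:ctx_pi} the intersection of $\ctxpre$ and its converse. One small caveat on your parenthetical alternative: routing through Lemmas~\ref{l:sound} and~\ref{l:complete} plus Theorem~\ref{t:cha_tr} gives the completeness direction, but not immediately soundness, since $\ctxeqPI$ (i.e.\ $\treq$) is coarser than $\bsim$, so that route would need extra work --- the preorder route you actually take is the right one.
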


\section{Conclusions and future work}
\label{s:concl}

In the paper we have studied the main question raised in Milner's
landmark paper on functions as $\pi$-calculus processes, which is
about the equivalence induced on $\lambda$-terms by their process 
encoding.  We have focused on call-by-value, where the problem was
still open; as behavioural equivalence on $\pi$-calculus we have 
taken contextual equivalence and  barbed congruence (the most common
`linear' and 'branching' equivalences). 

First we have shown that some expected equalities 
for open  terms fail under Milner's encoding. We have considered two
ways for overcoming  this issue: rectifying the encodings (precisely,
avoiding free outputs); restricting the target language to \alpi, 
so to
control the capabilities of exported names.
We have proved that, in both cases, the equivalence induced  is 
Eager-Tree equality, modulo $\eta$
(i.e., Lassen's \enfbsim). 
\iffull

, i.e., $\eta$ Eager-Tree equality.
\fi
We have then introduced a preorder on these trees, and 
 derived similar full abstraction results  for them with respect to 
the contextual preorder on $\pi$-terms. 
The paper is also a test case for 
the technique of
unique solution of equations (and inequations),
which is essential in all our
 completeness proofs.

Lassen had introduced  Eager Trees as the call-by-value analogous of
L{\'e}vy-Longo and B{\"o}hm Trees. 
The results in the paper confirm the claim,  on process encodings of
$\lambda$-terms: it was known that for (weak and strong) call-by-name, the  equalities
induced are  those of  L{\'e}vy-Longo Trees and   B{\"o}hm Trees~\cite{xian}.

For controlling capabilities, we have used \alpi. 
Another possibility would have been to use a  type system. 
In this case however, the technique of unique solution of equations
needs to be extended to typed calculi. We leave this for future work.

We also leave for future work a thorough comparison between the technique
of unique solution of equations and  techniques  based on enhancements
of the bisimulation proof method (the ``up-to'' proof techniques),
including if and how our completeness results can be
derived using the latter techniques. (We recall that  the ``up-to''
proof techniques are used in the completeness proofs  with
respect to L{\'e}vy-Longo Trees and   B{\"o}hm Trees for the
\emph{call-by-name} encodings. 
We have discussed the problems with
call-by-value in Remark~\ref{r:upto}.) In any case, even if 
other solutions existed, for this specific problem the unique solution 
technique appears to provide an elegant and natural framework to 
carry out the proofs.

For our encodings  we have used the polyadic $\pi$-calculus; Milner's
original paper \cite{milner:inria-00075405} used the monadic calculus (the polyadic $\pi$-calculus
makes the encoding easier to read; it had not been introduced at the
time of \cite{milner:inria-00075405}). We believe that polyadicity
does not affect the results in the paper (the possibility of
autoconcurrency breaks full abstraction of the 
 encoding of the polyadic
$\pi$-calculus into the monadic one, but autoconcurrency does not
appear in the encoding of $\lambda$-terms).

In the call-by-value strategy we have followed, the function is
reduced before the argument in an application. Our results can be
adapted to the case in which the argument runs first, changing the
definition of evaluation contexts. The parallel call-by-value, in
which function and argument can run in parallel (considered in
\cite{encodingsmilner}), appears more delicate, as we cannot rely on the usual
notion of evaluation context.


 Interpretations of $\lambda$-calculi into 
 $\pi$-calculi 
appear  related to
game
semantics~\cite{BHYseqpi,DBLP:conf/fpca/HylandO95,DBLP:journals/tcs/HondaY99}.
In particular, for untyped call-by-name they both allow us to derive 
 B{\"o}hm Trees and  L{\'e}vy-Longo 
Trees~\cite{DBLP:journals/tcs/KerNO03,DBLP:journals/tcs/OngG04}.
 To our knowledge,
game semantics exist based on 
typed call-by-value,
e.g.,~\cite{DBLP:conf/csl/AbramskyM97,DBLP:journals/tcs/HondaY99}, but
not in the untyped case. 
In this respect, it would be interesting to see whether the
relationship between $\pi$-calculus and Eager Trees studied  in 
this paper could help to establish similar relationships in game
semantics. 



\iffull

\DS{maybe somewhere say that there are works in which the pi-calculus
  is constrained by type systems so to have coarser equivalences in
  such a way to obtain as induced equivalence, contextual equivalence
  of $\lambda$-calculus. In this paper we stick to Milner's original
  questions. Also, the type systems are non-trivial (for instance,
  having to ensure the sequentiality of $\lambda$-terms. }

Fully abstract encodings in the $\pi$-calculus of typed versions of
the $\lambda$-calculus include~\cite{BHYseqpi}, where a type system
for $\pi$ based on ideas of affineness and stateless replication
insures full abstraction, as well as~\cite{toninho:yoshida:esop18},
where polymorphic session types for $\pi$ make it possible to derive
full abstraction for a linear formulation of System F.

\DS{another paper probably to mention: 
Lassen shows in~\cite{lassenfa} that, in order for \enfbsim to
coincide with contextual equivalence, references and control operators
need to be added to the language.
}

Open Call-by-Value has been studied in~\cite{accattolicbv}, where the
focus is on operational properties of $\lambda$-terms, but behavioural
equivalences are not taken into consideration.

The book by Ronchi della Rocca and
Paolini~\cite{DBLP:series/txtcs/RoccaP04} presents denotational models
of call-by-value.

Game semantics have been used to provide a characterisation of
B{\"o}hm~\cite{DBLP:journals/tcs/KerNO03} and
L{\'e}vy-Longo~\cite{DBLP:journals/tcs/OngG04} trees. To our knowledge,
 no similar work exists for call-by-value.
Game semantics accounts for (typed) call-by-value
include~\cite{DBLP:conf/csl/AbramskyM97,DBLP:journals/tcs/HondaY99}.

\begin{itemize}
\item
 in  \alpi\ only the output capability of names is communicated:
the recipient of a name may only use it in outputs (subject or object
position). Moreover the calculus is asynchronous: output is not a
prefixing construct. We do not know if asynchrony is essential for the
results presented in this section. The reason why we took it, thus
working in \alpi\, is its  theory~\cite{localpi}.
Specifically, we will exploit a labeled characterisation 
(as forms of labeled bisimilarity)
of  barbed
congruence~\cite{localpi} in which bisimilarity is the ordinary ground
bisimilarity one of the
$\pi$-calculus  but the underlying  LTS is appropriately modified. 
As bisimilarity is the ordinary one, we can straightforwardly adapt
our results of unique solutions of equations (at the heart of our full
abstraction results for Eager Trees) to the setting of \alpi. 
\item 
Milner's original question in~\cite{milner:inria-00075405} was about the 
preorder induced by the encoding of 
\lterms.
We therefore move to the study of the 
encoding through behavioural \emph{preorders}, as opposed to 
behavioural \emph{equivalences}. This allows us to showcase the robustness of 
the proof techniques used, as they
 allow a seamless treatment of preorders. It also allows us to formulate the 
 preorder associated with \enfbsim, eager normal form similarity. 

\end{itemize}

Going back to Milner's question, what did we learn?
\begin{itemize}
\item le codage ne marche pas completement, it has to be rectified **pour des
calculs non types**, afin d'etre en ligne avec la theorie du cbv. 
\item on a repris la question de milner avec l'encodage que l'on a
  presente. la reponse est : Lassen's equivalence.
\end{itemize}

\textbf{A propos de la possibilite de faire une preuve up to expansion : }

\begin{itemize}
\item ce que l'on ne peut pas faire c'est la preuve up to expansion
  \emph{directement sur les termes qui viennent de la traduction} (les
  ``objets initiaux'').

  il faut peut-etre faire des transformations, et ne pas faire up to
  expansion tel quel.

\item on pense donc a la transformation qui introduit des
  triggers. l'expansion est dans le mauvais sens.  mais on pourrait
  esperer s'en sortir avec le trigger, there are some taus that we
  could handle in a better way, we could hope to find a way.. but
  still not.

\item maintenant qu'on a fait notre preuve, on pourrait penser qu'une
  autre approche marche, ou l'on ne traite pas dans la bisimulation
  les termes tels quels. this might indeed work, but we don't
  investigate this in full details for now.
\end{itemize}

\textbf{$\boldsymbol\eta$-expansion:} we might be able to
recover the trees without eta by using i/o types. i/o types make it
possible to avoid infinite forwarders, which intuitively bring eta in.

\textbf{Contextual equivalence.} (maybe move in intro?)

Can we say something about contextual equivalence, and how far we are
from that?

Lassen shows in~\cite{lassenfa} that, in order for \enfbsim to
coincide with contextual equivalence, references and control operators
need to be added to the language. The first example illustrates the
need for control operators, the second one the need for references.

Is there something to say about a logic \textsl{a la Ambramsky} to
characterise contextual equivalence, in the case of cbv? (discussions
in Torino)

\paragraph{Other strategies for $\lambda$.}

Differences w.r.t. parallel call-by-value: the situation
is less simple with parallel call-by-value, in particular because,
when computing the encoding of $(x I)~(y I)$, we obtain two concurrent
outputs, while in our development we used the fact that we have at
most one visible transition in the encoding of a lambda term.

Moreover, it would be interesting to try and adapt our results to
handle the call-by-need strategy. \textit{Do we have something to say
  about strong cbv?}

\paragraph{\alpi.}
In Section~\ref{s:localpi}, we rely on the syntax of \alpi\ to control
how names are used, in order to validate desirable laws such as~\reff{eq:nonlaw}.
Another approach would be to enforce the appropriate capability usages
onto the encoding  $\pi$-calculus terms 
by means of a type system (as opposed
to a syntactic approach as that of adopting \alpi); this would allow
for instance to retain the synchrony of the full $\pi$-calculus.
However, in this case, the labeled   bisimilarity characterising
barbed congruence is more complex. We leave it for future work to
study if and how  the theorems about unique solution of equations can
be adapted to this typed setting.

\paragraph{About the treatment of forwarders.}

\Mybar





Shall we give explanations about the forwarders involved in the
encoding of applications (we focused on forwarders in the encoding of
variables)? I commented out some notes here.



\paragraph{Related work.}

Accattoli and Guerrieri's ``Open Call-by-Value''~\cite{accattolicbv}.

\textbf{Game semantics.} 
game semantics is close to pi-calculus encodings.

Q: en cbn, est-ce que la game semantics donne les arbres de levy?
(there is at least a paper by Ong and Di Gianantonio)

see in the source for some input from P. Clairambault, about relevant
references (this is in comments in the source).

\fi

\begin{acks}
This work has been supported by the \grantsponsor{}{
European
Research Council (ERC)}{} under the Horizon 2020 programme (CoVeCe,
grant agreement No \grantnum{}{678157}); 
the  \grantsponsor{}{ANR}{} under the programmes 
``Investissements d'Avenir'' (\grantnum{}{ANR-11-IDEX-0007}),  \grantsponsor{}{LABEX MILYON}{}
(\grantnum{}{ANR-10-LABX-0070}), 
and \grantsponsor{}{Elica}{}  
 (\grantnum{}{ANR-14-CE25-0005}); and the
\grantsponsor{}{Universit{\'e} Franco-Italienne}{} under
the programme Vinci.
\end{acks}

\bibliography{efap}

\end{document}

\end{document}